\newcommand*{\mybox}[1]{\framebox{#1}}
\newtheorem{conjecture}{ Conjecture}[section]
\newtheorem{theorem}[conjecture]{ Theorem}
\newtheorem{lemma}[conjecture]{ Lemma}
\newtheorem{corollary}[conjecture]{ Corollary}
\newtheorem{proposition}[conjecture]{ Proposition}
\newtheorem{remark}[conjecture]{ Remark}
\newtheorem{definition}[conjecture]{ Definition}
\newtheorem{example}[conjecture]{ Example}
\author[Anuran Maity et. al]{Kalpana Mahalingam \affiliationmark{1}
  \and Anuran Maity\affiliationmark{1,2,3}\thanks{Corresponding author}}
\title[Watson-Crick conjugates of words and languages]{Watson-Crick Conjugates of Words and Languages}
\affiliation{
  Department of Mathematics, Indian Institute of Technology Madras, Chennai, India\\
  Saint Petersburg State University, 7/9 Universitetskaya nab., St. Petersburg, Russia\\
  Department of Mathematics, Indian Institute of Technology Guwahati, Guwahati, India}
\keywords{Combinatorics on words,  conjugate,  Watson-Crick conjugate, antimorphic involution, Theoretical DNA computing}
\begin{document}

\publicationdata{vol. 27:3}{2025}{10}{10.46298/dmtcs.13593}{2024-05-15; 2024-05-15; 2025-03-24; 2025-08-08}{2025-08-19}

\maketitle

\begin{abstract}
  In this work, we explore the concept of Watson-Crick conjugates, also known as $\theta$-conjugates (where $\theta$ is an antimorphic involution), of words and languages. 
 This concept extends the classical idea of conjugates by incorporating the Watson-Crick complementarity of DNA sequences.
Our investigation initially focuses on the properties of $\theta$-conjugates of words. We then define $\theta$-conjugates of a language and study closure properties of certain families of languages under the $\theta$-conjugate operation. Furthermore, we analyze the iterated $\theta$-conjugate of both words and languages. 
Finally, we discuss the idea of $\theta$-conjugate-free languages and examine some decidability problems related to it.
\end{abstract}



\section{Introduction}

Conjugates of a word are significant in the field of combinatorics on words as they have numerous applications in cryptography (\cite{belkheyar2023introducing}), biology (\cite{walter2015consistent}), and other areas. 
Our analysis in this work focuses on various combinatorial, structural, and language-theoretic properties of Watson-Crick conjugates, or $\theta$-conjugates (where $\theta$ is an antimorphic involution) of words  - a generalization of the conjugates of words.

Over the years, DNA sequences have been used to encode/design  a range of complex problems/experim-ents (for example, see \cite{adleman1994molecular}, \cite{aoi1998solution}, \cite{bi2016initiator}, \cite{braich2002solution}, \cite{head2000computing}, \cite{henkel2007dna}, \cite{henkel2005protein}, \cite{jeddi2017three}, \cite{johnson2008automating}, \cite{ouyang1997dna}, \cite{sakamoto2000molecular}, \cite{shi2017recent}, \cite{wang2008solving}, \cite{zhang2011dynamic}, etc). 
When encoding or designing a problem using DNA sequences, the occurrence of inter/intra-molecular hybridizations (see Fig. \ref{intera}) between DNA strands can be perceived both as an advantage and a disadvantage. For example, formation of hairpin structures have found utility in various applications, including solving the satisfiability problem (\cite{sakamoto2000molecular}), designing DNA aptamers (\cite{jeddi2017three}), and programming catalytic DNA self-assembly strategies (\cite{bi2016initiator}).
Conversely, these hybridizations are sometimes viewed as a disadvantage (see \cite{shi2017recent,zhang2011dynamic}).
This is primarily because these hybridizations can significantly disrupt the intended interactions between the involved sequences  in inter/intra-molecular hybridizations  and other DNA strands in the  preprogrammed and expected ways.   Therefore, to  avoid such undesirable hybridizations, careful DNA sequence design is necessary at the time of encoding  the problem.  Drawing inspiration from this approach of DNA sequence design, several authors have proposed various conditions for encoding
 by extending various concepts of classical combinatorics on words, such as {primitive words} (\cite{Ehsan}, \cite{CZ12},  \cite{gawrychowski2014testing},
\cite{kari2014generating}, \cite{kari2019state}),   {bordered words}
(\cite{kari2014pseudo},  \cite{kari2017disjunctivity}, \cite{kari2007involutively}, \cite{kari2008watson}, \cite{watson}, 
\cite{kari2009pseudoknot}),
{palindromes} (\cite{watson}, \cite{LKari2010}), and {rich words}
(\cite{starosta2011theta}).  
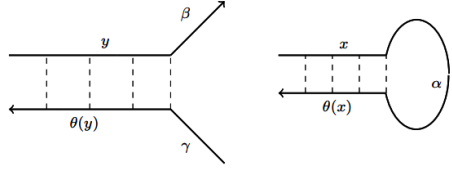
\begin{figure}[h]
\centering
\begin{tikzpicture}
\draw[-] (-3,0) -- (0,0) node[midway, above]{$y$};
\draw[<-] (-3,-1) -- (0,-1) node[midway, below]{$\theta_{WC}(y)$};
\draw[dashed] (-2.5,0) -- (-2.5,-1);
\draw[dashed] (-2,0) -- (-2,-1);
\draw[dashed] (-1.5,0) -- (-1.5,-1);
\draw[dashed] (-1,0) -- (-1,-1);
\draw[dashed] (-0.5,0) -- (-0.5,-1);
\draw[dashed] (0,0) -- (0,-1);
\draw[->] (0,0) -- (1,1) node[midway, above]{$\beta$};
\draw[thick] (0,-1) -- (1,-2) node[midway, below]{$\gamma$};
\draw[-] (3,0) -- (6,0) node[midway, above]{$x$};
\draw[<-] (3,-1) -- (6,-1) node[midway, below]{$\theta_{WC}(x)$} ;
\draw[dashed] (3.5,0) -- (3.5,-1);
\draw[dashed] (4,0) -- (4,-1);
\draw[dashed] (4.5,0) -- (4.5,-1);
\draw[dashed] (5,0) -- (5,-1);
\draw[dashed] (5.5,0) -- (5.5,-1);
\draw[dashed] (6,0) -- (6,-1);
\draw (6,0) .. controls (7.5,2.5) and (7.5,-3.5) .. (6,-1) node[midway, above left] {$\alpha$};
\end{tikzpicture}
\caption{ Example of inter- and intra- molecular structures }
\label{intera}
\end{figure}

In \cite{kari2007involutively}, the authors extended the concept of bordered word to $\theta$-bordered word from the perspective of DNA computing, where $\theta$ is an antimorphic involution (i.e., $\theta(xy)=\theta(y) \theta(x)$ and $\theta^2(t)=t$ for any $x, y, t \in \Sigma^*$) that incorporates the notion of the Watson-Crick complementarity of DNA molecules. 
They also defined $\theta$-bordered word for a morphic involution $\theta$ (i.e., $\theta(xy)=\theta(x) \theta(y)$ and $\theta^2(t)=t$ for any $x, y, t \in \Sigma^*$) as a generalization of bordered word.
 For a morphic/antimorphic involution $\theta$, a word $w$ is said to be $\theta$-bordered if $w=uz=\theta(z)v$ for some non-empty words $u, z, v$. A word that is not $\theta$-bordered is called $\theta$-unbordered.
 When focusing on the DNA alphabet $\{A, T, G, C \}$ and utilizing the Watson-Crick complementarity function $\theta_{WC}$ ($\theta_{WC}$ is an antimorphic involution on $\{A, T, G, C\}^*$ such that $\theta_{WC}(A)=T$, $\theta_{WC}(T)=A$, $\theta_{WC}(C)=G$, $\theta_{WC}(G)=C$),  $\theta_{WC}$-unbordered words represents the collection of DNA single strands that are free from both inter- and intra-molecular structures such as the ones shown in Fig. \ref{intera}(left), and hairpins shown in Fig. \ref{intera}(right). 
 From the definition of $\theta$-bordered word,
 \cite{watson} defined the notion of $\theta$-conjugate of a word for morphic/antimorphic involution $\theta$.
Formally, a word $u$ is called a $\theta$-conjugate of a word $v$ if there exists a word $z$ such that $u z = \theta(z)v$. The set of all $\theta$-conjugates of $v$ is denoted by $C_\theta(v)$. 
When $\theta$ is an antimorphic involution over $\Sigma^*$, then analyzing the solution of the equation $u z = \theta(z)v$ (as presented in Proposition $2$ in \cite{watson}), it becomes evident that for $v = xy$, the $\theta$-conjugate $u$ can be expressed as $\theta(y)x$ for some $x, y \in \Sigma^*$ (see Fig. \ref{fig1}).
\begin{figure}
    \centering
    \includegraphics[width=.7\textwidth]{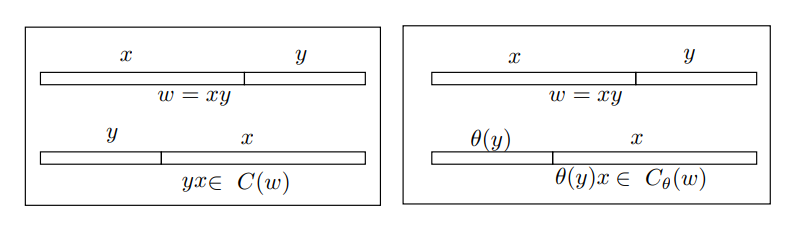}
     \caption{Structure of conjugates $\&$ $\theta$-conjugates of word for antimorphic involution $\theta$}
    \label{fig1}
\end{figure}
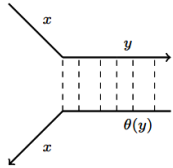
\begin{figure}[h]
\centering
\begin{tikzpicture}
\draw[-] (-4,1) -- (-3,0) node[midway, above]{$x$};
\draw[->] (-3,0) -- (0,0) node[midway, above]{$y$};
\draw[-] (-3,-1) -- (0,-1) node[midway, below]{$\theta_{WC}(y)$};
\draw[<-] (-4,-2) -- (-3,-1) node[midway, below]{$x$};
\draw[dashed] (-3,0) -- (-3,-1);
\draw[dashed] (-2.5,0) -- (-2.5,-1);
\draw[dashed] (-2,0) -- (-2,-1);
\draw[dashed] (-1.5,0) -- (-1.5,-1);
\draw[dashed] (-1,0) -- (-1,-1);
\draw[dashed] (-0.5,0) -- (-0.5,-1);
\end{tikzpicture}
\caption{ Inter-molecular structure between a word and its $\theta_{WC}$-conjugate }
    \label{interawcbind}
\end{figure}
In this scenario,  if we consider the DNA alphabet $\{A, T, G, C\}$, the Watson-Crick complementarity function $\theta_{WC}$, and employ $u, v \in \{A, T, G, C\}^*$  for encoding a problem, then a possible intermolecular hybridization  (see Fig. \ref{interawcbind}) may occur between $v$ and $u$.
To avoid such a structure at the time of encoding, if  we use $v$, then we should refrain from utilizing any of its $\theta_{WC}$-conjugates that are different from the word $v$.
In addition to being relevant for DNA computing, the notion of $\theta$-conjugates of a word
is a generalization of conjugates of a word, a classical and well studied concept in combinatorics on words.
These motivate us to explore various
 properties of $\theta$-conjugates of a word and $\theta$-conjugate-free languages when $\theta$ is an antimorphic involution.

 \cite{thetapalin2021} discussed several properties of the set of all $\theta$-conjugates of a word when $\theta$ is an antimorphic involution. They mainly studied the cardinality of the set of all $\theta$-conjugates of a word and occurrences of
 palindromes, $\theta$-palindromes in the set of all $\theta$-conjugates of a word.
In this work, we further explore properties of the set of all $\theta$-conjugates of a word when $\theta$ is an antimorphic involution. We  also extend the idea of $\theta$-conjugates of a word to $\theta$-conjugates of a language $L$ as $C_\theta(L) = \bigcup\limits_{w \in L} C_\theta(w)$  and study various properties of $C_\theta(L)$ when $\theta$ is an antimorphic involution. The text is organized as follows. In Section \ref{sec3}, we study some properties of the set of all $\theta$-conjugates of a word. We first investigate the structure of $\theta$-conjugates of a word. Then, we discuss the solution of the equation $C_\theta(u)=C_\theta(v)$.
Section \ref{sec4} investigates the Chomsky hierarchy for $C_\theta(L)$ when 
$L$ is regular, context-free, context-sensitive and recursively enumerable.
In Section \ref{sec5}, we define and discuss the iterated $\theta$-conjugate of words and languages.
Section \ref{sec6} investigates some decidability problems for  $\theta$-conjugate-free languages.  
We end our work with some concluding remarks.

\section{Basic definitions and notations}\label{sec2}
An \textit{alphabet} $\Sigma$ is a finite non-empty set of letters.
A \textit{word} over $\Sigma$ is a sequence of letters from $\Sigma$.  
The \textit{length of a word} $w$, denoted by $|w|$, is the number of letters in $w$.
A word with length zero is called the \textit{empty word} and is denoted by $\lambda$. 
The set of all finite words over $\Sigma$ is denoted by $\Sigma^*$, and $\Sigma^+=\Sigma^*\setminus \{\lambda\}$. 
A \textit{language} $L$ over $\Sigma$ is a subset of $\Sigma^*$.
The set of all words over $\Sigma$ of length $n$ is denoted by $\Sigma^n$ and the set of all words over $\Sigma$ of length greater than or equal to $n$ is denoted by $\Sigma^{\geq n}$.
Let $\textit{Alph}(w)$ denote the set of all letters that occur in a word $w$ and $\textit{Alph}(L)=\bigcup\limits_{w \in L} \textit{Alph}(w)$ where $L$ is a language.
A word $u$ is a \textit{factor} of a word $w$ if $w = xuy$ for some $x, y\in \Sigma^*$. When $x = \lambda$ (respectively $y=\lambda$), $u$ is called a \textit{prefix} (respectively \textit{suffix}) of $w$.
If $u$ is a prefix (respectively suffix) of $w$ and $u \neq w$, then $u$ is called a \textit{proper prefix} (respectively \textit{proper suffix}) of $w$. 
Set of all factors (respectively prefixes and suffixes) of a word $w$ is denoted by $Fac(w)$ (respectively $Pref(w)$ and $Suf(w)$).
For $x \in \Sigma^*$, $|w|_x$ denotes the number of occurrences of $x$ in $w$.
For a word $w= w_1w_2\cdots w_n$ where each $w_i\in \Sigma$, the \textit{reversal} of $w$ is $w^R = w_n\cdots w_2w_1$.
A word $w$ is a \textit{palindrome} if $w=w^R$. 
A word $u$ is a \textit{conjugate} of a word $w$ if $u=yx$ and $w=xy$ for some $x,y\in \Sigma^*$ (see Figure \ref{fig1}).
The set of all conjugates of $w$ is denoted by $C(w)$. 
A word $w \in \Sigma^+$ is called \textit{primitive} if  $w = x^n$ implies 
$w=x$ and  $n=1$. 
For every $w \in \Sigma^+$, there exists a unique $\rho(w)\in \Sigma^+$, called  the \textit{root} of $w$,  such that  $\rho(w)$ is a primitive word and $w = \rho(w)^n $ for some integer $n \geq 1$.

A function  $\theta:\Sigma^{*} \rightarrow \Sigma^{*}$ is said to be an \textit{antimorphism} {(respectively \textit{morphism})} if $\theta(uv)=\theta(v) \theta(u)$ (respectively $\theta(uv)=\theta(u) \theta(v)$) for any $u, v \in \Sigma^*$. 
The function $\theta$ is called \textit{involution} if $\theta^{2}$ is {the} identity mapping on $\Sigma^*$.
For a morphic/antimorphic involution $\theta$, a word $z$ is called a \textit{$\theta$-palindrome} if $z = \theta(z)$.   
For a morphic/antimorphic involution $\theta$, a word $u$ is a \textit{$\theta$-conjugate} of a word $w$ if
 $uv = \theta(v)w$ for some $v \in \Sigma^*$.
 The set of all $\theta$-conjugates of $w$ is denoted by $C_{\theta}(w)$. 
Note that when $\theta$ is an antimorphic involution on $\{A, T, G, C\}^*$ such that $\theta(A)=T$, $\theta(T)=A$, $\theta(C)=G$, $\theta(G)=C$, then $\theta = \theta_{WC}$, i.e., $\theta$ is the \textit{Watson-Crick complementarity function}. In this case $C_\theta(w)$ is called the set of all \textit{Watson-Crick conjugates} of $w$. 

\textbf{In the rest of the text, we consider $\theta$ as an antimorphic involution over an alphabet $\Sigma^*$.}

A \textit{grammar} $G$ is formally defined as a quadruple $(V, T, P, S)$ where $V$ is a finite set of non-terminal symbols, $T$ is a finite set of terminal symbols,  $P$ is a finite set of production rules, and $S \in V$ is the start symbol.
Using production rules of $P$, if a word $w \in T^*$ is derived from $S$, then we say $w$ is generated by the grammar $G$, and it is denoted by  $S\rightarrow^* w$. The set of all words generated by the grammar $G$ is denoted by $L(G)$. 
A grammar $G$ is said to be \textit{context-free} if all the production rules of $P$ are of the form $A \rightarrow u$, where $A \in V$ and $u \in (V \cup T)^*$.
A grammar $G$ is said to be \textit{context-sensitive} if all the production rules of $P$ are of the form $\alpha A \beta \rightarrow \alpha \gamma \beta$, where $A\in V$, $\alpha, \beta \in (V \cup T)^*$ and $\gamma \in (V \cup T)^+$.
A grammar $G$ is said to be \textit{unrestricted} if all the production rules of $P$ are of the form $u \rightarrow v$, where $ u, v \in (V \cup T)^*$ with $u$ containing at least one non-terminal symbol.

A \textit{deterministic finite automaton }(DFA) $M$ is defined by $M = ( S', \Sigma, \delta, q_0, F )$, where $S'$ is a finite set of states, $\Sigma$ is an alphabet, $\delta : S' \times \Sigma \rightarrow S'$ is a transition function, $q_0 \in S'$ is the initial state and $F \subseteq S'$ is the set of all final states.
A word is accepted by the DFA  $M$ if, starting from the initial state $q_0$, processing the word using the transition function $\delta$ leads to a state in $F$. The set of all words accepted by $M$ is denoted by $L(M)$. A language $L_1$ is said to be \textit{regular} if there exists a DFA $M_1$ such that $L_1=L(M_1)$.
{A \textit{linear bounded automaton} is a non-deterministic Turing machine $M' = (Q', \Sigma, \Gamma, \delta, q_0, <, >, B, F)$ where
$Q'$ is the set of all states,  $\Gamma$ is the tape alphabet, $\Sigma \subseteq \Gamma \setminus \{B\}$  is the input alphabet, $\delta: Q' \times \Gamma \rightarrow 2^{Q' \times \Gamma \times \{L, R, N\}}$ is the transition function,
$B \in \Gamma$ is the blank symbol,
$q_0 \in Q'$ is the initial state, $F \subseteq Q'$ is the set of final states,
$< \in \Sigma$ is the left end marker and $> \in \Sigma$ is the right end marker. 
Here, $L, R$ and $N$ denote the possible movements of the tape head: left, right, and no movement, respectively, as dictated by the transition function $\delta$.
A word $w$ is accepted by the linear bounded automaton $M'$ if the automaton, starting in its initial state $q_0$ with the input word $<w>$ on the tape, can transition through a sequence of states according to its transition function $\delta$ and eventually reach one of its final states in $F$.
Here note that if $w$ is a word accepted by a linear bounded automaton $M'$, then $w$ is in $(\Sigma \setminus \{ <, >\})^*$. 
For a state $q$ and two strings $u$, $v$ over the tape alphabet $\Gamma$, we write $u q v$ for the \textit{configuration} of the Turing machine $M'$ where the current state is $q$, the current tape content is $uv$, and the current head location is the first symbol of $v$.
A move of the Turing machine $M'$ from one configuration to another is denoted by $\vdash_{M'}$. 
The symbol ${\vdash}^*_{M'}$ represents an arbitrary number of moves of $M'$.

For all other concepts in formal language theory and combinatorics on words, the reader is referred to \cite{linz2012introduction,Lothaire1997,rozen1997,shallit2008}.

 We recall the following results from the literature, which will be used throughout our work. 
\begin{lemma} \label{4lk}\cite{Schutz62}
Let $u,v,w \in \Sigma^{+}$.
\begin{itemize}
     \item If $uv=vu$, then $u$ and $v$ are powers of a common primitive word. 
     \item If $uv=vw$, then  $u=xy$, $v=(xy)^{k}x$, $w=yx$ for some $k \geq 0$, $x \in \Sigma^{+}$ and $y \in \Sigma^{*}$.
 \end{itemize}
\end{lemma}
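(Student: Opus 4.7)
The plan is to prove both assertions by induction, using the elementary observation that whenever $xy = zt$ with $|x| \leq |z|$, the word $x$ is a prefix of $z$, and symmetrically for suffixes. I would handle the first assertion first, then reuse the same prefix-comparison idea for the second.

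For the first assertion, I induct on $|u| + |v|$, with the trivial case $|u| = |v|$ handled by noting that $uv = vu$ then forces $u = v$, which is a power of $\rho(u)$. If instead $|u| \neq |v|$, say $|u| > |v|$, comparing the first $|v|$ letters on each side of $uv = vu$ shows that $v$ is a prefix of $u$, so $u = v u'$ for some $u' \in \Sigma^+$. Substituting and left-cancelling $v$ yields $u' v = v u'$ with strictly smaller total length, so by the inductive hypothesis $u'$ and $v$ are powers of a common primitive word $p$; then $u = v u'$ is also a power of $p$, finishing the argument.

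For the second assertion, I induct on $|v|$. Since $|uv| = |vw|$ gives $|u| = |w|$, exactly one of the following holds: $|u| > |v|$, $|u| = |v|$, or $|u| < |v|$. If $|u| \geq |v|$, then $v$ is a prefix of $u$, so $u = v u''$; substituting into $uv = vw$ and cancelling yields $w = u'' v$, and the decomposition $x = v$, $y = u''$, $k = 0$ matches the stated form. If $|u| < |v|$, then $u$ is a prefix of $v$, so $v = u v'$ with $v' \in \Sigma^+$; substituting and left-cancelling $u$ gives $u v' = v' w$ with $|v'| < |v|$. Applying the induction hypothesis produces $x \in \Sigma^+$, $y \in \Sigma^*$ and $k \geq 0$ with $u = xy$, $v' = (xy)^k x$, $w = yx$, and then $v = u v' = (xy)^{k+1} x$ supplies the required decomposition with exponent $k+1$.

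The main obstacle is bookkeeping rather than mathematical depth: I need to verify that every edge case (the auxiliary word $u'$, $u''$, or $v'$ possibly being empty; the base $|v| = 1$; the boundary $|u| = |v|$) is assigned consistent parameters $(k, x, y)$ lying in the specified ranges $x \in \Sigma^+$, $y \in \Sigma^*$, $k \geq 0$. Once the prefix-comparison step is in place, the algebraic manipulations that follow each case are short, but writing the induction cleanly requires careful case analysis on the relative lengths of $u$ and $v$ at every step.
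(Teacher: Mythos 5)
Your proof is correct: both inductions are well-founded (the first on $|u|+|v|$, the second on $|v|$, with the case $|u|\ge|v|$ serving as the non-recursive base in each), the prefix-comparison steps are valid, and the parameters $(k,x,y)$ you assign in each branch lie in the required ranges. Note that the paper does not prove this lemma at all --- it is quoted from the literature (Lyndon--Sch\"utzenberger) --- and your argument is the standard textbook proof of that result, so there is nothing to compare against beyond confirming its correctness.
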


\begin{proposition}\label{gl1}\cite{watson}
For $u, v \in \Sigma^+$, if $uv = \theta(v)u$ and
 $\theta$ is an antimorphic involution, then $u = x(yx)^i$, $v = yx$ for some integer $i \geq 0$ and $\theta$-palindromes $x\in \Sigma^+$, $y \in \Sigma^*$.
\end{proposition}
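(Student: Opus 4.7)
The plan is to split into two cases according to whether $|u| \leq |v|$ or $|u| > |v|$, and to reduce the harder case to Lemma~\ref{4lk} applied to a word equation of the form $AB = BC$.

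If $|u| \leq |v|$, then reading the last $|u|$ letters of the two sides of $uv = \theta(v)u$ shows that $u$ is a suffix of $v$, so I write $v = wu$ for some $w \in \Sigma^{*}$. Substituting and cancelling the common prefix $u$ on the left and suffix $u$ on the right yields $uw = \theta(u)\theta(w)$. Since $|u| = |\theta(u)|$, comparing the length-$|u|$ prefixes of the two sides forces $u = \theta(u)$, and then $w = \theta(w)$. Hence both $u$ and $w$ lie in $P_\theta$, and the choice $x = u$, $y = w$, $i = 0$ puts $u$ and $v$ in the claimed form.

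If $|u| > |v|$, then $\theta(v)$ is a prefix and $v$ is a suffix of $u$, so $u = \theta(v)s = tv$ for some $s, t \in \Sigma^{+}$. Plugging $u = \theta(v)s$ into the hypothesis and cancelling $\theta(v)$ gives $sv = u = tv$, hence $s = t$, so that $\theta(v)s = sv$. This is exactly the equation $AB = BC$ handled by Lemma~\ref{4lk}, which yields $\theta(v) = xy$, $s = (xy)^{k}x$, $v = yx$ for some $k \geq 0$, $x \in \Sigma^{+}$, $y \in \Sigma^{*}$. The antimorphic involution then gives $\theta(x)\theta(y) = \theta(yx) = \theta(v) = xy$; comparing the length-$|x|$ prefixes yields $\theta(x) = x$ and hence $\theta(y) = y$, so both $x$ and $y$ are $\theta$-palindromes. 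A short rewrite $u = sv = (xy)^{k}x \cdot yx = x(yx)^{k+1}$ finishes this case with $i = k+1$.

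The main obstacle I anticipate is the second case: one must first verify $s = t$ so that $\theta(v)s = sv$ genuinely has the $AB = BC$ shape required by Lemma~\ref{4lk}, and then upgrade its purely combinatorial conclusion to the structural statement of the proposition by exploiting the antimorphic involution to force $x, y \in P_\theta$ through the prefix-comparison trick on $\theta(yx) = xy$. Everything else is routine length-counting and substitution.
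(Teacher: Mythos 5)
Your proof is correct. The paper itself only recalls this proposition from the cited reference without reproducing a proof, but your argument is sound and follows the standard route one would expect there: the length split $|u|\leq|v|$ versus $|u|>|v|$, the reduction of the second case to the equation $\theta(v)s = sv$ of the form $AB=BC$ handled by Lemma~\ref{4lk}, and the prefix-comparison on $\theta(x)\theta(y)=xy$ to force $x,y\in P_\theta$ all check out (including the verification $s=t$ and the rewriting $(xy)^{k}x\cdot yx = x(yx)^{k+1}$).
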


\section{Some Important Properties of set of all $\theta$-conjugates of a Word}\label{sec3}
 
In this section, we investigate several combinatorial properties of the set of all $\theta$-conjugates of a word. 
First, we examine the structure of elements of $C_\theta(w)$ for $w \in \Sigma^*$. This analysis will be useful later on. Next, we show that if  $C_\theta(w)$ contains $|w|+1$ elements, then  $C_\theta(w)$ must contain a primitive word.  
Finally, we discuss the structure of words $u$ and $v$ such that $C_\theta(u)=C_\theta(v)$.

To discuss the structure of elements of $C_\theta(w)$ for some $w$, we first recall the following result from \cite{watson}.
 \begin{proposition}\cite{watson}\label{mainprop}
For an antimorphic involution $\theta$, consider $u$ is a $\theta$-conjugate of $w$, i.e., $uv=\theta(v)w$ for some $v \in \Sigma^*$. Then, either $u = xy$, {$v=\theta(x)$} and $w = y\theta(x)$ for some $x, y \in  \Sigma^*$, or $w = \theta(u)$ and $v=\beta w$ for some $\theta$-palindrome $\beta$.
\end{proposition}

Using Proposition \ref{mainprop}, we obtain the following result, which will be used throughout this text.
\begin{proposition}\label{deftc}
    For a word $w$,
     $C_{\theta}(w) = \{ \theta(y)x:~ w=xy \text{ where } ~x,y \in \Sigma^*\}$.
\end{proposition}
\begin{proof}
    Consider $T=\{ \theta(y)x~:~ w=xy, ~x,y \in \Sigma^*\}$. If $u \in C_\theta(w)$, then
    $uv=\theta(v)w$ for some $v \in \Sigma^*$. Then by Proposition \ref{mainprop}, $u = \theta(y)x$ and $w = xy$ for some $x, y \in  \Sigma^*$ (see Figure \ref{fig1}).
    This implies $u \in T$, i.e., $C_{\theta}(w) \subseteq T$.
    Consider $z \in T$. Then, $z=\theta(y_1) x_1$ where $w=x_1 y_1$ for some $x_1, y_1 \in \Sigma^*$. 
    If $x_1=\lambda$, then for $v_1=\beta y_1$ where $\beta$ is a $\theta$-palindrome, we have $z v_1= \theta(y_1) \beta y_1 =\theta(v_1) w  $, i.e., $z \in C_\theta(w)$. 
    If $x_1 \in \Sigma^+$, then for $v_2= y_1$, we have $z v_2= \theta(y_1) x_1 y_1 =\theta(v_2) w$, i.e., $z \in C_\theta(w)$. Thus, $T \subseteq C_\theta(w)$.
\end{proof}

In the following, we illustrate Proposition \ref{deftc} with several examples. These examples also highlight an important property of the $\theta$-conjugacy relation.
Similar to the conjugacy relation, we define the $\theta$-conjugacy relation $R_\theta$ on $\Sigma^*$, where $(u, v) \in R_\theta$ indicates that $u$ is a $\theta$-conjugate of $v$. It is known that the conjugacy relation is an equivalence relation \cite{Lothaire1997}. However, with the help of the following examples, we show that the $\theta$-conjugacy relation is not an equivalence relation. 

\begin{example}\label{o1uyt}
Consider $\Sigma=\{a,b,c, d\}$ and $\theta$ be such that $\theta(a)=b$, $\theta(c)=d$. 
\begin{enumerate}
 \item Consider $w = aac$, $u= daa$ and $v=dbb$. Then by Proposition \ref{deftc}, $C_{\theta}(w) = \{ aac, daa, dba, dbb \}$, $C_\theta(u)= \{daa, bda, bbd, bbc\}$ and $C_\theta(v) = \{ dbb, adb, aad, aac\}$. 
 Now, $v \in C_\theta(w)$ and $w \in C_\theta(v)$. On the other hand, $u \in C_\theta(w)$ but $w \notin C_\theta(u)$. Thus, $\theta$-conjugacy is not a symmetric relation.

 \item Consider $w=abb$, $u=bbb$ and $v=bab$. Then by Proposition \ref{deftc}, $C_\theta(w) = \{abb, aab, aaa\}$, $C_\theta(u)=\{ bbb, abb, aab, aaa\}$ and $C_\theta(v) = \{bab, aba, abb\}$. 
 Now, $aab \in C_\theta(w)$, $w \in C_\theta(u)$ and $aab \in C_\theta(u)$. On the other hand, $aaa \in C_\theta(w)$, $w \in C_\theta(v)$ but $aaa \notin C_\theta(v)$. Thus, $\theta$-conjugacy is not a transitive relation.
\end{enumerate}
\end{example}

By definition of conjugates, if $w \in \Sigma^+$ and $a \in \textit{Alph}(w)$, then there exists an element in $C(w)$ that begins with $a$. However, this property does not hold for $\theta$-conjugates, as illustrated in Example \ref{o1uyt}. In the following, we discuss the structure of elements of the set of all $\theta$-conjugates of a word regarding this property.

\begin{lemma}\label{x1}
For $u\in \Sigma^*$ and $a\in \Sigma$,  $C_\theta(ua)=\{ua\}\cup \theta(a) C_\theta(u)$.
\end{lemma}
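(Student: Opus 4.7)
The plan is to unfold the definition $C_\theta(w)=\{\theta(v)x : w=xv,\ x,v\in\Sigma^*\}$ applied to the word $ua$, and then split the factorizations of $ua$ according to whether the suffix $v$ is empty or not.

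First I would handle the trivial factorization: taking $v=\lambda$ and $x=ua$ gives $\theta(\lambda)(ua)=ua$, which accounts for the element $\{ua\}$ on the right-hand side. For the remaining (nontrivial) factorizations $ua=xv$ with $v\neq\lambda$, the key observation is that since $a$ is the last letter of $ua$, the suffix $v$ must end in $a$. Hence $v=v'a$ for some $v'\in\Sigma^*$, and $x$ is a prefix of $u$ satisfying $u=xv'$. Using that $\theta$ is an antimorphism, I compute
\[
\theta(v)x=\theta(v'a)x=\theta(a)\theta(v')x,
\]
and $\theta(v')x$ is precisely an element of $C_\theta(u)$ arising from the factorization $u=xv'$. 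This shows $C_\theta(ua)\subseteq\{ua\}\cup\theta(a)C_\theta(u)$.

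For the reverse inclusion, I would run the same argument backwards: clearly $ua\in C_\theta(ua)$ (take $v=\lambda$); and given any $\theta(v')x\in C_\theta(u)$ with $u=xv'$, the factorization $ua=x(v'a)$ of $ua$ yields $\theta(v'a)x=\theta(a)\theta(v')x\in C_\theta(ua)$, so $\theta(a)C_\theta(u)\subseteq C_\theta(ua)$.

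There is no real obstacle here — the statement is essentially a bookkeeping identity and the proof rests entirely on (i) the fact that any nontrivial suffix of $ua$ must end in $a$, and (ii) the antimorphism property $\theta(v'a)=\theta(a)\theta(v')$. The only minor care needed is to include the boundary case $v'=\lambda$ (equivalently $x=u$), which gives $\theta(a)u=\theta(a)\cdot u\in\theta(a)C_\theta(u)$ since $u\in C_\theta(u)$ via the factorization $u=u\cdot\lambda$.
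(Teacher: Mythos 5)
Your proof is correct and follows essentially the same route as the paper: both arguments split the factorizations of $ua$ according to whether the suffix is empty, use that any nonempty suffix of $ua$ ends in $a$, and apply the antimorphism identity $\theta(v'a)=\theta(a)\theta(v')$ in each direction. No issues.
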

\begin{proof}
We first show that $C_{\theta}(ua) \subseteq \{ua\}\cup \theta(a)C_{\theta}(u)$. Consider $x \in C_\theta(ua)$. Then for $ua=u'v'$, $x=\theta(v')u'$ where $u', v'\in \Sigma^*$. If $v'=\lambda$, then $x=ua$. Otherwise, if $v'\in \Sigma^+$, then $v'=za$ for some $z\in \Sigma^*$. This implies $x=\theta(a)\theta(z) u'$ and $u=u'z$. Then, $\theta(z)u' \in C_{\theta}(u)$. 
Therefore, $x \in \{ua\}\cup \theta(a) C_\theta(u)$ and  $C_\theta(ua) \subseteq \{ua\}\cup \theta(a) C_\theta(u)$.
Now, consider $y \in \{ua\}\cup \theta(a) C_\theta(u)$. If $y=ua$, then clearly $y \in C_\theta(ua)$. If $y \in \theta(a) C_\theta(u)$, then $y= \theta(a) \theta(q)p$ for $u=pq$, $p, q\in \Sigma^*$. This implies $y \in C_\theta(pqa)$, i.e., $y\in C_\theta(ua)$. Therefore, $\{ua\}\cup \theta(a) C_\theta(u) \subseteq C_\theta(ua)$. As a result, we have, $C_\theta(ua)=\{ua\}\cup \theta(a) C_\theta(u)$. 
\end{proof}

\begin{corollary}\label{corlr121}
 For $v\in \Sigma^+$ and $u\in \Sigma^*$, $ \theta(v) C_\theta(u)$ $\subseteq C_\theta(uv)$.
\end{corollary}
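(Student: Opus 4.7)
The plan is to prove the corollary by induction on the length of $v$, using Lemma \ref{x1} as the base case and as the engine of the inductive step. The statement fits very naturally into this framework because Lemma \ref{x1} already gives the desired containment when $v$ is a single letter: if $v = a \in \Sigma$, then $C_\theta(ua) = \{ua\} \cup \theta(a) C_\theta(u) \supseteq \theta(a) C_\theta(u)$, which is exactly $\theta(v) C_\theta(u) \subseteq C_\theta(uv)$.

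For the inductive step, assume that for every $u \in \Sigma^*$ and every $v \in \Sigma^+$ with $|v| = n$, we have $\theta(v) C_\theta(u) \subseteq C_\theta(uv)$. Let $v \in \Sigma^+$ with $|v| = n+1$, and write $v = v'a$ with $v' \in \Sigma^+$ (or $v' = \lambda$ in the edge case $n=0$, already covered by the base case) and $a \in \Sigma$. Applying Lemma \ref{x1} to the word $uv'$ with letter $a$ gives $C_\theta(uv) = C_\theta(uv'a) \supseteq \theta(a) C_\theta(uv')$. The induction hypothesis applied to $uv'$ with $v'$ in place of $v$ yields $C_\theta(uv') \supseteq \theta(v') C_\theta(u)$, so combining gives $C_\theta(uv) \supseteq \theta(a)\theta(v') C_\theta(u)$.

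The only subtle point — and really the only thing to watch in the argument — is to apply the antimorphic property of $\theta$ in the correct order: $\theta(a)\theta(v') = \theta(v'a) = \theta(v)$, which rewrites the right-hand side as $\theta(v) C_\theta(u)$ and completes the induction. There is no genuine obstacle here; the whole proof is essentially bookkeeping on top of Lemma \ref{x1}, and the main thing to get right is not to inadvertently swap the factors and write $\theta(v') \theta(a)$, which would correspond to $\theta$ being a morphism rather than an antimorphism.
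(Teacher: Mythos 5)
Your proof is correct and is essentially the intended derivation: the paper states this corollary as an immediate consequence of Lemma \ref{x1}, and iterating that lemma letter by letter over $v$ (with the antimorphism reversing the order, as you carefully note) is exactly the right bookkeeping.
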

In addition, Proposition \ref{gl1} and Lemma \ref{x1} lead us to the following.
\begin{corollary}\label{jkui87}
For $u\in \Sigma^*$ and $a\in \Sigma$, $C_\theta(ua)= \theta(a) C_\theta(u)$ if and only if $ua$ is a $\theta$-palindrome.
\end{corollary}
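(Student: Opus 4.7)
The plan is to reduce the corollary to Lemma~\ref{x1} and then use Proposition~\ref{gl1} to handle the delicate direction.

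By Lemma~\ref{x1}, $C_\theta(ua) = \{ua\}\cup \theta(a)C_\theta(u)$, so the equality $C_\theta(ua) = \theta(a)C_\theta(u)$ is equivalent to the single condition $ua \in \theta(a)C_\theta(u)$. This reformulation is the key move: it reduces the proof to showing $ua \in \theta(a) C_\theta(u) \iff ua \in P_\theta$. The backward direction is easy: if $ua \in P_\theta$, then $ua = \theta(ua) = \theta(a)\theta(u)$, and since $\theta(u) \in C_\theta(u)$ (take the trivial split $u = \lambda \cdot u$), we immediately get $ua \in \theta(a)C_\theta(u)$.

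For the forward direction, assume $ua = \theta(a) w$ for some $w \in C_\theta(u)$. Writing $w = \theta(q)p$ with $u = pq$, we obtain the equation
\[
ua \;=\; pqa \;=\; \theta(qa)\,p.
\]
Setting $U = p$ and $V = qa$, this becomes $UV = \theta(V)U$ with $V \in \Sigma^+$ (since $a \in \Sigma$). If $U = \lambda$, then $ua = \theta(ua)$ directly, so $ua \in P_\theta$. Otherwise $U, V \in \Sigma^+$, and Proposition~\ref{gl1} applies: there exist $x \in \Sigma^+$, $y \in \Sigma^*$ with $x,y \in P_\theta$ and $i \geq 0$ such that $U = x(yx)^i$ and $V = yx$. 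Then $ua = UV = x(yx)^{i+1}$, and since $x,y$ are $\theta$-palindromes and $\theta$ is antimorphic,
\[
\theta(x(yx)^{i+1}) = (\theta(x)\theta(y))^{i+1}\theta(x) = (xy)^{i+1}x = x(yx)^{i+1} = ua,
\]
so $ua \in P_\theta$, as required.

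The main obstacle is the forward direction, where one must recognize that the membership $ua \in \theta(a)C_\theta(u)$ encodes exactly the equation $UV = \theta(V)U$ governed by Proposition~\ref{gl1}, together with the small bookkeeping check that the structural description forced by that proposition indeed produces a $\theta$-palindrome; handling the degenerate case $p = \lambda$ separately keeps the reduction clean.
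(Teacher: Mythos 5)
Your proof is correct and follows exactly the route the paper intends: the paper gives no explicit argument, stating only that Proposition \ref{gl1} and Lemma \ref{x1} "lead us to" the corollary, and your write-up supplies precisely that derivation (reduce via Lemma \ref{x1} to the membership $ua \in \theta(a)C_\theta(u)$, then recognize the equation $UV=\theta(V)U$ and apply Proposition \ref{gl1}). The degenerate case $p=\lambda$ and the verification that $x(yx)^{i+1}$ is a $\theta$-palindrome are both handled correctly.
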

We provide some examples to illustrate Lemma \ref{x1} and Corollary \ref{jkui87}.
\begin{example}\label{o1nhgf} 
Consider $\Sigma=\{a,b,c\}$ and $\theta$ be such that $\theta(a)=b$, $\theta(c)=c$. 
\begin{enumerate}
    \item If $w = bccb$, then $C_{\theta}(w) = \{ bccb, abcc, acbc, accb, acca \}$. Note that $bccb$ is not a $\theta$-palindrome. Now, $C_\theta(bcc)=\{bcc, cbc, ccb, cca\}$     and  $C_\theta(bccb)=\{bccb\}\cup \theta(b) C_\theta(bcc)$.
    \item  If $w = abcab$, then $C_{\theta}(w) = \{ abcab, aabca, ababc, abcaa \}$.  Note that $w$ is a $\theta$-palindrome. Now, $C_\theta(abca)=\{abca, babc, bcab, bcaa\}$ and  $C_\theta(abcab)= \theta(b) C_\theta(abca)$.
\end{enumerate}
\end{example}
For $w \in \Sigma^+$, $C(w)$ can contain at most $|w|$ elements and if $|C(w)|=|w|$, then each element of $C(w)$ is primitive.
\cite{thetapalin2021} proved that the set of all $\theta$-conjugates of a word $w$ can have at most $|w|+1$ elements.
We now prove that for $w\in \Sigma^*$, if $|C_\theta(w)|=|w|+1$, then  $C_\theta(w)$ must contain at least one primitive word. To prove this we need the following result.

\begin{lemma}\label{priehtrer1213}
For $a \in \Sigma$ and $w\in \Sigma^n$ with $n\geq 2$, if each element of $C_\theta(w)$ is non-primitive and $a^n \in C_\theta(w)$, then $w=a^n$ with $\theta(a)=a$.
\end{lemma}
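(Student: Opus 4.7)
The plan is to first use the hypothesis $a^n \in C_\theta(w)$ to pin down the shape of $w$, and then show that the only way for every $\theta$-conjugate of such a $w$ to be non-primitive is that $\theta(a)=a$ and $w=a^n$. The contradiction in the nontrivial direction comes from exhibiting an explicit primitive element of $C_\theta(w)$.

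First I would unpack the definition of $\theta$-conjugate: since $a^n\in C_\theta(w)$, there exist $x,y\in\Sigma^*$ with $w=xy$ and $\theta(y)x=a^n$. Both $x$ and $\theta(y)$ must then be powers of $a$, so $x=a^i$ and $\theta(y)=a^{n-i}$ for some $0\le i\le n$. Applying $\theta$ (an antimorphic involution) to the second equality gives $y=\theta(a)^{n-i}$. Writing $b:=\theta(a)$, we obtain $w=a^i b^{n-i}$. If $b=a$ there is nothing more to prove, since then $w=a^n$ and the conclusion holds.

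Assume for contradiction that $b\ne a$. I would then split into three cases on $i$ and, in each case, display a primitive element of $C_\theta(w)$. If $0<i<n$, then $w=a^i b^{n-i}$ itself is a word of the form $a^p b^q$ with $p,q\ge 1$ and $a\ne b$; such a word is primitive, since any nontrivial power would create an internal $b$-to-$a$ transition which $a^p b^q$ does not possess. Since $w\in C_\theta(w)$, this contradicts the assumption. If $i=0$, so $w=b^n$, the partition $(b^{n-1},b)$ produces the $\theta$-conjugate $\theta(b)\,b^{n-1}=ab^{n-1}$, which is primitive for $n\ge 2$ by the same reasoning. If $i=n$, so $w=a^n$, the partition $(a^{n-1},a)$ produces $\theta(a)\,a^{n-1}=ba^{n-1}$, again primitive. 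In every case we have exhibited a primitive element of $C_\theta(w)$, contradicting the hypothesis. Hence $b=a$, and consequently $w=a^i a^{n-i}=a^n$.

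The main step that takes any care is the case analysis: for each value of $i$ one must pick a partition of $w$ whose associated $\theta$-conjugate is manifestly primitive. Once the form $w=a^i b^{n-i}$ is in hand, however, this is routine, and the only real combinatorial ingredient is the standard observation that $a^p b^q$ with $p,q\ge 1$ and $a\ne b$ is primitive. I do not expect any genuine obstacle beyond this bookkeeping.
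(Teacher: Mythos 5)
Your proof is correct and follows essentially the same route as the paper's: both derive $w=a^i\theta(a)^{n-i}$ from $a^n\in C_\theta(w)$ and then, assuming $\theta(a)\neq a$, reach a contradiction by exhibiting a primitive element of $C_\theta(w)$ in each of the three cases $i=0$, $i=n$, and $0<i<n$ (using the same witnesses). Your write-up just makes the primitivity of $a^pb^q$ explicit where the paper leaves it implicit.
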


\begin{proof}
Consider each element of $C_\theta(w)$ is non-primitive and $a^n \in C_\theta(w)$. Then for some $x, y \in \Sigma^*$, $\theta(y)x= a^n$ and $w=xy$. This implies $\theta(y)=a^j$ and $x= a^i$ for some $i, j \geq 0$ with $i+j=n$. Then, $w=xy=a^i \theta(a)^j$.
 Let us assume that $a \neq \theta(a)$.
If $j=0$, then $w= a^n$ and $\theta(a) a^{n-1} \in C_\theta(w)$ is a primitive word, which is a contradiction. 
Similarly, if $i=0$, then $w= \theta(a)^n$ and $a \theta(a)^{n-1} \in C_\theta(w)$ is a primitive word, which is a contradiction. 
For $i, j \geq 1$,  $w$ is itself a primitive word, which also leads to a contradiction. Therefore, $\theta(a)=a$  and $w=a^n$.
\end{proof}

Using Lemma \ref{priehtrer1213}, we now prove the following. 
\begin{proposition}\label{prinonpp1122}
If $|C_\theta(w)|= n +1$ for a word $w$ of length $n \geq 1$, then there exists a primitive word in $C_\theta(w)$.
\end{proposition}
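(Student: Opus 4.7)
The plan is to argue by contradiction: assume every element of $C_\theta(w)$ is non-primitive. Since $w \in C_\theta(w)$, $w$ itself is non-primitive, so write $w = u^m$ with $u = \rho(w)$ primitive and $m \ge 2$; set $p = |u|$.

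If $p = 1$, then $w = a^n$ for $a = u$, and Lemma~\ref{priehtrer1213} applies to the witness $a^n \in C_\theta(w)$, yielding $\theta(a) = a$. But then every factorization $w = a^i \cdot a^{n-i}$ produces the same $\theta$-conjugate $\theta(a^{n-i})\, a^i = a^n$, so $|C_\theta(w)| = 1$, contradicting $|C_\theta(w)| = n+1 \ge 2$.

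So suppose $p \ge 2$. Writing $u = u_1 u_2 \cdots u_p$, a direct calculation gives $w_1 = \theta(a_2 \cdots a_n)\, a_1 = \theta(u)^{m-1}\alpha$, where $\alpha = \theta(u_p)\theta(u_{p-1})\cdots\theta(u_2)\, u_1$ has length $p$. Because $w_0 = \theta(u)^m$ and $w_1$ agree on their first $n-1$ letters and differ only in the last letter (namely $\theta(u_1)$ versus $u_1$), distinctness of the elements of $C_\theta(w)$ forces $\theta(u_1) \ne u_1$, and in particular $\alpha \ne \theta(u)$. I then claim $w_1$ is primitive, which will contradict the standing assumption. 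Suppose otherwise that $w_1 = r^s$ with $r$ primitive, $s \ge 2$, and $|r| = q$ a proper divisor of $n = mp$. If $q = p$, then $r = \theta(u)$ and $w_1 = \theta(u)^m$, which would force $\alpha = \theta(u)$, a contradiction. Otherwise, one applies Fine and Wilf's periodicity lemma to the prefix $\theta(u)^{m-1}$ of $w_1$, which has periods $p$ and $q$ and length $(m-1)p$; the inequality $(m-1)p \ge p + q - \gcd(p,q)$ is routinely checked for $m \ge 3$, giving $\gcd(p,q)$ as a period of $\theta(u)^{m-1}$, hence of $\theta(u)$. Since $\theta(u)$ is primitive of length $p$, this forces $\gcd(p,q) = p$, that is, $p \mid q$; writing $q = pt$ with $2 \le t < m$, one sees $r$ is a power of $\theta(u)$ and $w_1 = \theta(u)^m$, again forcing $\alpha = \theta(u)$.

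The main obstacle is the boundary case $m = 2$, where Fine--Wilf applied to $\theta(u)^{m-1} = \theta(u)$ of length $p$ only rules out periods $q$ with $q \mid p$. The remaining candidate periods are the divisors of $2p$ that do not divide $p$ (necessarily carrying an extra factor of $2$ compared with $p$), and each must be excluded by a direct comparison of the explicit expressions for $\theta(u)$ and $\alpha$, exploiting $\theta(u_1) \ne u_1$. Once $w_1$ is shown to be primitive in all sub-cases, the contradiction with the assumption that every element of $C_\theta(w)$ is non-primitive is complete.
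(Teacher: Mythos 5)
Your argument is correct in outline but takes a genuinely different route from the paper. The paper also argues by contradiction, yet it never factors $w$ itself: it bounds the root length of every $\theta$-conjugate by $l=\lfloor n/2\rfloor$, exhibits $l$ conjugates sharing a common prefix of length greater than $l$, observes that distinctness of the $n+1$ conjugates forces their roots to have pairwise distinct lengths, and concludes by pigeonhole that some conjugate has root of length $1$, i.e.\ equals $a^n$; Lemma~\ref{priehtrer1213} then gives $w=a^n$ with $\theta(a)=a$, hence $|C_\theta(w)|=1$, a contradiction. You instead write $w=u^m$ with $u=\rho(w)$, isolate the single conjugate $w_1=\theta(u)^{m-1}\alpha$, and prove it primitive via periodicity arguments. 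Your approach uses only the two conjugates $\theta(w)$ and $w_1$ and pinpoints an explicit primitive element; the paper's approach avoids Fine--Wilf entirely and needs only pigeonhole plus Lemma~\ref{priehtrer1213}. Your $p=1$ branch and your $m\ge 3$ branch are sound: the inequality $(m-1)p\ge p+q-\gcd(p,q)$ does hold for every admissible $q\mid mp$ with $q\le mp/2$ once $m\ge 3$ (with equality in the extremal case $m=3$, $q=3p/2$), and the conclusion $p\mid q$ then contradicts either the primitivity of $r$ or forces $\alpha=\theta(u)$.

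The one genuine gap is the $m=2$ branch, which you flag as ``the main obstacle'' and then only describe rather than carry out. It does close, but not quite by the mechanism you indicate. For a candidate period $q$ with $q\mid 2p$, $q\nmid p$ and $q\le p$, the exponent $s=2p/q$ is odd and $q$ is even, so $p\equiv q/2\pmod q$. The agreement of $\theta(u)$ and $\alpha$ on positions $1,\dots,p-1$ — which cover at least one full period of $r$, since $p-1\ge \tfrac{3q}{2}-1\ge q$ — then forces $r[t]=r[t+q/2]$ for all $t$, i.e.\ $r$ is a square, contradicting the primitivity of the root $r$. So the decisive fact here is the \emph{coincidence} of $\theta(u)$ and $\alpha$ on the first $p-1$ positions, not the disagreement $\theta(u_1)\ne u_1$ at the last position (that disagreement is what disposes of the sub-case $q=p$). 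As written, your proposal asserts these residual periods ``must be excluded'' without an argument, so the proof is incomplete until this computation is supplied; with it, your alternative proof is valid.
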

\begin{proof}
For a word $w$ of length $n \geq 1$, consider  $|C_\theta(w)|= n +1$. 
\begin{itemize}
    \item For $n=1$, the word $w$ is a letter, i.e., a primitive word. Since $w \in C_\theta(w)$, $C_\theta(w)$ contains a primitive word.

    \item 
For  $n \geq 2$, let us assume that each element of $C_\theta(w)$ is non-primitive. 
Now, the length of the root of each element in $C_\theta(w)$ is at most $\lfloor {\frac{|w|}{2}} \rfloor $.
Consider $l=\lfloor {\frac{|w|}{2}} \rfloor $ and $w=w_1 w_2 \cdots w_{l} w_{l+1} \cdots w_n$ where each $w_j \in \Sigma$ for $1 \leq j \leq n$.
For each $i ~(0 \leq i \leq l-1)$, let $\alpha_i = \beta_i \gamma_i$ where $\beta_i= \theta(w_{l-i} \cdots w_n)$ and $\gamma_i =w_1 w_2 \cdots w_{l-i-1} $. 
Then for each $i~ (0 \leq i \leq l-1)$,  $|\beta_i|> l$ and $\alpha_i \in C_\theta(w)$. 
Consider $\rho(\alpha_i)=z_i$ for each $i~ (0 \leq i \leq l-1)$. Then, $|z_i| \leq l$ for each $i~ (0 \leq i \leq l-1)$. 
{Now, each $\alpha_0, \alpha_1, \cdots, \alpha_{l-1}$ has a common prefix $\beta_0$ where $\beta_0 =\theta(w_n) \theta(w_{n-1}) \cdots \theta(w_l)$ and $|\beta_0|>l$. }
This implies each $z_i$  is a proper prefix of $\beta_0$ as $|z_i| \leq l$ and $|\beta_0|>l$ where $0 \leq i \leq l-1$.
Since each element of $C_\theta(w)$ is non-primitive and $|C_\theta(w)|=|w|+1$, 
each $z_i$ ~$(0 \leq i \leq l-1)$ is a distinct prefix of $\beta_0$.
Then, the possible lengths of $z_i$'s are $1, 2, \cdots, l$.

For $i_1 \neq i_2$ $(0 \leq i_1, i_2 \leq l-1)$, if the lengths of $z_{i_1}$ and $z_{i_2}$ 
are equal, then as $z_{i_1}$ and $z_{i_2}$ are prefixes of $\beta_0$, we have $z_{i_1} = z_{i_2}$, i.e.,  $\alpha_{i_1} = \alpha_{i_2}$, which is a contradiction to our assumption that $|C_\theta(w)|=|w|+1$.

Otherwise, consider the length of each $z_i$ is distinct. 
Since the length of the root of each element in $C_\theta(w)$ is at most $l$ and there are $l$ distinct elements $\alpha_0, \alpha_1, \ldots, \alpha_{l-1}$ in $C_\theta(w)$ with a commom prefix $\beta_0$, we have $|\rho(\alpha_{i_3})|=1$ for some $0 \leq i_3 \leq l-1$. Then by Lemma \ref{priehtrer1213}, $w=a^n$ with $\theta(a)=a$, which is a contradiction to our assumption that $|C_\theta(w)|=|w|+1$.
\end{itemize}

Thus, if $|C_\theta(w)|=|w|+1$, then there exists a primitive word in $C_\theta(w)$.
\end{proof}

 For $u, v \in \Sigma^*$, the equality of $C_\theta(u)$ and $C_\theta(v)$ was studied in \cite{thetapalin2021}, and the following result was proved.

 \begin{theorem}\cite{thetapalin2021}\label{math1}
Let $u, v \in \Sigma^*$ such that $C_\theta(u)=C_\theta(v)$. 
\begin{enumerate}
    \item If $|C_\theta(u)|= 1$, then $v=u$.
    \item If $|C_\theta(u)|= 2$, then $v=u$ or $v=\theta(u)$.
    \item If $|C_\theta(u)|\geq 3$, then one of the following holds:
    \begin{enumerate}
        \item $u=v$
        \item $u=a^{n_1} u' a^{n_2}$ and $v=u^R$ where  $a \in \Sigma$, $u' \in \Sigma^+$, $a \notin Pref(u')$, $a \notin Suf(u')$, $u'=u'^R$, $n_1, n_2 \geq 1$, $n_1\neq n_2$, and $\theta(c)=c ~\forall c \in \textit{Alph}(u)$.
    \end{enumerate} 
\end{enumerate}
\end{theorem}

\begin{remark}
   
      \cite{thetapalin2021} proved that if $C_\theta(u)=C_\theta(v)$ and $|C_\theta(u)|\geq 3$ for $u, v \in \Sigma^*$, then either $u=v$ or $u=v^R$ (Proposition 3.15 in \cite{thetapalin2021}). However, the detailed structures of $u$ and $v$ were provided within the proof of Proposition 3.15 in \cite{thetapalin2021}. Since we will later use these structural details, we have explicitly included them in Theorem \ref{math1}. Additionally, to ensure clarity and verifiability, we have added the proof of these structural details in the appendix.

\end{remark}

With the help of a few examples, we now show that the converse of Theorem \ref{math1} is not true.
\begin{example}
Let $\Sigma=\{a, b, c, d\}$ and $\theta$ be such that $\theta(a)=b$, $\theta(c)=c$ and $\theta(d)=d$.
\begin{enumerate}
    \item Consider $u=ca$ and $v=\theta(u)=bc$. Then, $|C_\theta(u)|=2$ and $cb \in C_\theta(v) \setminus C_\theta(u)$.
    \item Consider $u=cdccdc^2$ and $v= u^R = c^2 dccd c$. Then, $|C_\theta(u)|\geq 3$ and  $cdcccdc \in C_\theta(v) \setminus C_\theta(u)$.
\end{enumerate}
Hence, in both cases, $C_\theta(u)\neq C_\theta(v)$.
\end{example}
We now discuss necessary and sufficient conditions for the equality of $C_\theta(u)$ and $C_\theta(v)$ for words $u$ and $v$. 
Since $u=v$ always implies $C_\theta(u)=C_\theta(v)$, we focus on characterizing words $u$ and $v$ such that $u \neq v$ and $C_\theta(u)=C_\theta(v)$.
Furthermore, as $C_\theta(u)=C_\theta(v)$ and $|C_\theta(u)|=1$ implies $u=v$ (by Theorem \ref{math1}, Assertion(1)), we characterize pairs $u, v \in \Sigma^*$ satisfying $u \neq v$, $C_\theta(u)=C_\theta(v)$, and $|C_\theta(u)|=|C_\theta(v)|>1$.
\begin{proposition}\label{cth2}
Consider $u$ and $v$ are two distinct words of same length $k$ such that $|C_\theta(u)|=|C_\theta(v)|=2 $.
Then, $C_\theta(u)=C_\theta(v)$ if and only if $u$ and $v$ satisfy one of the following:
\begin{enumerate}
    \item $u=c$ and $v=\theta(c)$ for some $c \in \Sigma$ with $\theta(c)\neq c$.
     \item $u=ab$ and $v=ba$ for some $a, b \in \Sigma$ with $a \neq b$, $\theta(a)=a$, and $\theta(b)=b $.
\end{enumerate}
\end{proposition}

\begin{proof}
For $u, v \in \Sigma^k$, $u \neq v$  and $|C_\theta(u)|=|C_\theta(v)|=2$, consider $C_\theta(u)=C_\theta(v)$. Then by Theorem \ref{math1},  $v=\theta(u)$ where $\theta(u) \neq u$. 

\begin{itemize}
 \item Consider $|u|=k=1$. Then as $|C_\theta(u)|=2 $, $u=c$ and $v=\theta(c)$ for some $c \in \Sigma$ with $\theta(c)\neq c$.

 \item Consider $|u|=k \geq 2$.  
Since $|C_\theta(u)|=2$ and $u \neq v$, we have the following: for each $\alpha, \beta \in \Sigma^+$ for which $u=\alpha \beta $,   $\theta(\beta)\alpha \in C_\theta(u)$ must be equal to either $u=\alpha \beta$ or $v=\theta(u)=\theta(\beta)\theta(\alpha)$.
 If $ \theta(\beta)\alpha =u$, then by Proposition \ref{gl1}, $u$ is a $\theta$-palindrome, which is a contradiction to the fact that $\theta(u)\neq u$. 
Thus, $ \theta(\beta)\alpha=\theta(u)$, i.e., $\theta(\beta)\alpha = \theta(\beta)\theta(\alpha)$, i.e., $\theta(\alpha)=\alpha$.

Now, for each $\alpha, \beta \in \Sigma^+$ for which $u=\alpha \beta $, we have $\alpha \theta(\beta) \in C_\theta(v) $ (since $v=\theta(u)=\theta(\beta)\theta(\alpha)$).
Since $C_\theta(u)=C_\theta(v)$ and $|C_\theta(u)|=2$,  $\alpha \theta(\beta)$ is equal to either $u$ or $v$. 
 If $\alpha \theta(\beta)=v$, i.e., $\alpha \theta(\beta) = \theta(\beta) \theta(\alpha) $, then  by Proposition \ref{gl1}, $v$ is a $\theta$-palindrome, which is a contradiction to the fact that $\theta(u)\neq u$. 
 Thus, $\alpha \theta(\beta)=u$, i.e., $\alpha \theta(\beta) = \alpha \beta$, i.e., $\theta(\beta)=\beta$.
 
Therefore, for each $\alpha, \beta \in \Sigma^+$ satisfying $u=\alpha \beta$, we have $\theta(\alpha)=\alpha$, $\theta(\beta)=\beta$. 

\begin{itemize}
    \item If $|u|=k=2$, then $u=ab$ and $v=ba$ for some $a, b \in \Sigma$ with $\theta(a)= a$, $\theta(b)=b$.
Here, $a \neq b$, otherwise $u=v=a^2$, which contradicts the fact that $u \neq v$.
    \item If $|u|=k \geq 3$, then as each non-empty proper prefix and suffix of $u$ are $\theta$-palindromes,  $u=v = d^{k}$ for some $d \in \Sigma$ with $\theta(d)=d$, which is a contradiction to the fact that $u \neq v$.
\end{itemize}
\end{itemize}

The converse is straightforward.
\end{proof}

\begin{proposition}\label{cth3}
Consider $u$ and $v$ are two distinct words of same length such that $|C_\theta(u)|=|C_\theta(v)| \geq 3$.
Then, $C_\theta(u)=C_\theta(v)$ if and only if  $u=a^{m+1} b a^{m}$ and $v=a^{m} b a^{m+1}$ where $m \geq 1$, $a, b \in \Sigma$,  $a\neq b$, $\theta(a)=a$ and $\theta(b)=b$.
\end{proposition}
\begin{proof}
For $u, v \in \Sigma^*$, $u \neq v$ and $|C_\theta(u)|=|C_\theta(v)| \geq 3$, consider $C_\theta(u)=C_\theta(v)$.
Then, by Theorem \ref{math1}, $$u=a^{n_1} u' a^{n_2} \text{ and } v=a^{n_2} u' a^{n_1}$$ where $a \in \Sigma$, $n_1, n_2 \geq 1$, $n_1\neq n_2$, $u' \in \Sigma^+$, $a \notin Pref(u')$, $a \notin Suf(u')$,  $u'=u'^R$ and $\theta(c)=c~ \forall c \in \textit{Alph}(u)$.

Without loss of generality, let us take $n_1 > n_2$. 
If $|u'|\geq 2$, 
then according to the above conditions, we have $u'=d_1 \alpha d_1$ for some $d_1 \in \Sigma$ and $\alpha \in \Sigma^*$ with $d_1 \neq a$ and $\theta(d_1)=d_1$.  
This implies $u=a^{n_1} d_1 \alpha d_1 a^{n_2}$ and $v=a^{n_2} d_1 \alpha d_1 a^{n_1}$ where $n_1 > n_2$. Then, $a^{n_1} d_1 \theta(\alpha) a^{n_2} d_1 \in C_\theta(v)\setminus C_\theta(u)$, which is a contradiction to our assumption that $C_\theta(u)=C_\theta(v)$. Thus as $u' \in \Sigma^+$, $|u'|=1$. 
Then, $u=a^{n_1} b a^{n_2}$ and $v=a^{n_2} b a^{n_1}$ where $u'=b \in \Sigma$, $a \neq b$ and $\theta(b)=b$.
If $n_1-n_2 \geq 2$, i.e.,  $n_1 -n_2 = 2+k$ for $k \geq 0$, then $u=a^{n_2+k+2} b a^{n_2}$ and $v=a^{n_2} b a^{n_2+k+2}$. Now, $a^{k+n_2+1} b a^{n_2+1} \in C_\theta(v)\setminus C_\theta(u)$, which is a contradiction to our assumption that $C_\theta(u)=C_\theta(v)$.  Thus, as $n_1>n_2$, $n_1-n_2=1$. Then, $u=a^{n_2+1} b a^{n_2}$ and $v=a^{n_2} b a^{n_2+1}$. 

Conversely, let $u=a^{m+1} b a^{m}$ and $v=a^{m} b a^{m+1}$ where $m \geq 1$, $a, b \in \Sigma$, $\theta(a)=a$, $\theta(b)=b$ and $a\neq b$. 
Then $C_\theta(u)= \{ a^{m+1+j} b a^{m-j}~:~ 0 \leq j \leq m \} \cup \{v\}$,
and $C_\theta(v)=\{v \} \cup \{ a^{m+i} b a^{m+1-i}~:~ 1 \leq i \leq m+1\} \cup \{ a^{m+1} b a^m\}$. 
Since $a^{m+1} b a^m \in \{ a^{m+i} b a^{m+1-i}~:~ 1 \leq i \leq m+1\}$, we have  $C_\theta(v)=\{v \} \cup \{ a^{m+i} b a^{m+1-i}~:~ 1 \leq i \leq m+1\}$. Consider $i=i'+1$. Then, $1 \leq i \leq m+1$ implies $0 \leq i' \leq m$. Thus, $C_\theta(v)=\{v \} \cup \{ a^{m+i'+1} b a^{m-i'}~:~ 0 \leq i' \leq m\}$.
Therefore, $C_\theta(u)=C_\theta(v)$.
\end{proof}

It is well known that for given $u, v \in \Sigma^*$, $C(u)= C(v)$ if and only if $u \in C(v)$. But this statement is not true in general for $\theta$-conjugates of a word.
Propositions \ref{cth2} and  \ref{cth3} describe necessary and sufficient conditions for the equality of $C_\theta(u)$ and $C_\theta(v)$ for distinct words $u$ and $v$ with $|C_\theta(u)|=|C_\theta(v)| \geq 2$. Also,
it is clear from Propositions \ref{cth2} and  \ref{cth3} that if sets of all $\theta$-conjugates of two distinct words  $u \in \Sigma^{\geq 2}$ and  $ v \in \Sigma^{\geq 2}$ are equal, then $u, v$ must be binary words and $\theta (c)=c$ for all $c \in \textit{Alph}(u) \cup \textit{Alph}(v)$.   
Now, combining Propositions \ref{cth2} and \ref{cth3}, we have the following characterizations of $u$ and $v$ such that sets of all $\theta$-conjugates of $u$ and $v$ are equal.

 \begin{theorem}
For $u, v \in \Sigma^*$, $C_\theta(u)=C_\theta(v)$ if and only if  one of the following holds:
\begin{enumerate}
    \item $u=v$.
    \item   $u=ab$ and $v=ba$ for $a, b \in \Sigma$ with $\theta(a)=a$, $\theta(b)=b$ and $a\neq b$.    
    \item $u=c$ and $v=\theta(c)$ for $c \in \Sigma$ with $\theta(c)\neq c$.
    \item    $u=a^{m+1} b a^{m}$ and $v=a^{m} b a^{m+1}$ for $a, b \in \Sigma$ with $\theta(a)=a$, $\theta(b)=b$, $a\neq b$ and $m \geq 1$.
\end{enumerate}
\end{theorem}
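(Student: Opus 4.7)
The plan is to bootstrap directly from the three earlier results: Theorem \ref{math1}, Proposition \ref{cth2}, and Proposition \ref{cth3}. The statement essentially repackages the analysis already carried out there, so the work is to route the hypotheses into the correct case and to verify that the witness $v$ can always be taken to be either $u$ or $\theta(u)$.

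For the $(\Leftarrow)$ direction, the case $v=u$ is trivial. If $v=\theta(u)$ and $u$ has one of the three listed forms, I would verify directly that $v$ matches the partner word in Proposition \ref{cth2} or Proposition \ref{cth3}. Since $\theta$ is an antimorphic involution, in case (1) we have $\theta(u)=\theta(b)\theta(a)=ba$; in case (2) we have $\theta(u)=\theta(c)$; and in case (3) we have $\theta(u)=\theta(a^m)\theta(b)\theta(a^{m+1})=a^mba^{m+1}$. Then each of these pairs $(u,v)$ is exactly the pair considered in the corresponding proposition, giving $C_\theta(u)=C_\theta(v)$.

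For the $(\Rightarrow)$ direction, suppose $C_\theta(u)=C_\theta(v)$ and $v\neq u$. I would split on the cardinality of $C_\theta(u)$. If $|C_\theta(u)|=1$, then Theorem \ref{math1}(1) forces $v=u$, contradicting $v\neq u$, so this case is vacuous. If $|C_\theta(u)|=2$, Proposition \ref{cth2} gives exactly the two alternatives listed in (1) and (2); in both alternatives $v=\theta(u)$ by a direct computation of $\theta$ on the displayed words. If $|C_\theta(u)|\geq 3$, Proposition \ref{cth3} yields $u=a^{m+1}ba^m$ and $v=a^mba^{m+1}$ with $\theta(a)=a$, $\theta(b)=b$, $a\neq b$, $m\geq 1$; again applying $\theta$ antimorphically shows $v=\theta(u)$, which is exactly case (3).

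No step here presents a genuine obstacle; the only mildly nontrivial verification is confirming that the partner word produced by Propositions \ref{cth2} and \ref{cth3} in each situation is precisely $\theta(u)$, and this follows immediately from the antimorphic involution property together with the fact that every letter of $u$ is fixed by $\theta$ in cases (1) and (3).
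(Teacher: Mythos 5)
Your proposal is correct and follows exactly the route the paper intends: the theorem is stated there as an immediate consequence of combining Theorem \ref{math1}, Proposition \ref{cth2}, and Proposition \ref{cth3}, and your case split on $|C_\theta(u)|$ together with the check that the partner word in each case is precisely $\theta(u)$ (using that $\theta$ is an antimorphic involution fixing the letters of $u$ in cases (1) and (3)) is the full content of that combination. Nothing is missing.
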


\section{$\theta$-conjugates of a Language And Closure Properties of\\ $\theta$-conjugate Operation }\label{sec4}

Conjugate operation or cyclic shift operation (\cite{csre}) is a unary operation on formal languages defined as $C(L) = \bigcup\limits_{w \in L} C(w)$ for some $L \subseteq \Sigma^*$. Here, we call $C(L)$ as the set of all conjugates of the language $L$.
   \cite{csre} showed that the family of regular languages is closed under the conjugate operation. In addition, \cite{ref19} demonstrated that if $L$ is a context-free language, then $C(L)$ is also context-free. {Families of context-sensitive and recursively enumerable languages have also been shown to be closed under the conjugate operation by \cite{cscs}.}
Now, the definition of $\theta$-conjugate of a word  can be naturally extended to a language.
In this section, we first define the $\theta$-conjugate operation on languages and investigate {the closure property of} certain families of languages under the $\theta$-conjugate operation.

{
\begin{definition}\label{q1}
 $\theta$-conjugate operation is a unary operation on formal languages defined as $$C_\theta(L) = \bigcup\limits_{w \in L} C_\theta(w)$$ for some $L \subseteq \Sigma^*$. Here, we call $C_\theta(L)$ as the set of all $\theta$-conjugates of the language $L$.
\end{definition}
}

We now investigate closure properties of certain well-known families of languages with respect to $\theta$-conjugate operation.
For this, we need the following result {whose proof is straightforward}. 
\begin{lemma}\label{mn1}
For a morphic involution $\psi$ and an antimorphic involution $\theta$ over $\Sigma^*$, if $\psi(a)=\theta(a)~ \forall a \in \Sigma$, then $\psi(u)=\theta(u^R)$ for $u\in \Sigma^*$.
\end{lemma}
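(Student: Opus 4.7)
The plan is to prove this by a direct letter-by-letter computation, which avoids any need for induction since both sides distribute over products (one in forward order, one in reverse order).

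First I would write an arbitrary word $u \in \Sigma^*$ as a product of its letters $u = a_1 a_2 \cdots a_n$ with each $a_i \in \Sigma$, handling the case $u = \lambda$ separately where both $\psi(\lambda) = \lambda$ and $\theta(\lambda^R) = \theta(\lambda) = \lambda$ trivially. For a nonempty $u$, I would expand the left-hand side using the morphism property of $\psi$ to get
\[
\psi(u) = \psi(a_1)\psi(a_2)\cdots\psi(a_n),
\]
and expand the right-hand side by first computing $u^R = a_n a_{n-1}\cdots a_1$ and then applying the antimorphism property of $\theta$ to obtain
\[
\theta(u^R) = \theta(a_n a_{n-1}\cdots a_1) = \theta(a_1)\theta(a_2)\cdots\theta(a_n),
\]
where the order is reversed back by the antimorphism. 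The hypothesis $\psi(a) = \theta(a)$ for every $a \in \Sigma$ then matches the two products factor by factor.

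Alternatively, one could run a short induction on $|u|$: for $u = va$ with $a \in \Sigma$, apply the morphism property on the left to get $\psi(u) = \psi(v)\psi(a)$, and the antimorphism property on the right to get $\theta(u^R) = \theta(av^R) = \theta(v^R)\theta(a)$, then invoke the inductive hypothesis $\psi(v) = \theta(v^R)$ together with $\psi(a) = \theta(a)$. I do not anticipate any obstacle here; the statement is essentially a reformulation of the fact that the only difference between a morphism and an antimorphism that agree on single letters is the order in which the images of the letters are concatenated, which is exactly compensated by the reversal $u \mapsto u^R$.
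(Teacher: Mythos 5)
Your proof is correct, and the paper itself omits the argument as ``straightforward''; your letter-by-letter expansion (or equivalently the short induction you sketch) is exactly the standard argument being alluded to. No issues.
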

First, we show that the family of regular languages is closed under $\theta$-conjugate operation. 
 The proof is very similar to the argument presented by \cite{csre} on the closure of regular languages under the conjugate operation.

\begin{theorem}\label{gt1}
If $L \subseteq \Sigma^*$ is regular, then $C_\theta(L)$ is also regular.
\end{theorem}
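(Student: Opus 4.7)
The plan is to mimic the standard closure argument for the ordinary cyclic-shift operation, adjusting the ``suffix part'' of the construction to account for the antimorphic involution $\theta$ by invoking Lemma \ref{mn1}. I would begin by fixing a DFA $A=(Q',\Sigma,\delta,q_0,F)$ recognising $L$. Every word in $C_\theta(L)$ has the form $\theta(v)u$ for some factorisation $uv\in L$, so I would split each accepting run of $A$ at an intermediate state: if $A$ on input $u$ moves from $q_0$ to some state $q$ and then on input $v$ moves from $q$ to a final state, then $\theta(v)u\in C_\theta(L)$; conversely, every element of $C_\theta(L)$ arises this way for some $q\in Q'$.

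For each $q\in Q'$ I would then build two auxiliary DFAs $C_q$ and $B_q$, both sharing the state set and transition function of $A$: $C_q$ has initial state $q_0$ and only accepting state $q$, so $L(C_q)$ is the set of all $u$ on which $A$ moves from $q_0$ to $q$; and $B_q$ has initial state $q$ and accepting set $F$, so $L(B_q)$ is the set of all $v$ on which $A$ moves from $q$ into $F$. I would next introduce a morphic involution $\psi:\Sigma^*\to\Sigma^*$ that agrees with $\theta$ on letters, so that Lemma \ref{mn1} yields $\theta(v)=\psi(v^R)$ for every $v\in\Sigma^*$. Since the class of regular languages is closed under reversal and under morphic images, $\theta(L(B_q))=\psi((L(B_q))^R)$ is regular; let $B_q'$ be a DFA accepting it. Combining everything, a word $w$ lies in $C_\theta(L)$ if and only if $w=\theta(v)u$ for some $q\in Q'$, some $u\in L(C_q)$, and some $v\in L(B_q)$, hence
\[
C_\theta(L)=\bigcup_{q\in Q'} L(B_q')\cdot L(C_q),
\]
which is a finite union of concatenations of regular languages and is therefore regular.

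I do not anticipate any serious obstacle: the argument is essentially the classical cyclic-shift construction, with Lemma \ref{mn1} doing the work of replacing the antimorphism $\theta$ by a composition of reversal with a morphism so that standard regular closure properties apply. The only point meriting care is verifying both inclusions in the displayed decomposition, and in particular that letting the split state $q$ range over $Q'$ covers $C_\theta(L)$ exactly; this is immediate from the definitions of $C_q$ and $B_q$, because every factorisation $uv$ of a word in $L$ corresponds to a uniquely determined intermediate state reached by $A$ after reading $u$.
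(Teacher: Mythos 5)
Your proposal is correct and follows essentially the same route as the paper: the same split of accepting runs at an intermediate state $q$, the same auxiliary automata $C_q$ and $B_q$, the same use of Lemma \ref{mn1} to express $\theta$ as reversal followed by a morphism, and the same decomposition $C_\theta(L)=\bigcup_{q\in Q'} L(B_q')L(C_q)$. No issues to report.
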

\begin{proof}
    Consider a deterministic finite automaton (DFA) $A$ which accepts $L$.  Let $Q'$, $q_0$, and $F$ be the set of all states, the initial state, and the set of all final states of $A$, respectively. Now, for each state $q \in Q'$, we first construct two DFAs $C_q$ and $B_q$ with same states and same transition functions as $A$ where $q_0$ is the initial state of $C_q$, $\{q\}$ is the set of all final states of $C_q$, $q$ is the initial state of $B_q$ and $F$ is the set of all final states of $B_q$.
Consider a morphic involution $\psi: \Sigma^*\rightarrow \Sigma^*$ such that $\psi(a)=\theta(a)$ for all $a \in \Sigma$. Then, as the family of regular languages is closed under reversal and morphism \cite{Lothaire1997}, there exists a  DFA $B_q'$ which accepts $\psi((L(B_q))^R)$, i.e., $\theta(L(B_q))$ (by Lemma \ref{mn1}). Now, by definition, a string $w$ is in $C_\theta(L)$ if and only if $w=\theta(v)u$ for $uv\in L$. 
Then, in the accepting computation of $A$ on $uv$, there exists an intermediate state $q$ of $A$ such that the computation of $A$ on $u$ ends in the state $q$, while $v$ is accepted by $A$ starting from $q$.
Then, $\theta(v)$ is accepted by $B_q'$ and $u$ is accepted by $C_q$ and $C_\theta(L)=\bigcup_{q \in Q'}L(B_q')L(C_q)$. Since the family of regular languages is closed under concatenation and union \cite{Lothaire1997}, $C_\theta(L)$ is regular. 
\end{proof}

We now consider the closure property of family of context-free languages under $\theta$-conjugate operation. To proceed, we require the following result, commonly known as the \textit{pumping lemma for Context-Free Languages}:
\begin{theorem}\cite{shallit2008}\label{pumth}
    If $L$ is context-free, then there exists a constant $n$ such that for all $z \in L$ with $|z|\geq n$, there exists a decomposition $z=uvwxy$ with $|vwx| \leq n$ and $|vx| \geq 1$ such that for all $i \geq 0$, we have $u v^i w x^i y \in L$.
\end{theorem}

\begin{theorem}\label{cflwc}
The family of context-free languages is not closed under $\theta$-conjugate operation. 
\end{theorem}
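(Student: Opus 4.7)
The plan is to construct a concrete counterexample: a context-free language $L$ whose $\theta$-conjugate set is not context-free. Building on the Watson-Crick flavor of the paper, a natural candidate is the $\theta$-palindromic language with a central marker. Take $\Sigma = \{a, b, c\}$ and let $\theta$ be the antimorphic involution determined by $\theta(a) = b$ and $\theta(c) = c$. I would consider
\[
L \;=\; \{\, w\, c\, \theta(w) : w \in \{a, b\}^* \,\},
\]
which is easily seen to be context-free via a standard pushdown construction (push $w$, read $c$, pop and match against $\theta(w)$).

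The key observation is that splitting $u = w\,c\,\theta(w) \in L$ at position $|w|+1$ gives $x = wc$, $y = \theta(w)$, whence
\[
\theta(y)\, x \;=\; \theta(\theta(w))\cdot wc \;=\; w\cdot wc \;=\; wwc \;\in\; C_\theta(u).
\]
So $\{wwc : w \in \{a,b\}^*\} \subseteq C_\theta(L)$. To isolate this subset I would intersect with the regular language $R = \{a, b\}^* c$ and argue that
\[
C_\theta(L) \cap R \;=\; \{\, wwc : w \in \{a,b\}^* \,\}.
\]
Since $\theta$ preserves letter counts, every $\theta$-conjugate of $w c \theta(w)$ contains exactly one $c$; membership in $R$ forces that $c$ to be the final letter, and a case analysis on the split position pinpoints $x = wc$ (together with the degenerate $w = \lambda$, which fits the form $wwc$ anyway) as the only way this can occur.

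With the intersection identified, closure of context-free languages under intersection with regular languages and under homomorphism finishes the argument: the erasing homomorphism $h$ defined by $h(a) = a$, $h(b) = b$, $h(c) = \lambda$ sends $\{wwc : w \in \{a,b\}^*\}$ to $\{ww : w \in \{a,b\}^*\}$, a classical example of a non-context-free language. Hence, if $C_\theta(L)$ were context-free, so would be $\{ww\}$, a contradiction.

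The step I expect to require the most care is the case analysis for $C_\theta(L) \cap R$. Concretely, one must enumerate the splits of $u = w c \theta(w) = xy$ inside $w$, at either boundary of the central $c$, inside $\theta(w)$, and the two trivial splits, and track the last symbol of $\theta(y) x$ in each case. All but one case leave a non-empty suffix of $w$ or $\theta(w)$ (a letter from $\{a,b\}$) at the end of $\theta(y) x$, so the unique $c$ stays strictly in the interior; only $x = wc$ pushes the $c$ to the last position, giving exactly $wwc$. This bookkeeping is routine but is the crux of the isolation, and everything else in the proof is a standard application of closure properties.
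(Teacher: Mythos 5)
Your proof is correct and follows essentially the same route as the paper: exhibit an explicit context-free counterexample and extract a known non-context-free language from $C_\theta(L)$ by intersecting with a regular language (you add one harmless homomorphism step to land on $\{ww\}$). The paper's witness is $L=\{a^nb^kc^kd^n\}$ with $\theta$ the identity on letters, yielding $\{d^nc^ka^nb^k\}$, whereas you use a marked $\theta$-palindrome language with a complementing involution; both choices work and your case analysis of the split positions is sound.
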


\begin{proof}
Consider the following context-free language $L=\{a^n b^k c^k d^n\; | \; n, k \geq 1\}$ over $\Sigma=\{a, b, c, d\}$.
Suppose $\theta(e)=e$ for all $e \in \Sigma$.
Then,
  $C_\theta(L)=
\{d^{i} a^n b^k c^k d^{n-i}\;|\;  n, k \geq 1, 0\leq i\leq n \}$ $\cup$ 
$\{d^n c^j a^n b^k c^{k-j}\;|\;  n, k \geq 1,  0< j \leq k\}$ $\cup$ 
$\{ d^n c^k b^l a^n b^{k-l}\;|\;  n, k \geq 1,  0<l \leq k\}$.
Consider the regular language
$L'=\{d^{n_1} c^{m_1} a^{m_2} b^{n_2}\;|\;n_1, m_1, \\ m_2, n_2 \geq 1 \}$. Now, if $C_\theta(L)$ is context-free, then as the intersection of a context-free language and a regular language is context-free \cite{shallit2008}, $C_\theta(L)\cap L' = \{  d^n c^k a^n b^k ~|~ n, k \geq 1 \}$ is context-free. But by {pumping lemma} for context-free languages (Theorem \ref{pumth}), $\{  d^n c^k a^n b^k ~|~ n, k \geq 1 \}$ is not context-free (proof is in Appendix), which is a contradiction. Thus, $C_\theta(L)$ is not context-free.
\end{proof}

{We, furthermore, consider the closure property of the family of context-sensitive languages and show that for a given context sensitive language $L_1$, $C_\theta(L_1)$ is also context sensitive by constructing a linear bounded automaton which accepts $C_\theta(L_1)$. In the subsequent discussion, we will establish this result for the alphabet $\Sigma=\{a, b\}$. }

{\begin{theorem}\label{csl091}
If $L_1 \subseteq \{a, b\}^+$ is context-sensitive, then $ C_\theta(L_1)$ is also context-sensitive.
\end{theorem}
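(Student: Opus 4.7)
The plan is to reformulate membership in $C_\theta(L_1)$ in a way that is directly checkable by a linear bounded automaton: by the definition of the $\theta$-conjugate operation, $w' \in C_\theta(L_1)$ if and only if there exists a factorization $w' = \alpha\beta$ with $\beta\theta(\alpha) \in L_1$ (set $\alpha = \theta(v)$, $\beta = u$ in $uv \in L_1$ and use that $\theta$ is an involution). Since $L_1$ is context-sensitive, fix a linear bounded automaton $M_1$ accepting it. From $M_1$ I would build a linear bounded automaton $M$ that, on input $w'$, nondeterministically guesses a split position $i$ with $0 \le i \le |w'|$, writes the rearranged word $\beta\theta(\alpha)$ on a work track (where $\alpha = w'[1..i]$ and $\beta = w'[i+1..|w'|]$), and then simulates $M_1$ on that track.

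To stay within linear space I would use a multi-track tape. Track~1 stores the original input together with its end markers and is used read-only after initialization; Track~2 is a work region into which the rearranged word of length $|w'|$ is written; a small auxiliary track carries the guessed split marker together with read/write pointers. The construction of Track~2 proceeds in two sweeps. First, the symbols of $\beta$ are copied, in order, from positions $i+1, \ldots, |w'|$ of Track~1 into positions $1, \ldots, |\beta|$ of Track~2. Second, the symbols of $\alpha$ are scanned on Track~1 from right to left, each replaced by its $\theta$-image via the fixed rules of $\theta$ on $\{a,b\}$, and written into positions $|\beta|+1, \ldots, |w'|$ of Track~2. Since $|\beta\theta(\alpha)| = |w'|$, every write stays inside the original input region, so no tape expansion occurs. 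Once Track~2 holds $\beta\theta(\alpha)$, $M$ returns its head to the left end marker and simulates $M_1$ on Track~2; $M$ accepts iff $M_1$ accepts.

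Correctness follows directly from the reformulation: $M$ has an accepting computation on $w'$ exactly when some split $w' = \alpha\beta$ yields $\beta\theta(\alpha) \in L(M_1) = L_1$, which is precisely the condition for $w' \in C_\theta(L_1)$. Both the rearrangement phase and the simulation of $M_1$ are linearly bounded, hence $M$ is an LBA and $C_\theta(L_1)$ is context-sensitive. The main technical obstacle is the in-place construction of $\beta\theta(\alpha)$ on Track~2; conceptually it is straightforward, but it requires carefully coordinated pointers (one for the cell being read on Track~1, one for the cell being written on Track~2) together with a subroutine that applies $\theta$ letter-by-letter. I would describe this bookkeeping in the multi-track model rather than writing out explicit state-transition tables, and I would remark that the alphabet plays no essential role in the argument, so the construction extends verbatim to any finite $\Sigma$ once $\theta$ has been fixed on it.
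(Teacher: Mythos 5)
Your proposal is correct and follows essentially the same strategy as the paper's proof: characterize membership in $C_\theta(L_1)$ via a factorization of the input, have a linear bounded automaton nondeterministically guess the split point, rewrite the input in place (in linear space) into the word whose membership in $L_1$ certifies the original, and then simulate the LBA for $L_1$. The only difference is cosmetic: the paper first passes to $L_1^R$ and a morphic involution $\psi$ so that the rearrangement takes the form $xy \mapsto \psi(x)y^R$ (checked against $L_1^R$), whereas you build $\beta\theta(\alpha)$ on a second track and check against $L_1$ directly; both are standard LBA bookkeeping.
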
}
{\begin{proof} 
Let $L_1 \subseteq \{a, b\}^+$ be a context-sensitive language.  Since the family of context-sensitive languages is closed under reversal \cite{rozen1997}, $L_1^R$ is also context-sensitive.
For $\Sigma=\{a, b\}$, let $\psi : \Sigma^* \rightarrow \Sigma^*$ be a morphic involution such that $\psi(c)=\theta(c)$ for all $c\in \Sigma$.
Consider $$L'=\bigcup_{uv \in L_1^R} \{ \psi(u)v^R\; \}.$$
Then, by Lemma \ref{mn1},
$$ L'=\bigcup_{uv \in L_1^R} \{\psi(u)v^R\; \}=  \bigcup_{v^Ru^R \in L_1} \{ \theta(u^R)v^R \;\} = C_\theta(L_1).$$
Now, as $L_1^R $ is a context-sensitive language, there exists a linear bounded automaton 
$M= (Q', \Sigma \cup \{<, >\}, \Gamma, \delta, q'_0, <, >, B, F)$ that accepts $L_1^R$ (\cite{linz2012introduction}) and uses $(k+2)\cdot |z|+2$ tape cells for any $z \in \Sigma^+$ with the initial configuration $q_0'< \underset{(( |z|-1) \text{ no. of blank cells})}{B B \cdots B}  z  \underset{((k\cdot |z|+1) \text{ no. of blank cells})}{B B \cdots B}  >$ where $k$ is a constant. 

Using $M$, we now construct a linear bounded automaton $M'$ which accepts $L'$.
Let $M'= (Q'', \Sigma  \cup \{<, >\}, \Gamma', \delta', q_0,  <, >, B, F)$ where $Q''=Q' \cup \{q_i~|~ 0 \leq i \leq 15\}$, $\Gamma' = \Gamma \cup \{ \#, X \}$, $\delta'(q, c)=\delta(q, c)$ for all $q \in Q'$, $c \in \Gamma$.
For $M'$, we will use $(k+2)\cdot |w| +2$ tape cells for each $w\in \Sigma^*$. Also, for any $w \in \Sigma^*$, we consider the initial configuration of $M'$ as $q_0< \underset{( |w| \text{ no. of blank cells})}{B B \cdots B}  w  \underset{(k\cdot |w| \text{ no. of blank cells})}{B B \cdots B}  >$. 
In the  transition diagram (Figure \ref{LBA1}), we have discussed the transition rules of $M'$ in details. Here we have assumed that $q'_0=q_{15}$. Also, in Figure \ref{LBA1}, the label $a_1|b_1, R$ (resp. $L$ or $N$) on the arc from state  $q_i$ to state $q_j$ signifies that when the machine is in state $q_i$ with the head reading $a_1$, the machine goes to state $q_j$, writes $b_1$, and moves the head to the right (resp. left or stay in its current position).

Suppose $\alpha \in \Sigma^+$ is an input to $M'$.  Then, for every factorization $xy$ of $\alpha$ where $x, y \in \Sigma^*$, the linear bounded automaton $M'$ computes the following:
$q_0< \underset{( |\alpha| \text{ no. of blank cells})}{B B \cdots B}  xy  \underset{(k\cdot |\alpha| \text{ no. of blank cells})}{B B \cdots B}  > {\vdash}^*_{M'}~ q_{15}< \underset{( |\alpha| -1 \text{ no. of blank cells})}{B B \cdots B}  \psi(x)y^R  \underset{(k\cdot |\alpha| +1 \text{ no. of blank cells})}{B B \cdots B}  >$, which is the initial configuration of $M$ for  $\psi(x)y^R$.
 \begin{figure}
    \centering
    \includegraphics[width=.7\textwidth]{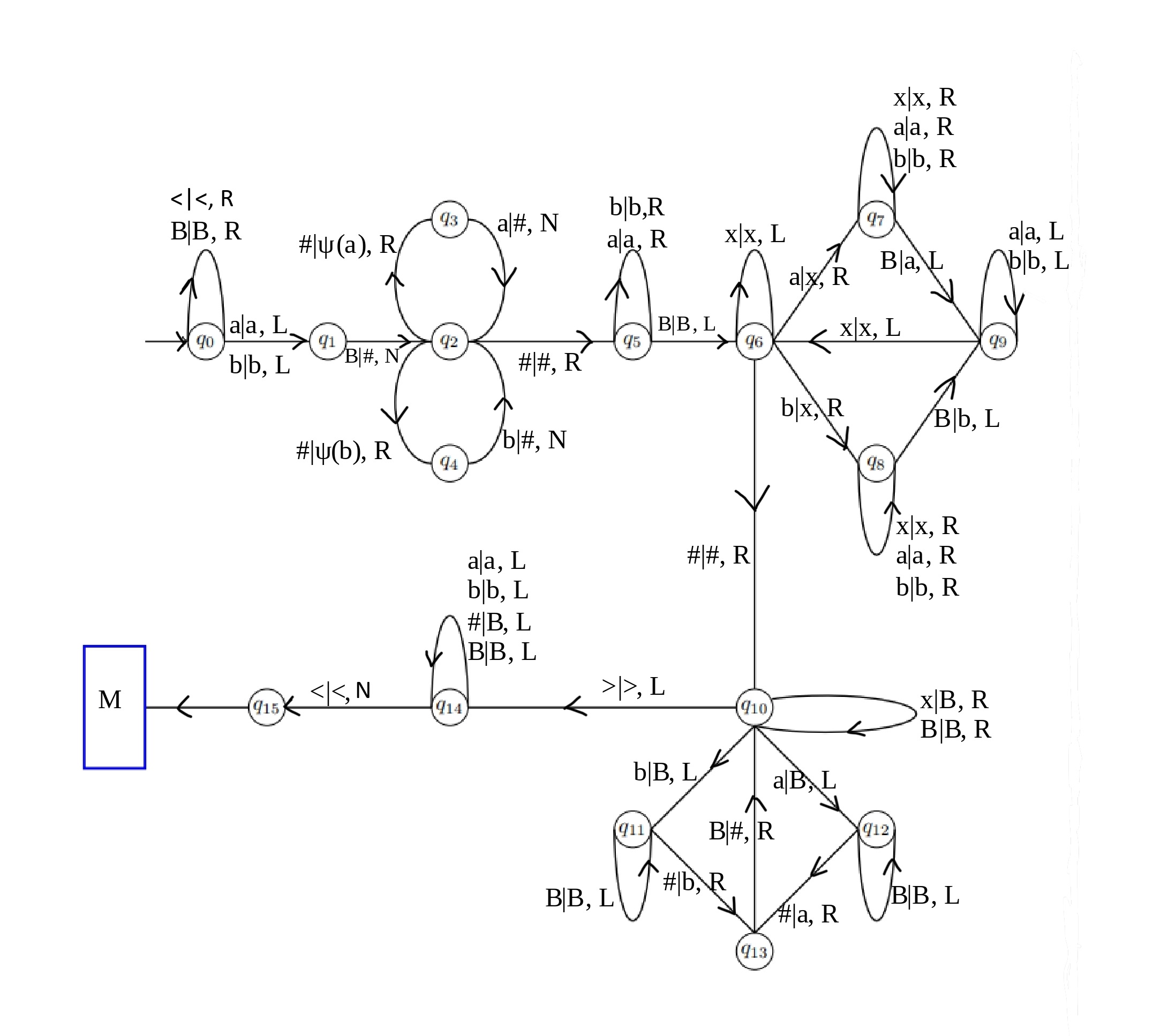}
    \caption{ Transition diagram for $M'$ }
    \label{LBA1}
\end{figure} 
 If, for at least one such factorization $x_1y_1$ of $\alpha$,  $\psi(x_1)y_1^R$ is accepted by $M$, then $M'$ accepts $\alpha$. Conversely, if none of these factorizations lead to acceptance by $M$, then $M'$ rejects $\alpha$. 
 
Now, we show that our machine $M'$ accepts only elements of $L'$. Consider $\beta \in L'$. Then, for some $u_1 v_1 \in L_1^R$, $\beta = \psi(u_1) v_1^R$. If $\beta$ is an input to $M'$, then as $\psi$ is an involution, $M'$ computes the following: \\
$q_0< {B B \cdots B} \psi(u_1) v_1^R {B B \cdots B}  > {\vdash}^*_{M'}~ q_{15}< {B B \cdots B}  u_1v_1  {B B \cdots B}  >$.
Since, $u_1 v_1 \in L_1^R$, $M$ accepts $u_1 v_1$. Thus, $M'$ accepts $\beta$.

Now, consider $\gamma \notin L' $ be an input in $M'$. Then,  for every factorization $\gamma_1 \gamma_2$ of $\gamma$ where $\gamma_1, \gamma_2 \in \Sigma^*$, the linear bounded automaton $M'$ computes the following:\\
$q_0< {B B \cdots B} \gamma_1 \gamma_2 {B B \cdots B}  > {\vdash}^*_{M'}~ q_{15}< {B B \cdots B}  \psi(\gamma_1)\gamma_2^R  {B B \cdots B}  >$.
If, for at least one such factorization $\gamma_1' 
 \gamma_2'$ of $\gamma$, the input $\psi(\gamma_1'){\gamma_2'}^R$ is accepted by $M$, then $\psi(\gamma_1'){\gamma_2'}^R \in L_1^R$. This implies $\gamma_1' \gamma_2' \in L'$, which is a contradiction as $\gamma = \gamma_1' \gamma_2' \notin L' $.
 Thus,  for every factorization $\gamma_1 \gamma_2$ of $\gamma$ where $\gamma_1, \gamma_2 \in \Sigma^*$, the linear bounded automaton $M$ rejects $\psi(\gamma_1)\gamma_2^R$.  Therefore, $M'$ rejects $\gamma$.
\end{proof}

Similar to Theorem \ref{csl091}, we can prove the following result for a context-sensitive language $L$ over an arbitrary alphabet $\Sigma$.

\begin{theorem}\label{csl15643}
If $L \subseteq \Sigma^+$ is context-sensitive language, then $ C_\theta(L)$ is also context-sensitive.
\end{theorem}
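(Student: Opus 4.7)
The plan is to follow the construction of Theorem~\ref{csl091} essentially verbatim, replacing the binary alphabet $\{a,b\}$ by an arbitrary finite alphabet $\Sigma=\{c_1,\ldots,c_m\}$. The conceptual skeleton of that proof is alphabet-agnostic: since context-sensitive languages are closed under reversal, $L^R$ is context-sensitive and hence accepted by some linear bounded automaton $M$; letting $\psi$ be the morphic involution on $\Sigma^*$ with $\psi(c)=\theta(c)$ for every $c\in\Sigma$, Lemma~\ref{mn1} gives
$$ C_\theta(L) \;=\; \bigcup_{uv \in L^R} \{\psi(u)\,v^R\}, $$
exactly as in the binary case. It therefore suffices to exhibit a linear bounded automaton $M'$ that, on any input $\alpha$, nondeterministically guesses a factorization $\alpha=xy$, rewrites the tape content to $\psi(x)y^R$, and then simulates $M$ on this rewritten string.

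The construction of $M'$ will proceed in three phases, inheriting from Theorem~\ref{csl091} but with per-letter transitions indexed by $\Sigma$ rather than by $\{a,b\}$. First, $M'$ uses the same workspace budget $(k+2)|\alpha|+2$ as in the binary case, placing $\alpha$ in the middle block and reserving blank cells on either side. Second, for each letter $c_i\in\Sigma$ the controller of $M'$ receives a copy of each transition in Figure~\ref{LBA1} that was previously specific to $a$ or $b$: these transitions rewrite $c_i$ to $\psi(c_i)$ while sweeping over the $x$-portion of the tape and transport $c_i$ into the reversed block while sweeping over the $y$-portion. Because $\psi$ is a morphism acting letter-by-letter and reversal is likewise letter-by-letter (with the finite control temporarily storing the single symbol in transit), the transition table remains finite and the workspace bound is preserved. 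Third, once the tape contains $\psi(x)y^R$ between the end markers, control passes to the initial state of $M$, and $M'$ accepts precisely when $M$ does.

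Correctness will be a direct adaptation of the binary argument: if $\alpha\in C_\theta(L)$ then $\alpha=\psi(u_1)v_1^R$ for some $u_1v_1\in L^R$, and the nondeterministic branch that guesses the split $x=\psi(u_1)$, $y=v_1^R$ rewrites the tape to $u_1v_1$ (using that $\psi^2$ is the identity), on which $M$ accepts; conversely, any accepting run of $M'$ exposes such a factorization, which places $\alpha$ in $C_\theta(L)$. The main obstacle is genuinely bookkeeping rather than mathematics: one must verify that the interleaved letter-by-letter application of $\psi$ on the $x$-block and the reversal of the $y$-block can be carried out while storing at most a constant amount of information in the finite control, so that the resulting transition table is finite and the whole computation stays within $(k+2)|\alpha|+2$ cells. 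This is handled by the same mark-and-shuttle discipline employed for $\{a,b\}$ in Theorem~\ref{csl091}, with markers on the tape tracking the progress of the rewrite and the state storing at most one symbol of $\Sigma$ at a time; no new ideas are required, and only the size of the transition table grows, by a factor on the order of $|\Sigma|$.
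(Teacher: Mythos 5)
Your proposal is correct and matches the paper's intent exactly: the paper itself proves Theorem~\ref{csl15643} only by remarking that the construction of Theorem~\ref{csl091} carries over to an arbitrary alphabet, which is precisely the generalization you carry out (duplicating the letter-specific transitions of $M'$ for each $c\in\Sigma$, preserving the $(k+2)|\alpha|+2$ workspace bound, and reusing the identity $C_\theta(L)=\bigcup_{uv\in L^R}\{\psi(u)v^R\}$ via Lemma~\ref{mn1}). No substantive difference from the paper's approach.
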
}

{We now show that if $L$ is a recursively enumerable language, then $C_\theta(L)$ is also recursively enumerable by constructing an unrestricted grammar that generates $C_\theta(L)$. This construction is simple but lengthy. So, we include this construction in the Appendix.
\begin{theorem}\label{csl10089}
 If $L$ is a recursively enumerable language, then $ C_\theta(L)$ is also recursively enumerable. 
\end{theorem}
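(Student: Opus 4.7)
The plan is to construct an unrestricted grammar $G'$ generating $C_\theta(L)$, given an unrestricted grammar $G = (V, T, P, S)$ for $L$. Equivalently one could describe a non-deterministic Turing machine that guesses a factorization $z = ab$ of its input, overwrites the tape with $b\,\theta(a)$, and simulates a Turing machine for $L$; both routes are valid, but I follow the grammar approach in line with the rest of the paper.

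The construction mirrors the defining equation $C_\theta(w) = \{\theta(y)x : w = xy\}$ in three conceptual phases. I introduce fresh symbols: end-markers $\#_L, \#_R$; a sliding split-marker $\$$; marked copies $\overline{a}$ of each terminal $a \in T$, representing in-transit letters that have already been $\theta$-transformed; and a family of courier non-terminals $K_b$ for $b \in T$. Phase one uses $S' \to \#_L S \#_R$ together with every rule of $P$ to derive $\#_L w \#_R$ for arbitrary $w \in L$. Phase two applies $\#_L \to \#_L \$$ followed by $\$ a \to a \$$ for each $a \in T$, sliding $\$$ through $w$ and halting non-deterministically to yield $\#_L x \$ y \#_R$ for an arbitrary factorization $w = xy$. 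Phase three repeatedly plucks the rightmost letter of $y$ via $a \#_R \to K_{\theta(a)} \#_R$; the launched courier $K_{\theta(a)}$ then slides leftward through $c K_b \to K_b c$ for $c \in T \cup \{\$\}$ and docks through $\overline{d} K_b \to \overline{d}\,\overline{b}$ or $\#_L K_b \to \#_L \overline{b}$. Because $\theta$ is antimorphic, processing $y_n, y_{n-1}, \ldots, y_1$ in this order stacks them as $\overline{\theta(y_n)} \cdots \overline{\theta(y_1)}$ immediately to the right of $\#_L$. Finally a cleanup pass driven by $\$ \#_R \to \rho$, $a \rho \to \rho a$, $\overline{a} \rho \to \rho a$, and $\#_L \rho \to \lambda$ erases the auxiliaries and produces the terminal string $\theta(y)x$.

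The main obstacle is phase synchronization: ensuring that rules of one phase do not fire prematurely or after the next phase begins. I would address this by gating transitions with the auxiliary symbols themselves. The sliding rule $\$ a \to a \$$ is restricted to $a \in T$, so $\$$ cannot advance past any remaining non-terminal of $V$, which forces phase one to complete before phase two can progress past unresolved non-terminals. The launching rule $a \#_R \to K_{\theta(a)} \#_R$ only fires when $\#_R$ is right-adjacent to a terminal, so it cannot misfire during generation. The docking rule prevents couriers from passing each other, guaranteeing the stacked order $\overline{\theta(y_n)} \cdots \overline{\theta(y_1)}$. The cleanup rule $\$ \#_R \to \rho$ requires $\$$ and $\#_R$ to be adjacent, which forces every letter of $y$ to have been transported first; and the relabeling $\overline{a} \to a$ is performed only inside the cleanup pass by $\overline{a} \rho \to \rho a$, so stray $\overline{a}$'s cannot escape as terminals. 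With these guards in place, correctness reduces to a direct two-way inclusion against the phase decomposition: any terminating derivation of $G'$ must unwind into the three phases and therefore exhibits some factorization $xy \in L$ with final output $\theta(y)x$, while conversely each $\theta(y)x \in C_\theta(L)$ is reached by the corresponding phase-wise moves. Hence $G'$ generates $C_\theta(L)$, and $C_\theta(L)$ is recursively enumerable.
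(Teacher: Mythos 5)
Your construction is correct, and at the top level it follows the same route as the paper: build an unrestricted grammar that simulates a grammar for $L$ between end-markers, nondeterministically fixes a split point, and then rearranges the two factors. The implementations, however, differ genuinely. The paper first passes to $L^R$ and a letter-wise morphic involution $\psi$ (so that $C_\theta(L)=\bigcup_{uv\in L^R}\psi(u)v^R$), and then reverses the suffix \emph{in place} using a rather elaborate family of superscripted copies of each terminal; you work with $L$ directly and physically transport the letters of the suffix $y$, rightmost first, from $\#_R$ to the left end, applying $\theta$ at launch time so that antimorphy stacks them as $\theta(y)$ automatically. Your version is more direct and avoids the detour through $L^R$. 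Two points to make explicit in a full write-up. First, the phases can in fact interleave (rules of $P$ may still fire after $\$$ has been introduced, and $\#_L\to\#_L\$$ may fire repeatedly), so the soundness argument should not claim that each phase completes before the next begins; rather, every derivation that terminates in a terminal string is forced by the guards to factor as the three phases, while all other derivations block on a surviving auxiliary symbol and contribute nothing to $L(G')$. Second, the set of transported letters is exactly the suffix lying to the right of the \emph{final} resting position of $\$$, which may differ from where $\$$ sat when the first courier was launched; this is harmless, since it merely realizes a different factorization of the same word of $L$, but it deserves a sentence. With those clarifications the two-way inclusion goes through and the argument is complete.
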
}

\section{Iterated $\theta$-conjugate of Words And Languages}\label{sec5}
In this section, we first define the iterated $\theta$-conjugate of a word and language. Then we characterize the elements of iterated $\theta$-conjugate of a word. We also count the number of elements in the iterated $\theta$-conjugate of a word. Then we study the closure property of iterated $\theta$-conjugate operation.
Throughout this section, we use $|u|_{a, \theta(a)}$  to represent the combined count of letters $a$ and $\theta(a)$ in the string $u \in \Sigma^*$, i.e., 
\begin{equation*}
|u|_{a, \theta(a)}=
    \begin{cases}
        |u|_a+|u|_{\theta(a)} & \text{if } \theta(a) \neq a\\
        |u|_a & \text{if } \theta(a)=a.
    \end{cases}
\end{equation*}

\begin{definition} 
 For a language $L$, we define ${C_\theta}^n(L)$ recursively as ${C_\theta}^{0}(L)= L $ and ${C_\theta}^n(L) = {C_\theta}^{n-1}(C_\theta(L))$. For a word $w$, we define
 ${C_\theta}^n(w)= {C_\theta}^n(\{w\})$, $n \geq 0$.
 
Finally, we define the iterated $\theta$-conjugate of a word $w$ and a language $L$ as ${C_\theta}^*(w)=\bigcup\limits_{n\geq 0} {C_\theta}^n(w)$ and ${C_\theta}^*(L)=\bigcup\limits_{n\geq 0} {C_\theta}^n(L)$, respectively.
\end{definition}

As $w\in C_\theta(w)$ for any $w \in \Sigma^*$, we have the following.
\begin{lemma}\label{yy2}
For $w\in \Sigma^*$ and $m, k\geq 0$, ${C_\theta}^m(w) \subseteq {C_\theta}^{m+k}(w)$.
\end{lemma}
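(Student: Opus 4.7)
The plan is to prove this by induction on $k$, leveraging two elementary facts about the operation $C_\theta$. The statement is essentially a monotonicity/reflexivity result, so there is no genuine obstacle; the work is just assembling the right ingredients from the definitions.

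First, I would record two basic properties. The first is \emph{monotonicity}: if $L_1 \subseteq L_2$ then $C_\theta(L_1) \subseteq C_\theta(L_2)$, which is immediate from Definition \ref{q1} since $C_\theta(L) = \bigcup_{w \in L} C_\theta(w)$. The second is \emph{reflexivity}: for every language $L \subseteq \Sigma^*$, $L \subseteq C_\theta(L)$. This is because for every $w \in L$, taking $v = \lambda$ in the equation $uv = \theta(v)w$ yields $u = w$, so $w \in C_\theta(w) \subseteq C_\theta(L)$. This is the observation highlighted just before the lemma statement.

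Next, I would prove by induction on $k \geq 0$ that $C_\theta^m(w) \subseteq C_\theta^{m+k}(w)$. The base case $k = 0$ is the trivial equality $C_\theta^m(w) = C_\theta^m(w)$. For the inductive step, assume $C_\theta^m(w) \subseteq C_\theta^{m+k}(w)$. Applying monotonicity to this inclusion gives
\[
C_\theta\bigl(C_\theta^m(w)\bigr) \subseteq C_\theta\bigl(C_\theta^{m+k}(w)\bigr),
\]
and by the recursive definition of iterated $\theta$-conjugates these are precisely $C_\theta^{m+1}(w)$ and $C_\theta^{m+k+1}(w)$, respectively. Combining with reflexivity applied to $L = C_\theta^m(w)$, namely $C_\theta^m(w) \subseteq C_\theta^{m+1}(w)$, and chaining inclusions, we obtain $C_\theta^m(w) \subseteq C_\theta^{m+k+1}(w)$, completing the induction.

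Since both monotonicity and reflexivity are essentially by definition, the argument is entirely routine; the only thing to be careful about is keeping the indexing straight in the inductive step and invoking the recursive definition ${C_\theta}^n(L) = {C_\theta}^{n-1}(C_\theta(L))$ (equivalently, $C_\theta \circ C_\theta^{n-1} = C_\theta^n$) at the right moment.
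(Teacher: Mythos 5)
Your proof is correct and follows exactly the route the paper intends: the paper states this lemma as an immediate observation from $w \in C_\theta(w)$, and your argument (reflexivity plus monotonicity of $C_\theta$, assembled by induction on $k$) is just the careful write-up of that one-line justification. The only point worth noting is that the paper's recursion is ${C_\theta}^n(w)={C_\theta}^{n-1}(C_\theta(w))$ rather than $C_\theta({C_\theta}^{n-1}(w))$, but as you observe these agree, so nothing is lost.
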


Since the conjugacy relation is an equivalence relation \cite{Lothaire1997}, $u \in C(w)$ implies $C(u)=C(w)$. Then, $C(C(w))=C^2(w)=C(w)$, $C(C^2(w))=C^3(w)=C(w), \ldots, C(C^{i-1}(w))=C^i(w)=C(w)$ for all $i \geq 1$. But this is not true for $\theta$-conjugates of a word. For an example, $ba \in {C_\theta}^2(ab)$ but $ba \notin C_\theta(ab)$ where $\theta(a)=b$. 
We now show that for a given $w \in \Sigma^n$, there exists an integer $m$ depending on $n$ such that 
${C_\theta}^m(w)$ contains  all words $u$ such that $|u|_{a, \theta(a)}$ is equal to $|w|_{a, \theta(a)}$ for all $a \in \Sigma$. To prove this, we need the following results.

From the definition of $\theta$-conjugates, it is clear that if $u$ is an element of ${C_\theta}^m(w)$ for some $m\geq 1$, then for all $a\in \Sigma$, the combined count of letters $a$ and $\theta(a)$ in both $u$ and $w$ must be equal.
\begin{lemma}\label{x11}
If {$u \in {C_\theta}^m(w) $ for any $m \geq 1$,}  then $|u|_{a, \theta(a)} =|w|_{a, \theta(a)}$ for all $a\in \Sigma$.
\end{lemma}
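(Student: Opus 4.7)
The plan is to prove the lemma by induction on $m$, with the base case $m=1$ being the heart of the argument, and the inductive step following immediately from the chain unfolding of the iterated operator.

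First I would establish a preliminary observation that is the true workhorse: for any antimorphic involution $\theta$ and any word $y \in \Sigma^*$, one has $|\theta(y)|_{a,\theta(a)} = |y|_{a,\theta(a)}$ for every $a \in \Sigma$. This holds because $\theta$ permutes the letter positions (it reverses the word) and substitutes each letter $c$ by $\theta(c)$; hence $|\theta(y)|_c = |y|_{\theta(c)}$ for every $c \in \Sigma$. Splitting into the cases $\theta(a)=a$ and $\theta(a) \neq a$ (matching the two branches in the definition of $|\cdot|_{a,\theta(a)}$) yields the identity in both situations.

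For the base case $m=1$, I would take any $u \in C_\theta(w)$ and use the structural description recalled at the start of Section~\ref{sec3}: $u = \theta(y)x$ for some factorization $w = xy$ with $x,y \in \Sigma^*$. Then for each $a \in \Sigma$,
\begin{equation*}
|u|_{a,\theta(a)} = |\theta(y)|_{a,\theta(a)} + |x|_{a,\theta(a)} = |y|_{a,\theta(a)} + |x|_{a,\theta(a)} = |w|_{a,\theta(a)},
\end{equation*}
where the middle equality is the preliminary observation and the outer equalities are additivity of letter counts under concatenation.

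For the inductive step, assume the statement holds for $m-1 \geq 1$, and let $u \in {C_\theta}^m(w) = {C_\theta}^{m-1}(C_\theta(w))$. Then there exists $v \in C_\theta(w)$ with $u \in {C_\theta}^{m-1}(v)$. The induction hypothesis applied to $u$ and $v$ gives $|u|_{a,\theta(a)} = |v|_{a,\theta(a)}$, and the base case applied to $v$ and $w$ gives $|v|_{a,\theta(a)} = |w|_{a,\theta(a)}$. Chaining these equalities closes the induction. I do not anticipate any real obstacle here; the only mild subtlety is handling the two cases in the piecewise definition of $|\cdot|_{a,\theta(a)}$ uniformly, which is precisely what the preliminary observation is designed to do.
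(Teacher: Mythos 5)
Your proof is correct; the paper states this lemma without proof, treating it as immediate from the definition, and your induction on $m$ together with the observation $|\theta(y)|_c = |y|_{\theta(c)}$ is exactly the routine verification the authors leave implicit.
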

Now, we prove the following.
\begin{lemma}\label{zxz1}
Consider  $u, w\in \Sigma^+$, $z \in \Sigma^*$, $|u|\geq 3$, and $a \in \Sigma$. 
 If $u z \in {C_\theta}^m(w)$ where $m\geq 1$ and  $|u|_{a, \theta(a)}\geq 1$, then $u' z a \in {C_\theta}^{m+4}(w)$ for some $u'\in \Sigma^+$.
\end{lemma}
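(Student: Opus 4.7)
The plan is to split into two cases according to whether $a$ itself occurs in $u$ or only its image $\theta(a)$ does; the hypothesis $|u|_{a,\theta(a)}\geq 1$ guarantees that one of these holds. In each case I will exhibit an explicit chain $w_0=uz,\,w_1,\,w_2,\,w_3,\,w_4$ with $w_{i+1}\in C_\theta(w_i)$. Since $w_0\in C_\theta^m(w)$, iterating the definition of $C_\theta^{n}$ forces $w_4\in C_\theta^{m+4}(w)$, and the construction will make the form $w_4=u'za$ with $u'\in\Sigma^+$ transparent. The only tools needed throughout are the identity $\theta(xy)=\theta(y)\theta(x)$, the relation $\theta^2=\mathrm{id}$, and the fact that every factorisation $w=\alpha\beta$ yields $\theta(\beta)\alpha\in C_\theta(w)$.

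In the first case, where $a$ occurs in $u$, I write $u=u_1 a u_2$ with $u_1,u_2\in\Sigma^*$ and take, in order: $w_1=\theta(z)\,u_1 a u_2$ from the factorisation $(u_1 a u_2)(z)$; $w_2=\theta(u_2)\theta(a)\theta(z)\,u_1$ from $(\theta(z)u_1)(a u_2)$; $w_3=\theta(w_2)=\theta(u_1)\,z\,a\,u_2$ from the trivial factorisation $(\lambda)(w_2)$; and finally $w_4=\theta(u_2)\theta(u_1)\,z\,a$ from $(\theta(u_1)\,z\,a)(u_2)$. This gives $w_4=u'za$ with $u'=\theta(u_1 u_2)$.

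In the second case, where $a$ does not occur in $u$ (so $a\neq\theta(a)$ and $u=u_1\theta(a)u_2$), the first move is a full reversal that turns the available $\theta(a)$ into an $a$: take $w_1=\theta(w_0)=\theta(z)\theta(u_2)\,a\,\theta(u_1)$, then $w_2=u_1\theta(a)\theta(z)\theta(u_2)$ from $(\theta(z)\theta(u_2))(a\,\theta(u_1))$, $w_3=\theta(w_2)=u_2\,z\,a\,\theta(u_1)$, and $w_4=u_1 u_2\,z\,a$ from $(u_2\,z\,a)(\theta(u_1))$. This gives $u'=u_1 u_2$. In either case the hypothesis $|u|\geq 3$ forces $|u_1|+|u_2|\geq 2$, so $u'\in\Sigma^+$ as required.

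The principal difficulty is combinatorial: designing, for each case, a four-step sequence of factorisations that relocates the original tail $z$ back to the tail of $w_4$ as an unflipped copy immediately followed by $a$ rather than $\theta(a)$ or a residual piece of $u$. The second case carries an extra parity obstacle because the letter available in $u$ is $\theta(a)$ instead of $a$, which is why its sequence must begin with a full $\theta$-reversal. Once the right factorisations are chosen, every intermediate equality reduces to a one-line application of the two identities above.
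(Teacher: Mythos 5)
Your proof is correct and follows essentially the same strategy as the paper's: split on whether $a$ or $\theta(a)$ occurs in $u$ and exhibit an explicit chain of four $\theta$-conjugation moves ending in a word of the form $u'za$. The only cosmetic differences are that your intermediate factorisations (and hence the resulting $u'$) differ slightly, and that in the first case the paper reaches the target in three genuine steps and pads the fourth with the monotonicity lemma ${C_\theta}^m(w)\subseteq{C_\theta}^{m+k}(w)$, whereas you use four genuine steps in both cases.
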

\begin{proof}
Let   $u\in \Sigma^+$ and $u z \in {C_\theta}^m(w)$ where $|u|_{a, \theta(a)}\geq 1$. We now have the following cases. 
\begin{itemize}
    \item If $|u|_{a}\geq 1$, then for some $\alpha, \beta \in \Sigma^*$, $u=\alpha a \beta$ and
\begin{align*}
    \alpha a \beta z \in {C_\theta}^m(w)
   & \implies \theta(z) \theta(\beta) \alpha a \in {C_\theta}^{m+1}(w)\\
    &\implies \theta(a) \theta(z) \theta(\beta) \alpha \in {C_\theta}^{m+2}(w) \\
    &\implies \theta(\alpha) \beta z a \in {C_\theta}^{m+3}(w)\\
    &\implies \theta(\alpha) \beta z a \in {C_\theta}^{m+4}(w) & (\text{by Lemma } \ref{yy2})
\end{align*}
    \item If $|u|_{\theta(a)}\geq 1$,  then for some $\alpha', \beta' \in \Sigma^*$, $u=\alpha' \theta(a) \beta'$ and
\begin{align*}
    \alpha' \theta(a) \beta' z \in {C_\theta}^m(w)
   & \implies \theta(z) \theta(\beta') a \alpha'  \in {C_\theta}^{m+1}(w)\\
    &\implies \theta(\alpha') \theta(a) \theta(z) \theta(\beta') \in {C_\theta}^{m+2}(w) \\
    &\implies \beta' z a \alpha' \in {C_\theta}^{m+3}(w)\\
    &\implies \theta(\alpha') \beta' z a \in {C_\theta}^{m+4}(w).
\end{align*}
\end{itemize}
Hence, {we have the desired result}.
\end{proof}

For any $v \in \Sigma^2$ and $a, b \in \Sigma$, if $|v|_{a, \theta(a)}\geq 1$ and $|v|_{b,\theta(b)}\geq 1$, then the following can be shown using a similar proof technique of Lemma \ref{zxz1}.

\begin{lemma}\label{ityr5r45r}
Consider $v, z, w \in \Sigma^+$, $|v|=2$ and $a, b \in \Sigma$.
If $vz \in {C_\theta}^m(w)$ where $m\geq 1$, $|v|_{a, \theta(a)}\geq 1$ and $|v|_{b,\theta(b)}\geq 1$, then $ z ab \in {C_\theta}^{m+4}(w)$.
\end{lemma}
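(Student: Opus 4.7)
The plan is to adapt the case-analysis technique used in the proof of Lemma \ref{zxz1}, but now tracking two designated letters instead of one. Since $|v|=2$ and $v$ contains at least one letter from $\{a,\theta(a)\}$ and at least one from $\{b,\theta(b)\}$, the word $v$ falls into (at most) eight possible forms, depending on the order of the two letters and on whether each appears as itself or as its $\theta$-image. For each form I would construct an explicit sequence of at most four applications of the $C_\theta$ operation, choosing at each step a factorization $xy$ of the current word and replacing it by $\theta(y)x$, so that the two designated letters are rotated to the rightmost two positions while the prefix involving $\theta(z)$ is flipped back to $z$ via $\theta^2=\mathrm{id}$, and simultaneously forced into $a$ and $b$ (in that order) via the antimorphic property of $\theta$. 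Any chain of length less than four is padded up to $m+4$ by Lemma \ref{yy2}.

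The pattern of the construction is illustrated by the most delicate case, $v=\theta(a)b$. Here the chain
\[
\theta(a)bz \;\to\; \theta(z)\theta(b)a \;\to\; \theta(a)\theta(z)\theta(b) \;\to\; b\theta(a)\theta(z) \;\to\; zab,
\]
obtained from the successive factorizations $(\lambda)(\theta(a)bz)$, $(\theta(z)\theta(b))(a)$, $(\theta(a)\theta(z))(\theta(b))$, and $(b)(\theta(a)\theta(z))$, reaches $zab$ in exactly four steps. Easier cases such as $v=ab$ admit a three-step chain $abz \to \theta(z)ab \to \theta(b)\theta(a)\theta(z) \to zab$ via $(ab)(z)$, $(\theta(z))(ab)$, and $(\lambda)(\cdot)$, so Lemma \ref{yy2} absorbs the missing step. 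The six remaining forms ($ba$, $\theta(a)\theta(b)$, $\theta(b)a$, $b\theta(a)$, $a\theta(b)$, $\theta(b)\theta(a)$) are handled by analogous chains; in particular, each of them routes through some intermediate word of the shape $b\theta(a)\theta(z)$ or $\theta(b)\theta(a)\theta(z)$, from which a final $\theta$-rotation yields $zab$.

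The main obstacle is purely combinatorial bookkeeping: one must verify all eight cases, and for the less symmetric forms the correct factorization to choose at each intermediate step is not immediately obvious. The antimorphic identity $\theta(uv)=\theta(v)\theta(u)$ and the involutive identity $\theta(\theta(u))=u$ are invoked at almost every step, both to simplify the intermediate words and to ensure that the final step produces the desired suffix $ab$ rather than $ba$ or a variant still containing $\theta$-images. Once all eight chains are written out, the conclusion $zab \in C_\theta^{m+4}(w)$ follows uniformly.
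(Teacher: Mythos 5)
Your proposal is correct and follows exactly the route the paper intends: the paper itself only states that Lemma \ref{ityr5r45r} "can be demonstrated using the similar proof technique of Lemma \ref{zxz1}," and your case analysis on the eight possible forms of $v$, with explicit chains of at most four $\theta$-conjugations padded via Lemma \ref{yy2}, is precisely that technique carried out. Both of your worked chains (for $v=\theta(a)b$ and $v=ab$) check out step by step, and the remaining six cases do admit analogous chains through the hub words $b\theta(a)\theta(z)$ or $\theta(b)\theta(a)\theta(z)$ as you claim.
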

With the help of Lemmas \ref{x11}, \ref{zxz1}, and \ref{ityr5r45r},  we now prove that if $w$ is a word of length at least three, then for all integer $m \geq 4|w|-6$,  ${C_\theta}^m(w)$ is equal to the collection of all strings $u$ such that  $|u|_{a, \theta(a)}$ is equal to $|w|_{a, \theta(a)}$ for all $ a \in \Sigma$.

\begin{proposition}\label{xt1}
If $w\in \Sigma^n$ with $n \geq 3$, then for all $m \geq 4n-6$,
${C_\theta}^{m}(w)=\{u\;|\; |u|_{a, \theta(a)}=|w|_{a, \theta(a)} \; \forall a \in \Sigma \}$.
\end{proposition}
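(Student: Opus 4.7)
The plan is to verify both inclusions. The forward inclusion is immediate from iterated application of Lemma~\ref{x11}, which preserves $|\cdot|_{a,\theta(a)}$ under each $\theta$-conjugate step. For the reverse direction, I would fix an arbitrary target $u = a_1 a_2 \cdots a_n$ with matching counts; once $u \in {C_\theta}^{4n-6}(w)$ is shown, monotonicity (Lemma~\ref{yy2}) upgrades this to $u \in {C_\theta}^m(w)$ for all $m \geq 4n-6$.

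My construction would build $u$ from left to right as a locked-in suffix. At step $i$ I maintain the invariant that some word $u_i z_i$ lies in ${C_\theta}^{m_i}(w)$ with $z_i = a_1 \cdots a_i$ and $|u_i| = n-i$. Lemma~\ref{x11} forces the count identity $|u_i|_{b,\theta(b)} = |u|_{b,\theta(b)} - |z_i|_{b,\theta(b)}$ for all $b$, which in turn guarantees that the next target letter $a_{i+1}$ (or its $\theta$-image) always still occurs in the free prefix $u_i$, so the relevant lemma can be re-applied at every step.

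The iteration has three phases. First, step one is handled by hand: since the count hypothesis forces $|w|_{a_1,\theta(a_1)} \geq 1$, a short case split on whether $a_1$ itself or only $\theta(a_1)$ occurs in $w$ shows that one or two direct $\theta$-conjugate operations suffice to produce a word of the form $u_1 a_1$, giving $m_1 \leq 2$. Next, for each $i = 1, \ldots, n-3$ I would apply Lemma~\ref{zxz1} with $a = a_{i+1}$ to extend the locked suffix by one letter at a cost of $+4$; the side condition $|u_i| = n - i \geq 3$ remains valid throughout this phase. Finally, once $|u_{n-2}| = 2$, Lemma~\ref{ityr5r45r} applied with $a = a_{n-1}$ and $b = a_n$ delivers the last two letters in one stroke at a cost of $+4$. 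Summing yields $m \leq 2 + 4(n-3) + 4 = 4n-6$.

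The main obstacle is matching the stated bound $4n-6$ rather than the naive $4n-4$ obtained by invoking Lemma~\ref{zxz1} for all $n-2$ append steps and Lemma~\ref{ityr5r45r} once. The required two-iteration saving must come from step one: there, no suffix has yet been locked in, so a single $\theta$-conjugate (or at most two, in the subcase where only $\theta(a_1)$, not $a_1$, occurs in $w$) already brings the required letter to the end, bypassing the three-step rotation that Lemma~\ref{zxz1} performs precisely in order to preserve the suffix $z$.
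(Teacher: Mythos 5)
Your proposal is correct and follows essentially the same route as the paper's proof: a one- or two-step hand-crafted first move to lock in $a_1$ (depending on whether $a_1$ or only $\theta(a_1)$ occurs in $w$), then $n-3$ applications of Lemma~\ref{zxz1} at cost $4$ each, one final application of Lemma~\ref{ityr5r45r} at cost $4$, and Lemma~\ref{yy2} to absorb the slack, giving $2+4(n-3)+4=4n-6$. Your explicit count invariant $|u_i|_{b,\theta(b)}=|u|_{b,\theta(b)}-|z_i|_{b,\theta(b)}$ guaranteeing that the next target letter remains available is a slightly more careful phrasing of the paper's corresponding remark.
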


\begin{proof}
Let $w$ be a word of length $n \geq 3$ and $T =\{u\;|\; |u|_{a, \theta(a)}=|w|_{a, \theta(a)} ~\forall a \in \Sigma \}$. {By Lemma \ref{x11}, we know that ${C_\theta}^{m_1}(w) \subseteq T$ for any $m_1 \geq 1$. Then,  ${C_\theta}^{m}(w) \subseteq T$ for all $m \geq 4n-6$.}
Now, we prove that $T \subseteq {C_\theta}^{m}(w)$ for all $m \geq 4n-6$. 
Consider $x \in T$. 
Then, as $ |x|_{a, \theta(a)}=|w|_{a, \theta(a)} \;\text{ for all } a \in \Sigma$, we have, $|x|=|w|=n\geq 3$.
Let $x=x_1 x_2 \cdots x_n$ where each $x_j \in \Sigma$ for $1\leq j \leq n$. 
Since $ |x|_{x_i, \theta(x_i)}=|w|_{x_i, \theta(x_i)} \;\text{ for each } i (1\leq i \leq n) $, it follows that $w$ must contain at least one occurrence of  $x_i$ or $\theta(x_i)$ for each  $i (1\leq i \leq n)$.
Then, for each $i$ $(1\leq i \leq n)$, $w$ can be written as $w= \alpha_i x_i \beta_i$ or $w=\alpha_i \theta(x_i) \beta_i$ for some $\alpha_i, \beta_i \in \Sigma^*$.

If $w=\alpha_1 x_1 \beta_1$, then $\theta(\beta_1) \alpha_1 x_1 \in C_\theta(w)$. Since for each $i$ $(2 \leq i \leq n-2)$, either $|\theta(\beta_1) \alpha_1|_{x_i}\geq 1$ or $|\theta(\beta_1) \alpha_1|_{\theta(x_i)}\geq 1$, by Lemma \ref{zxz1}, we have $ v x_1 x_2 \cdots x_{n-2} \in {C_\theta}^{1+4(n-3)}(w)$, i.e., $ v x_1 x_2 \cdots x_{n-2} \in {C_\theta}^{4n-11}(w)$ for some $v \in \Sigma^+$ with $|v|=2$. Now, $|v|_{x_n, \theta(x_n)}\geq 1$ and $|v|_{x_{n-1}, \theta(x_{n-1})}\geq 1$. Then, by Lemma \ref{ityr5r45r}, $x_1 x_2 \cdots x_{n-2} x_{n-1} x_n \in {C_\theta}^{4n-11+4}(w)$, i.e.,  $x_1 x_2 \cdots x_{n-2} x_{n-1} x_n \in {C_\theta}^{4n-7}(w)$.
Therefore, by Lemma \ref{yy2}, $x_1 x_2 \cdots x_{n-2} x_{n-1} x_n \in {C_\theta}^{4n-6}(w)$.

If $w=\alpha_1 \theta(x_1) \beta_1$, then $\theta(\beta_1) x_1 \alpha_1 \in C_\theta(w)$ and $\theta(\alpha_1) \theta(\beta_1) x_1 \in {C_\theta}^2(w)$. Then, similar to the above,  $x_1 x_2 \cdots x_{n-2} x_{n-1} x_n \in {C_\theta}^{2+4(n-3)+4}(w)$, i.e.,  $x_1 x_2 \cdots x_{n-2} x_{n-1} x_n \in {C_\theta}^{4n-6}(w)$. 

Therefore, $x \in {C_\theta}^{4n-6}(w)$.
Then, by Lemma \ref{yy2}, $ x \in {C_\theta}^{m}(w)$ for all $m \geq 4n-6$.
Thus, $T \subseteq {C_\theta}^{m}(w)$ for all $m \geq 4n-6$. Hence, {we have the desired result}.
\end{proof}

One can observe by direct computation that if $w  \in \Sigma$ and $w' \in \Sigma^2$, then ${C_\theta}(w)={C_\theta}^*(w)$ and ${C_\theta}^3(w')={C_\theta}^*(w')$.
For words $u$ of length at least three, we have the following result from  Proposition \ref{xt1}.
\begin{theorem}\label{vx1}
If $w\in \Sigma^n$ with $n \geq 3$, then ${C_\theta}^{4n-6}(w) = {C_\theta}^{*}(w)= \{u\;|\; |u|_{a, \theta(a)}=|w|_{a, \theta(a)} \; \forall a \in \Sigma \}$.
\end{theorem}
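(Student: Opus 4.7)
The plan is to deduce this theorem as an almost immediate consequence of Proposition \ref{xt1} together with Lemma \ref{x11}. Let me denote $T = \{u \mid |u|_{a,\theta(a)} = |w|_{a,\theta(a)} \text{ for all } a \in \Sigma\}$, so the goal is to prove ${C_\theta}^{4n-6}(w) = {C_\theta}^*(w) = T$.

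First, the equality ${C_\theta}^{4n-6}(w) = T$ is precisely the statement of Proposition \ref{xt1} specialized to $m = 4n-6$, so there is nothing further to do there. The remaining task is to show ${C_\theta}^*(w) = T$. For the inclusion $T \subseteq {C_\theta}^*(w)$, I would simply note that since ${C_\theta}^{4n-6}(w) = T$ appears as one of the sets in the union defining ${C_\theta}^*(w) = \bigcup_{m \geq 0} {C_\theta}^m(w)$, every element of $T$ automatically lies in ${C_\theta}^*(w)$.

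For the reverse inclusion ${C_\theta}^*(w) \subseteq T$, I would argue termwise in the union. For $m = 0$, we have ${C_\theta}^0(w) = \{w\} \subseteq T$ trivially, since $|w|_{a,\theta(a)} = |w|_{a,\theta(a)}$. For each $m \geq 1$, Lemma \ref{x11} tells us that any $u \in {C_\theta}^m(w)$ satisfies $|u|_{a,\theta(a)} = |w|_{a,\theta(a)}$ for every $a \in \Sigma$, so ${C_\theta}^m(w) \subseteq T$. Taking the union over all $m \geq 0$ yields ${C_\theta}^*(w) \subseteq T$, and combined with the previous inclusion gives ${C_\theta}^*(w) = T$.

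There is no real obstacle here, since the hard content of the theorem, namely the statement that iterating the $\theta$-conjugate operation $4n-6$ times already saturates at the set of letter-count-preserving words, has already been established in Proposition \ref{xt1}. The present theorem merely packages that bound together with the elementary invariant of Lemma \ref{x11} to identify the ultimate fixed set ${C_\theta}^*(w)$ and to record that the saturation happens in a finite, linear-in-$n$ number of steps.
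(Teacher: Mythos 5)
Your proof is correct and follows essentially the same route as the paper: both treat the theorem as an immediate corollary of Proposition \ref{xt1}, differing only in how the easy inclusion ${C_\theta}^{*}(w) \subseteq {C_\theta}^{4n-6}(w)$ is dispatched. The paper splits on whether $k < 4n-6$ (using the monotonicity Lemma \ref{yy2}) or $k \geq 4n-6$ (using Proposition \ref{xt1} again), whereas you route everything through the letter-count invariant of Lemma \ref{x11}; both are equally valid short arguments, with all the real content already residing in Proposition \ref{xt1}.
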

\begin{proof}
From the definition of iterated $\theta$-conjugate of $w$, it is clear that ${C_\theta}^{4n-6}(w) \subseteq {C_\theta}^{*}(w)$. Consider $x \in {C_\theta}^{*}(w)$. Then, for some integer $k$, $x \in {C_\theta}^{k}(w)$. 
If $k < 4n-6$, then by Lemma \ref{yy2}, ${C_\theta}^{k}(w) \subseteq {C_\theta}^{4n-6}(w)$. This implies $x \in  {C_\theta}^{4n-6}(w)$.
If $k \geq 4n-6$, then by Proposition \ref{xt1}, ${C_\theta}^{4n-6}(w) = {C_\theta}^{k}(w)= \{u\;|\; |u|_{a, \theta(a)}=|w|_{a, \theta(a)} \; \forall a \in \Sigma \}$. This implies $x \in  {C_\theta}^{4n-6}(w)$.
Thus, ${C_\theta}^{*}(w) \subseteq {C_\theta}^{4n-6}(w)$. Hence, ${C_\theta}^{4n-6}(w) = {C_\theta}^{*}(w)$.
\end{proof}

We now provide an example to show that for a word $w$ of length $n \geq 3$, the bound $4n-6$ is optimal.  Additionally, we present an example to show that the smallest $i$ such that  ${C_\theta}^{i}(w) = {C_\theta}^{*}(w) = \{u\;|\; |u|_{a, \theta(a)}=|w|_{a, \theta(a)} \; \forall a \in \Sigma \}$ may or may not be equal to $4n-6$.

\begin{example}\label{bnvmr50}
Let $w \in \Sigma^*$ and $\theta$ be such that $\theta(a)=b$, $\theta(c)=d$ and $\theta(g)=h$. 
\begin{enumerate}
    \item Consider $w=cag$. Then,\\
$C_\theta(w)=\{ hca, cag, hbd, hbc  \},$\\ 
${C_\theta}^2(w)= C_\theta(w) \cup \{  bdg, bdh, cah, dhb, dah, bhc, chb, dag  \}, $\\
$ {C_\theta}^3(w)= {C_\theta}^2(w) \cup  \{ hcb, gcb, gbc, gda, dbh, dgb, gca, dga, hda, ach, agc, gbd, \\agd, adh \},$\\
${C_\theta}^4(w) = {C_\theta}^3(w) \cup  \{ bgd, ahc,  cgb, gdb, gad, bgc, adg, bch, cha, bhd, dha, bcg,\\ gac, ahd \}$,\\
${C_\theta}^5(w) = {C_\theta}^4(w) \cup \{  cbh, cga, acg, had, cbg, dbg, hdb \}$,\\
${C_\theta}^6(w) = {C_\theta}^5(w)  \cup  \{ hac \}$, and
${C_\theta}^i(w)={C_\theta}^6(w) $ for all $i \geq 6$.\\
Hence, ${C_\theta}^*(w) = {C_\theta}^6(w)$. Note that, for $m=6 = 4|w|-6 $, ${C_\theta}^*(w) = {C_\theta}^m(w)$.

        \item   Consider $u= caa$. Then, $C_\theta(w)=\{ caa, bca, bbc, bbd  \},$\\ 
        ${C_\theta}^2(w)= C_\theta(w) \cup \{  bdb, cbb, cab,  daa, dab, dbb, bda  \} $, \\
        $ {C_\theta}^3(w)= {C_\theta}^2(w) \cup  \{ bcb, abc, adb, acb, aad,  ada, aca, abd, aac \}$,\\
        ${C_\theta}^4(w) = {C_\theta}^3(w) \cup  \{  bac, dba, bad, cba \}$, and
        ${C_\theta}^i(w)={C_\theta}^4(w) $ for all $i \geq 4$. \\
      Hence, ${C_\theta}^*(w) = {C_\theta}^4(w)$. Note that, for $m=4 < 4|u|-6 = 6$, ${C_\theta}^*(w) = {C_\theta}^m(w)$.
\end{enumerate}
\end{example}

For $w\in \Sigma^*$, ${C_\theta}^*(w)=\{u\;|\; |u|_{a, \theta(a)}=|w|_{a, \theta(a)}~ \forall  a \in \Sigma \}$.
Using basic counting principles, we now compute the number of elements in ${C_\theta}^*(w)$ for $w \in \Sigma^n$.

\begin{theorem}\label{coit3421}
For $w \in \Sigma^n$, let $\textit{Alph}(w) = A_1 \cup A_2 \cup B$ where $A_1 \cap A_2 = \emptyset$, $A_1 \cap B = \emptyset$, $A_2 \cap B =\emptyset$, $B = \{ b  : \theta(b) = b \} $, and $A_2 = \{\theta(a) : a \in A_1 \text{ and } \theta(a)\in \textit{Alph}(w) \}$.
 Consider $A_1 = \{a_{1}, a_{2}, \cdots, a_{l} \}$ and $B = \{b_{1}, b_{2}, \cdots, b_{r} \}$ for some non-negative integers $l$ and $r$.
For each $a \in A_1$ and $b \in B$, let $|w|_{a, \theta(a)} = n_a$ and $|w|_{b, \theta(b)} = m_b$, respectively. Consider $\sum\limits_{a\in A_1} n_a = n'$. Then 
$$|{C_\theta}^*(w)|= {n \choose n_{a_{1}}\; n_{a_{2}} \; \cdots \; n_{a_{l}} \; m_{b_{1}}\; m_{b_{2}} \; \cdots \; m_{b_{r}}} \times 2^{n'}.$$
\end{theorem}

\begin{proof}
    Each word $u \in {C_\theta}^*(w)$ satisfies the condition $|u|_{c, \theta(c)} = |w|_{c, \theta(c)}$ for all $c \in \Sigma$.
    This means that what truly matters for counting the words in ${C_\theta}^*(w)$ is the total number of occurrences of each letter and its counterpart under $\theta$.

 Consider $\textit{Alph}(w) = A_1 \cup A_2 \cup B$ where $A_1 \cap A_2 = \emptyset$, $A_1 \cap B = \emptyset$, $A_2 \cap B =\emptyset$, $B = \{ b  : \theta(b) = b \} = \{b_{1}, b_{2}, \cdots, b_{r} \}$, $A_1 = \{a_{1}, a_{2}, \cdots, a_{l} \}$ and $A_2 = \{\theta(a) : a \in A_1 \text{ and } \theta(a)\in \textit{Alph}(w) \}$.
    Then $A_1$ is the collection of all those elements of $\textit{Alph}(w)$ such that for any two distinct indices $i$ and $j$ with $1 \leq i, j \leq l$, we have $a_i \neq \theta(a_i)$, $a_j \neq \theta(a_j)$ and $a_i \neq \theta(a_j)$.\\
    Let $C_1=\{ c_{a_1}, c_{a_2}, \cdots, c_{a_l}\}$ and $D=\{ c_{b_1}, c_{b_2}, \cdots, c_{b_r} \}$ such that $|C_1|= |A_1|=l$, $|D|=|B|=r$, $ C_1 \cap D = \emptyset$,  $\Sigma \cap C_1 = \emptyset$ and $\Sigma \cap D = \emptyset$.
    To count the words of ${C_\theta}^*(w)$ systematically, we follow the following steps:
    
    \textbf{Step 1: }
    
    For each $a_i \in A_1$ ($1 \leq i \leq l$),  we first replace each $a_i$ and $\theta(a_i)$ in $w$ by $C_{a_i}$.
    For each $b_j \in B$ ($1 \leq j \leq r$),  we also replace each $b_j$ in $w$ by $C_{b_j}$.
    Then $w$ becomes a word over the alphabet $C_1 \cup D$. Now by the given conditions,   $|w|_{C_{a_i}} = n_{a_i}$ and $|w|_{C_{b_j}} = m_{b_j}$ for each $1 \leq i \leq l$ and $1 \leq j \leq r$.
    \vspace{.4cm}  \\ 
    \fbox{\parbox{\textwidth}{ Explanation of Step 1 through an Example: Consider $w= a_1 b_1 \theta(a_1) \theta(a_2)$ where $a_1, b_1, a_2 \in \Sigma$, $a_1 \neq \theta(a_1)$, $a_2 \neq \theta(a_2)$, and $b_1 = \theta(b_1)$. Then, $A_1 = \{a_1, \theta(a_2) \}$, $A_2 = \{ \theta(a_1) \} $ and $B=\{ b_1\}$.
    Let  $C_1=\{ c_{a_1}, c_{\theta(a_2)}\}$ and $D=\{ c_{b_1} \}$.  
    Now, $a_1$ and $\theta(a_1)$ is replaced by $C_{a_1}$; $\theta(a_2)$ is replaced by $C_{\theta(a_2)}$; and $b_1$ is replaced by $C_{b_1}$. Then, $w$ becomes $ C_{a_1} C_{b_1} C_{a_1} C_{\theta(a_2)}$.   }}
    \\

    \textbf{Step 2: }

    Now, we collect all the possible arrangements of $w$ in a set $S$. Using multinomial coefficient, we have 
    $$|S| =  {n \choose n_{a_{1}}\; n_{a_{2}} \; \cdots \; n_{a_l} \; m_{b_{1}}\; m_{b_{2}} \; \cdots \; m_{b_{r}}}.$$
    \vspace{.4cm}\\    
    \fbox{\parbox{\textwidth}{Explanation of Step 2 through an Example: After Step 1, we have $ w = C_{a_1} C_{b_1} C_{a_1} C_{\theta(a_2)}$. Then, $S=\{ C_{a_1} C_{b_1} C_{a_1} C_{\theta(a_2)}, 
C_{a_1} C_{b_1} C_{\theta(a_2)} C_{a_1},  C_{a_1} C_{a_1} C_{b_1} C_{\theta(a_2)}, 
C_{a_1} C_{a_1} C_{\theta(a_2)} C_{b_1}, 
C_{a_1} C_{\theta(a_2)} C_{a_1} C_{b_1}, \\
C_{a_1} C_{\theta(a_2)} C_{b_1} C_{a_1}, 
C_{b_1} C_{a_1} C_{a_1} C_{\theta(a_2)}, 
C_{b_1} C_{a_1} C_{\theta(a_2)} C_{a_1}, 
C_{b_1} C_{\theta(a_2)} C_{a_1} C_{a_1}, 
C_{\theta(a_2)} C_{a_1} C_{a_1} C_{b_1}, \\
C_{\theta(a_2)} C_{a_1} C_{b_1} C_{a_1}, 
C_{\theta(a_2)} C_{b_1} C_{a_1} C_{a_1} \}$. Using multinomial coefficients, $|S|={4 \choose 2~ 1~ 1 } =12$.  }}
    \\

\textbf{Step 3: }

Consider $\alpha \in S$. Then  $|\alpha|_{C_{a_i}} = n_{a_i}$ and $|\alpha|_{C_{b_j}} = m_{b_j}$ for each $1 \leq i \leq l$ and $1 \leq j \leq r$.
 We now replace each $C_{a_i}$ in $\alpha$ by one of $a_i$ and $\theta(a_i)$ where $1 \leq i \leq l$.
Thus, each position of $\alpha$ with $C_{a_i}$ has two possible replacements $a_i$ and $\theta(a_i)$ where $1 \leq i \leq l$.
Similarly, we replace each $C_{b_j}$ in $\alpha$ by $b_j$ where $1 \leq j \leq r$.
Thus, each position of $\alpha$ with $C_{b_j}$ has exactly one possible replacement $b_j$ where $1 \leq j \leq r$.
Then as $\sum\limits_{i= 1}^{l} n_{a_i} = n'$, we get $2^{n'}$ distinct words over $\Sigma$ from $\alpha$. Let's keep these $2^{n'}$ distinct words in a new set $R$. For each $\alpha \in S$, we get $2^{n'}$ distinct elements, and we put them in the set $R$. Then by basic counting principle,  $$ |R| =  {n \choose n_{a_{1}}\; n_{a_{2}} \; \cdots \; n_{a_l} \; m_{b_{1}}\; m_{b_{2}} \; \cdots \; m_{b_{r}}} \times 2^{n'}.$$ 
    \fbox{\parbox{\textwidth}{ Explanation of Step 3 through an Example: Consider $\alpha = C_{a_1} C_{a_1} C_{b_1} C_{\theta(a_2)}$. Then, $C_{a_1}$ is replaced by one of $a_1$ and $\theta(a_1)$.
    $C_{\theta(a_2)}$ is replaced by one of $\theta(a_2)$ and $a_2$. $C_{b_1}$ is replaced by $b_1$.
    Since $\sum\limits_{a \in A_1} n_{a} = 3$,
    from $\alpha$ we got the following  $2^3$ elements: $a_1 a_1 b_1 \theta(a_2)$, $a_1 a_1 b_1 a_2$, 
  $a_1 \theta(a_1) b_1 a_2$, $a_1 \theta(a_1) b_1 \theta(a_2)$,
$\theta(a_1) a_1 b_1 a_2$, $\theta(a_1) a_1 b_1 \theta(a_2)$, $\theta(a_1) \theta(a_1) b_1 a_2$,
$\theta(a_1) \theta(a_1) b_1 \theta(a_2)$. So, $|R|=12 \times 8=96$.
    }}
    \\

\textbf{Step 4: }

We now show that $R= C_\theta^*(w)$. Consider $u \in R$. Then by the above construction, $|u|_{a, \theta(a)} = |w|_{a, \theta(a)}$ for all $a \in \Sigma$. Thus, $u \in C_\theta^*(w)$. Therefore,  $R \subseteq C_\theta^*(w)$.
Consider $v \in  C_\theta^*(w)$. Then, $|v|_{a, \theta(a)} = |w|_{a, \theta(a)}$ for all $a \in \Sigma$. Now, we  replace each $a_i$ and $\theta(a_i)$ in $v$ and $w$ by $C_{a_i}$ where $1 \leq i \leq l$. Also, we replace each $b_j$ in $v$ and $w$ by $C_{b_j}$ where $1 \leq j \leq r$.  After replacements, consider $v$ and $w$ become $v'$ and $w'$, respectively, over $C_1 \cup D$. Since
 $|v|_{a, \theta(a)} = |w|_{a, \theta(a)}$ for all $a \in \Sigma$, $v'$ is a permutation of $w'$, i.e., $v' \in S$. Then by step 3, $v \in R$. Thus, $ C_\theta^*(w) \subseteq R$. Therefore, $R= C_\theta^*(w)$.    
\end{proof}

We demonstrate Theorem \ref{coit3421} using the following example.

\begin{example}\label{ex1} 
Let $w \in \Sigma^*$ and $\theta$ be such that $\theta(a)=b$, $\theta(c)=d$, $\theta(g)=h$ and $\theta(e)=e$.
\begin{enumerate}
    \item  Consider $w=cag$. Then $A_1=\{a, c, g\}$, $A_2= \emptyset$, $B=\emptyset$,
     $|w|_{a, b}=n_a=1$, $|w|_{c, d}=n_c=1$, $|w|_{g, h}=n_g=1$, and $n_a+ n_c +n_g =n' =3$. 
     Thus by Theorem \ref{coit3421}, $|{C_\theta}^*(w)|= {3 \choose 1\; 1 \; 1} \times 2^{3}$ $ = 48$. 
    In  Example \ref{bnvmr50}, we have explicitly computed the set ${C_\theta}^*(cag)$. From this set, we can verify that $|{C_\theta}^*(cag)|=48$.

    \item Consider $w= caa$. Then, $A_1=\{a, c\}$, $A_2= \emptyset$, $B=\emptyset$,
    $|w|_{c, d}=n_c=1$, $|w|_{a, b}=n_a=2$, and $n_c+ n_a = n' = 3$. 
     Thus by Theorem \ref{coit3421}, $|{C_\theta}^*(w)|= {3 \choose 1\; 2 } \times 2^{3}$ $ = 24$. 
    In  Example \ref{bnvmr50}, we have explicitly computed the set ${C_\theta}^*(caa)$.  From this set, we can verify that $|{C_\theta}^*(caa)|=24$.
    
    \item Consider $w=aee$. Then, $A_1=\{a\}$, $A_2= \emptyset$, $B=\{e\}$,
    $|w|_{a, b}=n_a=1$, $|w|_{e}=m_e=2$, and $n_a = n' = 1$. 
    Thus by Theorem \ref{coit3421}, $|{C_\theta}^*(w)|= {3 \choose 1\; 2 } \times 2^{1}$ $ = 6$. 
    We now verify the cardinality of ${C_\theta}^*(w)$ by computing it.
    Here, $C_\theta(w) = \{  eeb, eea, eae, aee \}$, ${C_\theta}^2(w)= \{ eeb, eea, bee, eae, ebe, aee  \}$, and ${C_\theta}^i(w)={C_\theta}^2(w)$ for all $i \geq 2$. Thus, ${C_\theta}^*(w)={C_\theta}^2(w)$ and $|{C_\theta}^*(w)|=6$.
\end{enumerate}

\end{example}

Consider $L$ is a finite language and $ \max\{|w| : w \in L\} \leq 2$.  Then by direct computation,
${C_\theta}^{3}(L) = {C_\theta}^*(L) = \bigcup\limits_{w \in L} \{u ~|~ |u|_{a, \theta(a)}=|w|_{a, \theta(a)} ~ \forall a \in \Sigma\}$. If the length of the largest word in $L$ is greater than two, then by Theorem \ref{vx1}, we have the following remark. 
\begin{remark}\label{lsas}
If $L$ is a finite language and $n = \max\{|w| : w \in L\} \geq 3$, then ${C_\theta}^{4n-6}(L) = {C_\theta}^*(L) = \bigcup\limits_{w \in L} \{u ~|~ |u|_{a, \theta(a)}=|w|_{a, \theta(a)} ~ \forall a \in \Sigma \}$.
\end{remark}
It can be observed from Remark \ref{lsas} that if $L$ is finite, then ${C_\theta}^{i}(L)$ is finite for  $i \geq 0$. Also,  if ${C_\theta}^{i}(L)$ is finite for some  $i \geq 0$, then as $L \subseteq {C_\theta}^{i}(L)$, $L$ is finite. This implies that $L$ is finite if and only if ${C_\theta}^{i}(L)$ is finite for $i \geq 0$.

We now study closure properties of families of regular and context-free languages under iterated $\theta$-conjugate operation.
\begin{lemma}
Regular languages and context-free languages are not closed under iterated $\theta$-conjugate operation. Furthermore, the iterated $\theta$-conjugate of a regular language is not necessarily context-free.
\end{lemma}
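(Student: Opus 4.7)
The plan is to exhibit a single regular language $L$ whose iterated $\theta$-conjugate $C_\theta^*(L)$ is not even context-free. Since every regular language is context-free, this single example simultaneously establishes all three claims: regular languages are not closed under iterated $\theta$-conjugate, context-free languages are not closed under iterated $\theta$-conjugate, and $C_\theta^*(L)$ of a regular $L$ need not be context-free.

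I would take $\Sigma=\{a,b,c\}$ with $\theta$ the antimorphic involution acting as the identity on each letter (so $\theta(w)=w^R$), and set $L=(abc)^*$. Then $L$ is regular. For each $n\geq 1$ the word $(abc)^n$ has length $3n\geq 3$ and satisfies $|(abc)^n|_a=|(abc)^n|_b=|(abc)^n|_c=n$. Because $\theta$ fixes every letter, Theorem~\ref{vx1} gives
\[
C_\theta^*\!\bigl((abc)^n\bigr)=\{u\in\Sigma^*:|u|_a=|u|_b=|u|_c=n\}.
\]
Taking the union over all $n\geq 0$ (the case $n=0$ contributes only $\lambda$), I obtain
\[
C_\theta^*(L)=\{u\in\{a,b,c\}^*:|u|_a=|u|_b=|u|_c\}.
\]

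The last step is to argue that this language is not context-free. I would intersect with the regular language $a^*b^*c^*$: since context-free languages are closed under intersection with regular languages, if $C_\theta^*(L)$ were context-free then so would be $C_\theta^*(L)\cap a^*b^*c^*=\{a^nb^nc^n:n\geq 0\}$, which is the textbook non-context-free example provided by the pumping lemma for context-free languages. This contradiction shows $C_\theta^*(L)$ is not context-free, and in particular not regular, completing the proof.

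The main obstacle is essentially bookkeeping: I need to confirm that Theorem~\ref{vx1} truly applies to each $(abc)^n$ (it does, because the length condition $n\geq 3$ is satisfied for every nonempty word in $L$ except $\lambda$, which only contributes the empty word that is already captured by the letter-count characterization), and to check that the chosen $\theta$ is a legitimate antimorphic involution (which is immediate: the identity on single letters extended anti-morphically is an involution). No other difficulties arise.
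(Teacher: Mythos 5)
Your proposal is correct and follows essentially the same route as the paper: both characterize ${C_\theta}^*(L)$ via Theorem \ref{vx1} as a letter-count language and then intersect with a regular language of the form $x^*y^*z^*$ to extract the non-context-free language $\{x^n y^n z^n : n\geq 0\}$. The only difference is cosmetic — you take $L=(abc)^*$ with the identity (mirror) involution, whereas the paper uses $(bce)^*$ over a five-letter alphabet with a non-identity $\theta$ — and your verification that Theorem \ref{vx1} applies to each $(abc)^n$ and that $\lambda$ is handled separately is sound.
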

\begin{proof}
Consider a regular language $L = \{(bce)^*\}$ over $\{ b, c, d, e, f\}$ where $\theta(b)=b$, $\theta(c)=d$ and $\theta(e)=f$. Since ${C_\theta}^{*}(w)=\{u\;|\; |u|_{a, \theta(a)}=|w|_{ a, \theta(a)} ~\forall  a \in \Sigma \}$, ${C_\theta}^{*}(L)=\bigcup\limits_{w \in L} \{u\;|\;|u|_{a, \theta(a)}=|w|_{a, \theta(a)} ~ \forall  a \in \Sigma  \}.$
Then, ${C_\theta}^{*}(L)  \cap \{ b^{m_1} c^{m_2} e^{m_3}\;|\; m_i\geq 0\} = \{ b^n c^n e^n \;|\; n\geq 0 \}$, which is not context-free (using Theorem \ref{pumth}). Note that $\{ b^{m_1} c^{m_2} e^{m_3}\;|\; m_i\geq 0\}$ is a regular language.
Since the regular languages and the context-free languages are closed under intersection with regular languages (\cite{shallit2008}), the claim holds.
\end{proof}

\section{$\theta$-conjugate-free Languages}\label{sec6}
In this section, we first define a $\theta$-conjugate-free language $L$, i.e., there are no two distinct strings in $L$ such that one is a $\theta$-conjugate of the other. 
Then for given $u, v \in \Sigma^n$, we discuss a linear time algorithm to decide whether or not $v$ belongs to  $C_\theta(u)$.
Finally, we decide the existence of $u, v$ where $ u \neq v$ and $v \in C_\theta(u)$  in different language classes, i.e., we investigate some decidability problems for $\theta$-conjugate-freeness for different language classes.
\begin{definition}
A language $L$ is \textit{$\theta$-conjugate-free} if for all $w\in L$, $C_\theta(w) \cap L = \{w\}$.
\end{definition}

\subsection{Decidability of $\theta$-conjugate-freeness}

We first examine a straightforward scenario where we have two strings of equal length. Our task is to decide whether one string is a $\theta$-conjugate of the other. Specifically, given two strings $u$ and $v$, we need to decide whether or not $v$ belongs to  $C_\theta(u)$. 
A similar question for conjugates can be decided in linear time, i.e., given two strings $u$ and $v$ of equal length, we can decide whether $v \in C(u)$  in linear time. This can be done by checking whether $v$ is a substring of $uu$.

For a word $w \in \Sigma^*$, let $w[i]$ and $w[i\cdots j]$ denote the letter of the string $w$ at position $i$ and the factor $w[i]w[i+1]\cdots w[j]$ of $w$, respectively, where $1 \leq i \leq j \leq |w|$.
 To check whether $v$ belongs to  $C_\theta(u)$, we first check the equality of $u$ and $v$. If $u=v$, then $v \in C_\theta(u)$. Otherwise, we compute $\theta(u)$ and compare two strings, $v$ and $\theta(u)$. Now,  we find the first index where $v$ and $\theta(u)$ have different characters.  Let this index be $i_L$. If such an index does not exist, then  $v=\theta(u)$ and $v \in C_\theta(u)$. If $i_L=1$, then $v \notin C_\theta(u)$. 
 If $i_L >1$, then we check the equality of substrings $\theta(v[i_L \cdots |v|])$ and $\theta(u)[i_L \cdots |u|]$. If they are equal, then $v=\theta(y)x$ and $\theta(u)=\theta(y) \theta(x)$, i.e., $v \in C_\theta(u)$. Otherwise, $v \notin C_\theta(u)$. See the Algorithm \ref{algo1} for details. 
From the analysis of Algorithm \ref{algo1}, we conclude that determining whether $v$ is in $C_\theta(u)$ can be done in linear time.

\begin{algorithm}
\caption{A linear time algorithm for deciding $v \in C_\theta(u)$}\label{algo1}
\begin{algorithmic}
\Require Two strings of $u$ and $v$ of same length $n$ and mapping $\theta$
 \State Define an empty array $A$ of size $n$
 \State $z \gets {\lambda}$  
 \State $s \gets 1$
\State $k \gets 0$
\State $k_1 \gets 0$
\For{\texttt{$j=1$ to $n$}}
    \If{$u[j]=v[j]$} $k \gets k+1$
    \EndIf 
\EndFor
\If{$k=n$} return true \Comment{Then, $v=u$}
\Else
\For{\texttt{$l=n$ to $1$}}
    $z \gets z ~||~ \theta(u[l])$ \Comment{ $||$ indicates concatenation operation}
\EndFor
\EndIf
\State $\theta(u) \gets z$
\If{$v[1] \neq \theta(u)[1]$} return false
     \Else
     \For{\texttt{$i=2$ to $n$}}
           \If{$\theta(u)[i] \neq v[i]$}
                $A[s] \gets i$;
                $s \gets s+1$
                \For{\texttt{$l=i$ to $n$}}
                    \If{$\theta(u)[i] \neq \theta(v[n+A[1]-i])$} return false
                    \EndIf
                \EndFor\\
                \hspace{1.6cm}return true \Comment{$\theta(u)=\theta(y)\theta(x)$, $v= \theta(y) x$ for some $x, y \in \Sigma^+$}
                \Else\\
              \hspace{1.5cm}  {$k_1 \gets k_1+1$}
            \EndIf   
      \EndFor
      \If{$k_1=n-1$} return true \Comment{Then, $v=\theta(u)$}
      \EndIf
\EndIf
\end{algorithmic}
\end{algorithm}

\begin{theorem}
If we have two strings $u$ and $v$ of equal length $n$, then we can decide whether or not $v$ is a member of $C_\theta(u)$ in $O(n)$ time.
\end{theorem}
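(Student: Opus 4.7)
The plan is to reduce the decision question to a clean characterization of membership in $C_\theta(u)$ and then solve it with standard linear-time string-matching. The key observation is that $v \in C_\theta(u)$ if and only if there exists $i \in \{0, 1, \ldots, n\}$ such that
\[
v[1\cdots i] = \theta(u)[1\cdots i] \quad \text{and} \quad v[i+1\cdots n] = u[1\cdots n-i].
\]
This follows directly from the definition $v = \theta(y)x$, $u = xy$ by setting $i = |y|$: since $\theta$ is an antimorphism, $\theta(u) = \theta(y)\theta(x)$, so the length-$i$ prefix of $\theta(u)$ coincides with $\theta(y)$, while $x$ is simultaneously the length-$(n-i)$ suffix of $v$ and the length-$(n-i)$ prefix of $u$.

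The algorithm then proceeds in three linear phases. First, I would compute $\theta(u)$ by scanning $u$ from right to left and applying $\theta$ letter by letter. Second, a single left-to-right comparison between $v$ and $\theta(u)$ produces $L$, the length of their longest common prefix; the candidate split points satisfying the first matching condition are precisely $i \in \{0, 1, \ldots, L\}$, together with the boundary case $i = n$ corresponding to $v = \theta(u)$. Third, to check the second matching condition across all candidate $i$ in total linear time, I would run the Z-algorithm on the concatenation $u \# v$, where $\# \notin \Sigma$ is a fresh separator; this yields in $O(n)$ an array whose entry $Z[j]$ equals the length of the longest prefix of $u$ that matches $v$ starting at position $j$. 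The second condition at split $i$ then reduces to $Z[i+1] \geq n-i$, and sweeping $i$ from $0$ to $L$ to look for such an index takes $O(n)$ more work.

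Summing the phases yields the stated $O(n)$ bound. The only non-trivial ingredient is the characterization itself; once in hand, the reduction to standard linear-time string-matching is routine. The likely pitfalls are the boundary cases $i = 0$ (where the prefix condition is vacuous and the second degenerates to $v = u$) and $i = n$ (where the suffix condition is vacuous and the first becomes $v = \theta(u)$); these fit uniformly into the scheme provided the matching conditions are interpreted correctly at the endpoints.
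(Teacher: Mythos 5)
Your proof is correct, and it takes a genuinely different route from the paper's. Both arguments rest on the same characterization --- $v\in C_\theta(u)$ iff some split index $i$ satisfies $v[1\cdots i]=\theta(u)[1\cdots i]$ and $v[i+1\cdots n]=u[1\cdots n-i]$ --- but they differ in which splits get tested. The paper's Algorithm \ref{algo1} handles $i=0$ and $i=n$ separately (the cases $v=u$ and $v=\theta(u)$) and otherwise tests only the single split at $i_L-1$, where $i_L$ is the first position at which $v$ and $\theta(u)$ disagree; you instead test every admissible split $i\in\{0,\dots,L\}$ (with $L=i_L-1$ the longest-common-prefix length), amortizing the suffix checks over all candidates with one run of the Z-algorithm on $u\#v$. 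This difference is not cosmetic: the valid split need not occur at the first mismatch. For instance, with $\theta$ the mirror involution fixing every letter, take $u=abcad$ and $v=dabca=\theta(d)\,abca\in C_\theta(u)$; then $\theta(u)=dacba$, the first mismatch is at position $3$, and the split tested there fails ($\theta(v[3\cdots 5])=acb\neq cba=\theta(u)[3\cdots 5]$) even though the split at $i=1$ succeeds. So your exhaustive sweep over $\{0,\dots,L\}$ is exactly what is needed to make the single-pass idea sound, at the cost of invoking a standard linear-time string-matching primitive where the paper uses only direct character-by-character comparison. Your treatment of the endpoints ($i=0$ reducing to $v=u$, $i=n$ to $v=\theta(u)$) is also correct.
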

We now investigate decidability problems for $\theta$-conjugate-freeness of $L$ when $L$ is either regular or context-free.
 \cite{decreg} proved that it is decidable to determine whether or not a regular language contains two distinct conjugates of a word. We now prove that it is also decidable to determine whether or not a regular language $L$ with no $\theta$-palindrome contains two distinct $\theta$-conjugates of a word.
\begin{theorem}
If $L$ is a regular language with no $\theta$-palindrome,  then it is decidable to determine whether or not $L$ is $\theta$-conjugate-free.
\end{theorem}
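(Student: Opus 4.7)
The plan is to reduce $\theta$-conjugate-freeness of $L$ to a finite collection of emptiness tests for regular languages, guessing a pair of ``split states'' in a DFA for $L$ and using the no-$\theta$-palindrome hypothesis to dispose of the side condition $xy \neq \theta(y)x$.

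First I would reformulate the question: by definition, $L$ fails to be $\theta$-conjugate-free iff there exist $x, y \in \Sigma^*$ with $xy \in L$, $\theta(y)x \in L$ and $xy \neq \theta(y)x$. The case $y = \lambda$ forces $\theta(y)x = x = xy$, so any witness must have $y \in \Sigma^+$. For $y \in \Sigma^+$, I claim the inequality is automatic under the hypothesis: if instead $xy = \theta(y)x$, then $x \in \Sigma^+$ as well (else $y \in L \cap P_\theta$, a contradiction), and Proposition \ref{gl1} gives $x = \alpha(\beta\alpha)^i$, $y = \beta\alpha$ with $\alpha, \beta \in P_\theta$; a brief antimorphism computation then yields $\theta(xy) = (\alpha\beta)^{i+1}\alpha = \alpha(\beta\alpha)^{i+1} = xy$, contradicting $L \cap P_\theta = \emptyset$. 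Hence $L$ is not $\theta$-conjugate-free iff there exist $x \in \Sigma^*$ and $y \in \Sigma^+$ with $xy \in L$ and $\theta(y)x \in L$.

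Next, fix a DFA $A = (Q, \Sigma, \delta, q_0, F)$ for $L$ and, for $s, t \in Q$, let $L_{s,t} = \{w : \delta^*(s,w) = t\}$, which is regular. For each quadruple $(p, p', q, q') \in Q \times Q \times F \times F$ put
\[
X_{p,p',q,q'} := L_{q_0,p} \cap L_{p',q'}, \qquad Y_{p,p',q,q'} := L_{p,q} \cap \theta(L_{q_0,p'}) \cap \Sigma^+.
\]
A pair $(x, y)$ as above exists iff for some quadruple both $X_{p,p',q,q'}$ and $Y_{p,p',q,q'}$ are nonempty: given a witness, take $p = \delta^*(q_0,x)$, $p' = \delta^*(q_0,\theta(y))$, $q = \delta^*(q_0,xy)$, $q' = \delta^*(q_0,\theta(y)x)$; conversely any $x \in X_{p,p',q,q'}$ and $y \in Y_{p,p',q,q'}$ reconstruct a witness. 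By Lemma \ref{mn1}, $\theta(w) = \psi(w^R)$ where $\psi$ is the morphic involution agreeing with $\theta$ on $\Sigma$, so $\theta(L_{q_0,p'}) = \psi((L_{q_0,p'})^R)$ is regular by closure of the regular languages under reversal and morphism; hence every $X_{p,p',q,q'}$ and $Y_{p,p',q,q'}$ is effectively regular.

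The algorithm then enumerates the finitely many quadruples of states and tests nonemptiness of $X_{p,p',q,q'}$ and $Y_{p,p',q,q'}$, which is decidable for regular languages; it declares $L$ not $\theta$-conjugate-free if some quadruple has both nonempty, and $\theta$-conjugate-free otherwise. The one place where care is needed is the coupling between the two accepted words: the same $x$ must occur as a prefix of $u = xy$ and as a suffix of $v = \theta(y)x$, and $y$ is tied to $\theta(y)$ across the two runs. Guessing the split states $p$ and $p'$ is precisely what decouples this into independent regular constraints on $x$ and on $y$, and the no-$\theta$-palindrome hypothesis is what lets us discard the otherwise inconvenient inequality $xy \neq \theta(y)x$, which would add a correlation not directly expressible as the intersection of independent regular constraints.
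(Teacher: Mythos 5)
Your proposal is correct and follows essentially the same route as the paper: both reduce $\theta$-conjugate-freeness to emptiness tests on effectively constructible regular languages (the paper intersects $L$ with the regular set $T=\bigcup_{u'a\in L}\theta(a)C_\theta(u')$ of proper $\theta$-conjugates; your state-quadruple decomposition is the same idea carried out explicitly at the automaton level), and both invoke Proposition \ref{gl1} together with the no-$\theta$-palindrome hypothesis to rule out the degenerate case $xy=\theta(y)x$. The only difference is presentational: you inline the product construction that the paper delegates to its Theorem \ref{gt1}.
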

\begin{proof}
Consider $L$ is a regular language  with no $\theta$-palindromes.
From Proposition \ref{x1}, we know that $C_\theta(u'a)=\{u'a\}\cup \theta(a) C_\theta(u')$ for $u'\in \Sigma^*$ and $a\in \Sigma$. 
Then, $C_\theta(L)=\bigcup\limits_{w \in L }C_\theta(w)  = L \cup \bigcup\limits_{\substack{u'a \in L \\ u' \in \Sigma^*, a \in \Sigma}}\theta(a)C_\theta(u') = L \cup T$ where  $T=\bigcup\limits_{\substack{u'a \in L \\ u' \in \Sigma^*, a \in \Sigma}}\theta(a)C_\theta(u')$.
Now, using a similar proof technique of Theorem \ref{gt1}, it is not hard to prove that if $L$ is regular, then $T$ is also regular. 
We now prove that $L$ is $\theta$-conjugate-free if and only if $L\cap T = \emptyset$. 
Consider $L$ is  $\theta$-conjugate-free. 
Then for all $w\in L$, $C_\theta(w) \cap L = \{w\}$. 
If $x \in L\cap T$, then $x \in L$ and $x \in T$. 
This implies $x \in \theta(a) C_\theta(z)$ for some  $za \in L$.  
If $x=za$, then for $\alpha \in \Sigma^+$ and $\beta \in \Sigma^*$, $x=\theta(\alpha)\beta$, $za=\beta \alpha$ and $\theta(\alpha)\beta =\beta \alpha$ which implies $x$ is a $\theta$-palindrome, a contradiction to the fact that $L$ does not contain any $\theta$-palindrome. Thus, $x \neq za$, i.e., $ \{x, za\}  \subseteq C_\theta(za)\cap L $ which implies $L$ is not  $\theta$-conjugate-free, a contradiction. Hence, $L\cap T = \emptyset$. Now, consider $L\cap T = \emptyset$. If $w, w' \in L$ such that $w\neq w'$ and $w' \in C_\theta(w)$, then $w'\in T$ which implies $w' \in L\cap T$, a contradiction. Thus, $L$ is $\theta$-conjugate-free. Therefore,  $L$ is $\theta$-conjugate-free if and only if $L \cap T = \emptyset$. 
Now, $L$ and $T$ are regular languages. 
Since the emptiness of the intersection of two regular languages is decidable (\cite{shallit2008}), we can decide whether or not L is $\theta$-conjugate-free.
\end{proof}

Now, we discuss the decidability result for a context-free language.
First, we recall the following. An instance of the Post's Correspondence Problem (PCP) \cite{post1946} consists of a positive integer $n$, and two ordered $n$-tuples
of strings $( U, V )$, where $U = (u_0, u_1, \ldots, u_{n-1})$ and $V = (v_0, v_1, \ldots, v_{n-1})$, $u_i, v_i \in \Sigma^*$, $0\leq i \leq n-1$. A solution for the PCP instance $(U, V )$ is a sequence of integers $i_1, i_2, \ldots, i_k$, $0\leq i_j \leq n-1$, $j=1, 2, \ldots, k$, $k\geq 1$, such that
$$u_{i_1} u_{i_2} \cdots u_{i_k} = v_{i_1} v_{i_2} \cdots v_{i_k}.$$
\begin{proposition}\cite{post1946}
The decision problem of determining whether or not a given PCP instance has a solution is undecidable.
\end{proposition}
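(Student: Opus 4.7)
The plan is to establish undecidability by reduction from the halting problem for Turing machines. Given a Turing machine $M$ and input $w$, I would construct a PCP instance $(U, V)$ with the property that a matching sequence $u_{i_1} u_{i_2} \cdots u_{i_k} = v_{i_1} v_{i_2} \cdots v_{i_k}$ exists if and only if $M$ accepts $w$. The matching string, when it exists, would encode the full accepting computation history of $M$ on $w$, written as successive configurations $C_0 \# C_1 \# C_2 \# \cdots \# C_m$ separated by a delimiter $\#$.

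The key design idea is that during any partial match the $V$-side is kept one configuration ahead of the $U$-side, so every new tile appended to the $U$-side must match content already committed on the $V$-side while simultaneously extending the $V$-side with the successor configuration dictated by the transition function of $M$. Concretely, I would include a distinguished starter tile that places $\# q_0 w \#$ on the $V$-side and $\#$ on the $U$-side; copy tiles $(a, a)$ for every tape symbol $a$ and a separator tile $(\#, \#)$ to transport the unchanged portions of tape across a step; transition tiles such as $(qa, pb)$ and $(qa, bp)$ corresponding to each transition of $M$ (handling right and left moves at the head position); and terminating tiles that allow the $U$-side to absorb and close out the $V$-side once the accepting state $q_f$ appears.

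Since the standard PCP does not fix the first tile, I would first argue undecidability of the Modified PCP (MPCP), in which a designated tile must be used first, and then reduce MPCP to PCP by the standard padding construction using a fresh marker $\star$ inserted between successive letters, so that only the intended starter pair can legally initiate a valid match.

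The main obstacle, and where the bulk of the work lies, is verifying that no spurious matches exist, i.e., that every match really does encode a valid accepting run of $M$ rather than some illicit combinatorial accident. The delicate points are that the head position must be uniquely recoverable inside each configuration, transition tiles must be applicable only at the head location, and the terminating tiles must force the $U$-side to exactly catch up with the $V$-side, thereby requiring the simulated computation to actually reach $q_f$. Once these invariants are established, a solution to the PCP instance is in bijection with an accepting computation of $M$ on $w$, and undecidability follows from the undecidability of the halting problem.
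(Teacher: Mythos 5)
This proposition is quoted from Post's 1946 paper and is stated in the present paper without proof, so there is no in-paper argument to compare against. Your outline is the standard, correct route: reduce the halting (or acceptance) problem to the Modified PCP by encoding computation histories as a chain of configurations with the $V$-side kept one configuration ahead, then reduce MPCP to PCP with the interleaved-marker padding trick. The construction and the invariants you flag (unique head position, transition tiles applicable only at the head, terminating tiles forcing the $U$-side to catch up exactly when $q_f$ appears) are precisely the points one must verify, and they do go through with the tiles you describe; what you have is a faithful sketch of the classical proof rather than a gap in reasoning, with the acknowledged bookkeeping of ruling out spurious matches left to be carried out in full.
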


Using PCP, we can show that it is undecidable to determine whether or not a context-free language contains two distinct conjugates of a word (the proof is provided in the Appendix). 
We now prove that determining the $\theta$-conjugate-freeness of a given context-free language is also undecidable by reducing the PCP to this problem.

\begin{theorem}
It is undecidable to determine {whether or not} a given context-free
language $L$ is $\theta$-conjugate-free.
\end{theorem}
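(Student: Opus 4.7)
The plan is to reduce Post's Correspondence Problem (PCP) to the problem of deciding $\theta$-conjugate-freeness of context-free languages. Given a PCP instance $(U,V)$ with $U=(u_0,\ldots,u_{n-1})$ and $V=(v_0,\ldots,v_{n-1})$ over a working alphabet $\Sigma_0$, I would introduce fresh letters $a_0,a_1,\ldots,a_{n-1},\#,\$,c,d$ to form $\Sigma=\Sigma_0\cup\{a_0,\ldots,a_{n-1},\#,\$,c,d\}$, and take $\theta$ to be the antimorphic involution that swaps $c$ with $d$ and fixes every other letter. The key auxiliary context-free languages are
\begin{align*}
L_U &= \{\, a_{i_k}\,a_{i_{k-1}}\cdots a_{i_1}\,\#\,u_{i_1}u_{i_2}\cdots u_{i_k}\,\$ \;:\; k\geq 1,\; 0\leq i_j \leq n-1 \,\},\\
L_V &= \{\, a_{i_k}\,a_{i_{k-1}}\cdots a_{i_1}\,\#\,v_{i_1}v_{i_2}\cdots v_{i_k}\,\$ \;:\; k\geq 1,\; 0\leq i_j \leq n-1 \,\}.
\end{align*}
Both are accepted by a pushdown automaton that pushes the index letters and, upon reading $\#$, pops them one at a time while verifying the corresponding block $u_{i_j}$ (resp.\ $v_{i_j}$). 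Setting $L = c\,L_U \cup L_V^R\,d$ then produces a context-free language, since the family is closed under concatenation, reversal, and union.

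For the forward direction, suppose the PCP instance admits a solution with indices $i_1,\ldots,i_k$, so that $u_{i_1}\cdots u_{i_k}=v_{i_1}\cdots v_{i_k}$. Writing $w_1 = a_{i_k}\cdots a_{i_1}\,\#\,u_{i_1}\cdots u_{i_k}\,\$$, the word $w_1$ lies simultaneously in $L_U$ and $L_V$. Hence $cw_1 \in L$, and because $\theta$ fixes every letter of $w_1$ while sending $c$ to $d$, a direct computation gives $\theta(cw_1) = w_1^R d \in L_V^R d \subseteq L$. Taking the factorization $cw_1 = \lambda\cdot cw_1$ shows that $\theta(cw_1)$ is a $\theta$-conjugate of $cw_1$, and the two words are distinct (one begins with $c$, the other with $\$$), so $L$ fails to be $\theta$-conjugate-free.

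The converse is where the main obstacle lies, and the plan is a careful case analysis on every factorization $w = xy$ of a word $w \in L$ that produces a would-be conjugate $w' = \theta(y)x \in L$ with $w \neq w'$. The point is that, by design, $\$$ occurs exactly once in every word of $cL_U$ (as its last letter) and $d$ occurs exactly once in every word of $L_V^R d$ (again as its last letter), and neither letter occurs anywhere else in $L$. If $w \in cL_U$, then $y$ must be a non-empty suffix, hence ends in $\$$, so $\theta(y)$, and therefore $w'$, starts with $\$$; hence $w'$ must belong to $L_V^R d$ and must end in $d$. Since $d$ does not occur in $cw_1$, a trailing $d$ in $\theta(y)x$ can only come from $\theta(c)$, which forces the first letter of $y$ to be $c$; because $c$ appears only at position $1$ of $w=cw_1$, this collapses the analysis to $y = w$ and $x = \lambda$, giving $w' = \theta(w) = w_1^R d$. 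Membership $w_1^R d \in L_V^R d$ is equivalent to $w_1 \in L_V$, and combined with $w_1 \in L_U$ this is precisely a PCP solution. A symmetric argument handles $w \in L_V^R d$. Putting the two directions together, $L$ is $\theta$-conjugate-free if and only if the PCP instance has no solution, so a decision procedure for $\theta$-conjugate-freeness of context-free languages would decide PCP, which is impossible.
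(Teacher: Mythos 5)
Your reduction is sound and, at its core, it is the same strategy as the paper's proof: both reduce PCP to $\theta$-conjugate-freeness by generating two context-free families, one carrying the $u_i$'s and one carrying the $v_i$'s together with index letters $a_i$, and by attaching distinct end-markers that $\theta$ exchanges (your $c\leftrightarrow d$, the paper's $\#\leftrightarrow\$$), so that a nontrivial $\theta$-conjugacy inside $L$ forces one word to be (essentially) the full $\theta$-image of the other, which in turn forces $u_{i_1}\cdots u_{i_k}=v_{i_1}\cdots v_{i_k}$. The differences are only in bookkeeping: the paper writes one grammar with $\theta(v_i)$ baked into the $V$-branch and allows $\theta$ to act arbitrarily on the base alphabet, while you take $\theta$ to be the identity on $\Sigma_0$ and instead invoke closure of CFLs under reversal and concatenation.

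There is one concrete inaccuracy to repair. Your claim that neither $\$$ nor $d$ ``occurs anywhere else in $L$'' fails for $\$$: it is the \emph{first} letter of every word of $L_V^R d$. Hence Case 2 is not literally symmetric to Case 1. For $w=\$\,(v_{i_1}\cdots v_{i_k})^R\,\#\,a_{i_1}\cdots a_{i_k}\,d$, the conjugate $w'=\theta(y)x$ must start with $c$ (so $w'\in cL_U$) and end with $\$$; since $\$$ does occur in $w$, this does not force $x=\lambda$, but leaves the extra factorization $x=\$$ with $y$ the remaining suffix. That sub-case still closes: there $w'=c\,a_{i_k}\cdots a_{i_1}\,\#\,v_{i_1}\cdots v_{i_k}\,\$$, and $w'\in cL_U$ forces (via the unique $\#$ and the index prefix) $v_{i_1}\cdots v_{i_k}=u_{i_1}\cdots u_{i_k}$, i.e.\ a PCP solution, exactly as in the sub-case $x=\lambda$ where $w'=\theta(w)=c\,w_2$ gives $w_2\in L_U\cap L_V$. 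With this sub-case spelled out, your argument is complete and the reduction is correct.
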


\begin{proof}
Let $\Sigma$ be alphabet and $(U, V)$ be an instance of PCP problem where $U=(u_0, u_1, \ldots, u_{n-1})$, $V=(v_0, v_1, \ldots, v_{n-1})$ with each $u_i, v_i \in \Sigma^*$. 
Assume that the symbols $\#, \$, a_0, a_1, \cdots, a_{n-1}$ are distinct and not in $\Sigma$. Consider $\Sigma'=\Sigma \cup \{ \#, \$, a_0, a_1, \cdots, a_{n-1}\}$.
Let $\theta : \Sigma'^* \rightarrow \Sigma'^*$ be an antimorphic involution defined as 
\begin{enumerate}
    \item  $\theta(a) \in \Sigma$  for all $a \in \Sigma$
    \item $\theta(\$)=\#$
    \item $\theta(c)=c$ for all $c \in \{ a_0, a_1, \cdots, a_{n-1}\}$.
\end{enumerate}

We define a context-free grammar $G = (N, \Sigma', P, S)$ where $N = \{S, N_U, N_V\}$ is the set of non-terminals, $\Sigma'$ is the set of terminals, $S$ is the start symbol and $P$ is the set of derivation rules which contains the following rules:
\begin{itemize}
    \item $ S \rightarrow \# N_U \# ~ | ~ \$  N_V \$ $
    \item $N_U \rightarrow u_0 N_U a_0 ~ |~u_1 N_U a_1 ~ |~ \cdots~|~  u_{n-1} N_U a_{n-1} ~ | ~ \lambda$
    \item $N_V \rightarrow a_0 N_V \theta(v_0) ~|~ a_1 N_V \theta(v_1) ~|~ \cdots ~|~ a_{n-1} N_V \theta(v_{n-1}) ~|~ \lambda $
\end{itemize}

Now, $L(G)$ denotes the set of all words generated by the grammar $G$ and any string of $L(G)$ is in one of the following forms :
\begin{equation}
   \# u_{j_{1}} u_{j_{2}} \cdots u_{j_{l}}         a_{j_l} \cdots  a_{j_2} a_{j_1} \# 
   \tag{$A$}
\end{equation}
or
\begin{equation*}
    \$  a_{j_1}  a_{j_2} \cdots  a_{j_l} \theta(v_{j_l}) \cdots \theta(v_{j_2}) \theta(v_{j_1})   \$ 
    \tag{$B$}
\end{equation*}

Under our assumed conditions on $\#, \$, a_0, a_1, \cdots, a_{n-1}, \theta, u_0, u_1, \cdots u_{n-1}, v_0, v_1, \cdots, v_{n-1}$, we observe that 
the symbol $\#$ appears only at the left and right ends of words in the form $A$,
the symbol $\$$ does not appears in words of the form ${A}$, 
the symbol $\$$ appears only at the left and right ends of words in the form ${B}$, 
and the symbol $\#$ does not appears in words of the form ${B}$.
We now show that $L(G)$ is not $\theta$-conjugate-free if and only if the PCP instance $(U, V)$ has a solution.

First, we prove that if the PCP instance $(U, V)$ has a solution, then $L(G)$ is not $\theta$-conjugate-free. Assume that the PCP instance $(U, V)$ has a solution. Then there is a sequence of integers $ i_0, i_1, \ldots, i_k$ such that
$u_{i_0} u_{i_1} \cdots u_{i_k} = v_{i_0} v_{i_1} \cdots v_{i_k}$.
Now, $L(G)$ contains the following elements:
\begin{align*}
       \beta' =   \$  a_{i_0}  a_{i_1} \cdots  a_{i_k} \theta(v_{i_k}) \cdots \theta(v_{i_1}) \theta(v_{i_0}) \$  
   &=  \$  a_{i_0}  a_{i_1} \cdots  a_{i_k}  \theta( v_{i_0} v_{i_{1}} \cdots   v_{i_{k}}) \$ \\
   &= \$  a_{i_0}  a_{i_1} \cdots  a_{i_k} \theta( u_{i_0} u_{i_{1}} \cdots   u_{i_{k}})\$ 
   \end{align*}
  \text{ and }
   \begin{align*}
    \alpha' = \#  u_{i_0} u_{i_{1}} \cdots u_{i_{k}}  a_{i_k} \cdots a_{i_1} a_{i_0} \#. 
   \end{align*}
As $\theta(\$)=\#$ and $\theta(c)=c ~ \forall c \in \{ a_0, a_1, \cdots, a_{n-1}\}$, we have $\beta'=\theta(\alpha')$.
Since $\# \neq \$ $, $\alpha' \neq \beta'$. 
Thus, $L(G)$ is not $\theta$-conjugate-free.

Now, we prove the other part, i.e., if $L(G)$ is not $\theta$-conjugate-free, then the PCP instance $(U, V)$ has a solution. Assume that $L(G)$ is not $\theta$-conjugate-free, i.e., there exist $w_1, w_2$ in $L(G)$ such that $w_2 \in C_\theta(w_1)$ and $w _1 \neq w_2$. Then $w_1=xy$ and $w_2=\theta(y) x$  for some $x \in {\Sigma'}^*$ and $ y \in {\Sigma'}^+$.
Since elements of $L(G)$ are in the form $A$ or $B$, we have two cases:
\begin{itemize}
    \item \textbf{Case I :} Consider $w_1=xy$ is in the form $A$. 
    Then for some integer $m$, 
    $$w_1=xy= \# u_{i_{1}} u_{i_{2}} \cdots u_{i_{m}}   a_{i_m} \cdots  a_{i_2} a_{i_1} \# .$$ 
    As $\theta(\#)= \$ $ and $y \in {\Sigma'}^+$, $\theta(y)x$ starts with $\$ $. 
    This implies $\theta(y)x$ must be in the form $B$. 
    Then for some integer $r$,
    $$ w_2=\theta(y)x = \$  a_{j_1}  a_{j_2} \cdots  a_{j_r} \theta(v_{j_r}) \cdots \theta(v_{j_2}) \theta(v_{j_1})   \$. $$
    Since $ |w_1|_{\$} =0$, $\theta(\#)= \$ $ and $w_2 \in C_\theta(w_1)$,  we have $w_2=\theta(w_1)$. Then using the values of $w_2$ and $w_1$, we have 
    \begin{align*}
        & \$  a_{j_1}  a_{j_2} \cdots  a_{j_r} \theta(v_{j_r}) \cdots \theta(v_{j_2}) \theta(v_{j_1})   \$ = \theta(\# u_{i_{1}} u_{i_{2}} \cdots u_{i_{m}}   a_{i_m} \cdots  a_{i_2} a_{i_1} \# )\\
     \implies & \$  a_{j_1}  a_{j_2} \cdots  a_{j_r} \theta(v_{j_r}) \cdots \theta(v_{j_2}) \theta(v_{j_1})   \$ = \$  a_{i_1}  a_{i_2} \cdots  a_{i_m}  \theta(u_{i_m}) \cdots \theta(u_{i_2}) \theta(u_{i_1})   \$.
    \end{align*}
    Since $\{a_0, a_1, \cdots, a_{n-1}\} \cap \Sigma = \emptyset$, 
    each $u_i, v_i \in \Sigma^*$, 
    and $\theta(a) \in \Sigma ~\forall a \in \Sigma$,  we have,     $r=m$, ${i_1}={j_1}$, ${i_2}={j_2}$, \ldots, ${i_m}={j_m}$ and $v_{i_1} v_{i_2} \cdots v_{i_m} = u_{i_1} u_{i_2} \cdots u_{i_m}$. Hence, the PCP instance has a solution.
    
    \item \textbf{Case II :} Consider $w_1=xy$ is in the form $B$. 
    Then similar to Case I, we show that the PCP instance has a solution.  
\end{itemize}
Therefore, as PCP is undecidable \cite{post1946}, determining whether or not $L$ is $\theta$-conjugate-free is also undecidable when $L$ is context-free.
\end{proof}

\section{Conclusions}
This work explores various properties of the set of all $\theta$-conjugates of a word and language.
 First, we have discussed the structure of $\theta$-conjugates of a word and the solution of $C_\theta(u)=C_\theta(v)$. 
 Then we have investigated closure properties of certain families of languages under $\theta$-conjugate operation. 
 Next, we have studied the iterated $\theta$-conjugate of words and languages.
Finally, we have discussed the concept of $\theta$-conjugate-free language and some related decidability problems.  
For regular language $L$ with no $\theta$-palindrome, it is decidable to determine whether or not $L$ is $\theta$-conjugate-free. But for an arbitrary regular language $L$, the answer of the following question is still unknown.
\vspace{.2cm}\\
\textbf{Question 1:}  For a regular language $L$, is it decidable to determine whether or not $L$ is $\theta$-conjugate-free? 
\vspace{.2cm}\\

\acknowledgements
\label{sec:ack}
We sincerely thank the anonymous referee for his/her careful reading and valuable comments on the revised paper, which helped to improve the paper in detail.

\nocite{*}
\bibliographystyle{abbrvnat}
\bibliography{sample-dmtcs}
\label{sec:biblio}


\section{Appendix}

\textbf{A0.} For $u, v \in \Sigma^*$ with $u \neq v$, consider $C_\theta(u)=C_\theta(v)$ and $|C_\theta(u)|\geq 3$. Then
 $u=a^{n_1} u' a^{n_2}$ and $v=u^R$ where $a \in \Sigma$, $u' \in \Sigma^+$, $a \notin Pref(u')$, $a \notin Suf(u')$, $u'=u'^R$, $n_1, n_2 \geq 1$, $n_1\neq n_2$, and $\theta(c)=c$ for all $c \in \textit{Alph}(u)$.

\begin{proof}
    Let $u = u_1u_2\cdots u_n$ and $v = v_1v_2 \cdots v_n$ be words of length $n$ such that $u_i, v_i \in \Sigma$. 
    
    If $\theta(u_n)\neq u_1$, then in $C_\theta(u)$, $u$ is the only element having $u_1$ as a prefix and the rest of the elements have $\theta(u_n)$ as a prefix.
    If $\theta(v_n)= v_1$, then all elements of $C_\theta(v)$ have the same letter as a prefix, which is a contradiction to the fact that $C_\theta(u) = C_\theta(v)$. This implies, $\theta(v_n) \neq v_1$. Then, $v$ is the only element in $C_\theta(v)$ that starts with $v_1$. Also, $u$ is the only element in $C_\theta(u)$ that starts with $u_1$. As $|C_\theta(u)|\geq 3$ and $C_\theta(u) = C_\theta(v)$, we get, $u=v$, which is a contradiction.
    Thus, $\theta(u_n) = u_1$. Then all elements in $C_\theta(u)$ start with $u_1$. Since, $C_\theta(u) = C_\theta(v)$, we get, $u_1=v_1$ and $\theta(u_n)=u_1=\theta(v_n)$, i.e., $u_n=v_n$. Let 
    \begin{equation}\label{eqqe1}
        u = u_1^{n_1} u' u_n^{n_2} \text{ and } v = u_1^{m_1} v' u_n^{m_2}
    \end{equation}
    where $u', v' \in \Sigma^*$, $u_1 \notin Pref(u')$,  $u_1 \notin Pref(v')$,  $u_n \notin Suf(u')$,  $u_n \notin Suf(v')$, and $n_1, n_2, m_1, m_2 \geq 1$.\\
    Without loss of generality, let $n_1+n_2\geq m_1+m_2$. Since $ u = u_1^{n_1} u' u_n^{n_2}$, for an element in $C_\theta(u)$ to start with $u_1^{n_1+n_2}$, it must be of the form $\theta(y u_n^{n_2})u_1^{n_1}x$ where $xy=u'$.
    From Eq. (\ref{eqqe1}), it is clear that $u_n$ is not present as a suffix of $y$ and $u_1$ is not present as a prefix of $x$. This implies that $u_1^{n_1+n_2}u'$ is the only element in $C_\theta(u)$ having $u_1^{n_1+n_2}$ as a prefix. Similarly, $u_1^{m_1+m_2}v'$ is the only element in $C_\theta(v)$ having $u_1^{m_1+m_2}$ as a prefix. If $n_1+n_2>m_1+m_2$, then $u_1^{n_1+n_2}u' \notin C_\theta(v)$, which is a contradiction to the fact that
    $C_\theta(u) = C_\theta(v)$. This implies $n_1+n_2=m_1+m_2$. Also, we get,  $u_1^{n_1+n_2}u'= u_1^{m_1+m_2}v'$, i.e., $u'=v'$. Thus, we get,
    \begin{equation*}
        u=u_1^{n_1} u' u_n^{n_2} \text{ and }  v=u_1^{m_1} u' u_n^{m_2}
    \end{equation*}
    and $n_1+n_2=m_1+m_2$.  If $n_1=m_1$, then $n_2=m_2$. This implies $u$=$v$, which is a contradiction. 
    Thus, $n_1\neq m_1$. 
    Without loss of generality,  let $n_1>m_1$ and $m_2>n_2$. We have the following cases:
     \begin{itemize}
        \item If $u'=\lambda$, then $u=u_1^{n_1} u_n^{n_2}$ and $v=u_1^{m_1}  u_n^{m_2}$. If $u_1=u_n$, then $u=v$, a contradiction.  If $u_1 \neq u_n$, then as $n_1>m_1\geq 1$, we have, $n_1\geq 2$ and $u_1^{n_2} u_n u_1^{n_1-1} \in C_\theta(u)\setminus C_\theta(v)$, which is a contradiction.
     \item If $u'\in \Sigma^+$, then $u=u_1^{n_1} u' u_n^{n_2}$ and $v=u_1^{m_1} u' u_n^{m_2}$. We have the following cases:
      \begin{itemize}
            \item $n_1\neq m_2$: If $u_1 \neq u_n$, then as $n_1>m_1\geq 1$, we have, $n_1\geq 2$ and $u_1^{n_2} \theta(u') u_n u_1^{n_1-1} \in C_\theta(u)\setminus C_\theta(v)$, which is a contradiction. Otherwise, $u_1=u_n$. Then $u=u_1^{n_1} u' u_1^{n_2}$ and $v=u_1^{m_1} u' u_1^{m_2}$. If $n_1>m_2$, then $m_1>n_2$ and $u_1^{n_2} \theta(u') u_1^{n_1} \in C_\theta(u)\setminus C_\theta(v)$, which is a contradiction. If $n_1<m_2$, then $n_2>m_1$, $u=u_1^{n_1} u' u_1^{m_1+k}$ and  $v=u_1^{m_1} u' u_1^{n_1+k}$ for $k\geq 1$.
            We have the following cases:
            \begin{itemize}
                \item $\theta(u')\neq u' :$ Then, $u_1^{n_1+k} \theta(u') u_1^{m_1} \in C_\theta(v)\setminus C_\theta(u)$, which is a contradiction.
                \item $\theta(u')=u' :$ Then, as $n_1>m_1$, $v\in C_\theta(v)\setminus C_\theta(u) $, which is a contradiction.
            \end{itemize}    
        \item $n_1=m_2$:  Then, $n_2=m_1$.
        If $u_1\neq u_n$, then since  $n_1>n_2>0$, we get, $n_1 \geq 2$ which implies $u_1^{n_2} \theta(u') \theta(u_1) u_1^{n_1-1} \in C_\theta(u)\setminus C_\theta(v)$, which is a contradiction. Thus,  $u_1=u_n$ and we get, $u=u_1^{n_1} u' u_1^{n_2}$ and  $v=u_1^{n_2} u' u_1^{n_1}$.       If $\theta(u')\neq u'$, then $u_1^{n_2} \theta(u') u_1^{n_1} \in C_\theta(u)\setminus C_\theta(v)$, which is a contradiction.  Hence, $\theta(u') = u'$. We have the following:
        \begin{itemize}
            \item  If $\theta(a) = a$ for all $a \in \textit{Alph}(u)$, we have,  $\theta(u') = u'^R$. Thus, $u'=u'^R$. This implies as $u=u_1^{n_1} u' u_1^{n_2}$ and  $v=u_1^{n_2} u' u_1^{n_1}$, $u=v^R$.
            \item If there exists some $a\in \textit{Alph}(u)$ such that $\theta(a)\neq a$, then as $\theta(u_1)=u_1$,
    we have, $u_1\neq a$. Now, $\theta(u') = u'$ implies $u'=x b u''\theta(b)\theta(x)$ for $\theta(b)\neq b \in \Sigma$ and $x,u'' \in \Sigma^*$ such that $\theta(d)=d$ for all $d \in \textit{Alph}(x)$.
     Since $\theta(u_1)=u_1$,
    we have, $u_1\neq b$.
    Now,    $u=u_1^{n_1} x b u''\theta(b)\theta(x) u_1^{n_2}$
     and
    $v=u_1^{n_2} xbu''\theta(b)\theta(x) u_1^{n_1}$.
    Consider $w = u_1^{n_2} x b \theta(u'') u_1^{n_1} x b $. Then,  $w \in C_\theta(u)\setminus C_\theta(v)$, which is a contradiction.
        \end{itemize}
            \end{itemize}
    \end{itemize}
 Hence,  $u={u_1}^{n_1} u' {u_1}^{n_2}$ and $v=u^R$ where $u_1 \in \Sigma$, $u' \in \Sigma^+$, $u_1 \notin Pref(u')$, $u_1 \notin Suf(u')$, $u'=u'^R$, $n_1, n_2 \geq 1$, $n_1\neq n_2$, and $\theta(c)=c$ for all $c \in \textit{Alph}(u)$.
\end{proof}

\textbf{A1.} Prove that the language $L_1 = \{  d^n c^k a^n b^k ~|~ n, k \geq 1 \}$ is not context-free by pumping lemma.
\begin{proof}
    Suppose $L_1 = \{  d^n c^k a^n b^k ~|~ n, k \geq 1 \}$ is context-free. Then, by pumping lemma (Theorem \ref{pumth}), there exists a pumping length $p>0$ such that any string $s \in L_1$ with $|s|\geq p$ can be written as $s=uvwxy$, satisfying the following: $|vwx|\leq p$, $|vx|\geq 1$ and $u v^{i} w x^{i} y \in L_1$ for all $i \geq 0$. 
    
    Let $s= d^p c^p a^p b^p$. Clearly, $s \in L_1$ and $|s|> p$. Then pumping lemma guarantees that $s$ can be split into $uvwxy$ where $|vwx|\leq p$, $|vx|\geq 1$ and $u v^{i} w x^{i} y \in L_1$ for all $i \geq 0$.
    Since $|vwx|\leq p$, the factor $vwx$ must be entirely contained with in one of $d^p$, $c^p$, $a^p$, $b^p$, $d^p c^p$, $c^p a^p$ or $a^p b^p$.
    
    If $vwx$ lies entirely in $d^p$, then pumping $v$ and $x$ changes only the number of $d$'s in pumped strings. Then there exists some pumped strings that do not lie in $L_1$. So, $vwx$ does not lie entirely in $d^p$.  
    Similarly, we can show that $vwx$ does not lie entirely in any of $c^p$, $a^p$ and $b^p$. 
    If $vwx$ lies entirely in $d^p c^p$, then pumping $v$ and $x$ changes only the number of $d$'s and $c$'s in pumped strings.  Then there exists some pumped strings that do not lie in $L_1$. So, $vwx$ does not lie entirely in $d^p c^p$.
     Similarly, we can show that $vwx$ does not lie entirely in any of $c^p a^p$ and $a^p b^p$.
     Therefore, in all possible cases, pumping $v$ and $x$ results in some strings that do not lie in $L_1$, which contradicts the pumping lemma. Thus, our assumption that $L_1$ is context-free is false.
\end{proof}

\textbf{A2.} \textit{The proof of Theorem \ref{csl10089}:}

\begin{proof}
{Consider $L$ is a recursively enumerable language. Since the recursively enumerable languages are closed under reversal \cite{rozen1997}, $L^R$ is also recursively enumerable.
Let $\psi : \Sigma^* \rightarrow \Sigma^*$ be a morphic involution such that $\psi(a)=\theta(a)$ for all $a\in \Sigma$.
Consider $$L'=\bigcup_{uv \in L^R} \{ \psi(u)v^R\; \}.$$
Then, by Lemma \ref{mn1},
$$ L'=\bigcup_{uv \in L^R} \{\psi(u)v^R\; \}=  \bigcup_{v^Ru^R \in L} \{ \theta(u^R)v^R \;\} = C_\theta(L).$$\\
We prove that $L'$ is a recursively enumerable language. Since $L^R$ is recursively enumerable,  there exists an unrestricted grammar $G = ( V , \Sigma, P , S )$ such that  $L^R = L(G)$.
We now construct an unrestricted grammar $G_N$ such that $L(G_N)=L'$.
Let $V_1=\{S', T_1, T_2, T_3, T_4, T\}$ and $V_2 = \{ {X_a}^o, {X_a}^I, {X_a}^D, {X_a}^R, {X_a}^T,\\ {X_a}^F, {X_a}^N\;:\; a \in \Sigma  \}$.
Consider $G_N = ( V_N, \Sigma, P_N, S' )$ where $V_N = V \cup V_1 \cup V_2$, $V \cap V_1 = \emptyset$, $V \cap V_2 = \emptyset$, $V_1 \cap V_2 = \emptyset$, $S'$ is the start symbol and $P_N$ is the union of the following sets :
\begin{multicols}{2}
\begin{enumerate}
    \item $P_N(1) = \{S' \rightarrow T_1 S T_2 \;|\; T_1 S T \}$
    \item $P_N(2) = \{ h(\alpha) \rightarrow h(\beta) \;: \;  \alpha \rightarrow \beta \in P \}$
    \item $P_N(3) =\{ T_1 X_{a}^o \rightarrow X_a^N T_1\;:\; a \in \Sigma \}$
    \item $P_N(4) =\{ X_a^o T_2 \rightarrow T_3 X_a^F\;:\; a \in \Sigma \}$
    \item $P_N(5) =\{ X_a^o T_3 \rightarrow T_3 X_a^o\;:\;  a \in \Sigma \}$
    \item $P_N(6) =\{ X_a^o T \rightarrow T_4 X_a^o\;:\; a \in \Sigma \}$
    \item $P_N(7) =\{ X_a^N Z \rightarrow Z \theta(a)\;:  a \in \Sigma  \}$
    \item $P_N(8) =\{ X_a^N Z' \rightarrow Z' \theta(a)\;:\; a \in \Sigma  \}$
    \item $P_N(9) =\{ Z \rightarrow \lambda,  Z'\rightarrow \lambda, Z'' \rightarrow \lambda \}$
     \item $P_N(10) =\{ Z'' X_a^o  \rightarrow X_a^D Z\;:\; a \in \Sigma    \}$
    \item $P_N(11) =\{ X_a^D Z \rightarrow Za\;:\; a \in \Sigma    \}$
    \item $P_N(12) =\{ Z X_a^o \rightarrow Z X_a^I \;:\; a \in \Sigma \}$
    \item $P_N(13) =\{ Z X_{a_1}^I X_{a_2}^o \rightarrow X_{a_1}^I Z X_{a_2}^R\;: a_1, a_2 \in \Sigma\}$
    \item $P_N(14) =\{ Z X_{a_1}^R X_{a_2}^o \rightarrow X_{a_1}^R Z X_{a_2}^R\;:  a_1, a_2 \in \Sigma\}$
    \item $P_N(15) = \{ Z X_{a_1}^R X_{a_2}^F \rightarrow X_{a_2}^T Z X_{a_1}^F\;:  a_1, a_2 \in \Sigma\}$
    \item $P_N(16) =\{  X_{a_1}^R X_{a_2}^T Z \rightarrow X_{a_2}^T Z X_{a_1}^o\;: a_1, a_2 \in \Sigma\}$
    \item $P_N(17) =\{  X_{a_1}^I X_{a_2}^T Z \rightarrow X_{a_2}^D Z X_{a_1}^I\;:a_1, a_2 \in \Sigma\}$ 
    \item $P_N(18) = \{ Z X_{a_1}^I X_{a_2}^F \rightarrow X_{a_2}^D  X_{a_1}^D Z\;: a_1, a_2 \in \Sigma\}$
\end{enumerate}
\end{multicols}
where $Z=T_1T_3, \;Z'=T_1T_2,\; Z''=T_1T_4 $ and $h : ( V \cup \Sigma )^* \rightarrow V_N$  is the morphism defined by $h(\lambda)=\lambda$, 
$h(B) = B$ for all $B \in V$  and $h ( a ) = X_a^o$ for all $a \in \Sigma$.
We prove the following two claims:\\

\textbf{Claim 1 :} $ L' \subseteq L(G_N)$ : Consider $x \in L'$. Then, $x=  \psi(y_1 y_2 \cdots y_i) (y_n y_{n-1}\cdots y_{i+1} )$ where $ y_1 y_2 \cdots y_n \in L^R$ and $y_i \in \Sigma$. This implies as $\psi(a)=\theta(a)$ for all $a\in \Sigma$, $x=  \theta(y_1) \theta(y_2) \cdots \theta(y_i) (y_n y_{n-1}\\  \cdots y_{i+1}) $. Now, we show that $x \in L(G_N)$. Since $ y_1 y_2 \cdots y_n =y \in L^R$, $S$ derives $y$. Then, by $P_N(1)$, $P_N(2)$
and definition of $h$,
$S'$ derives $T_1 X_{y_1}^o X_{y_2}^o \cdots X_{y_n}^o T_2$ or $T_1 X_{y_1}^o X_{y_2}^o \cdots X_{y_n}^o T$. \\
(we use boxes to denote the variables used in each derivation step):\\
Now, 
\begin{align*}
    S'
    \rightarrow^* T_1 X_{y_1}^o X_{y_2}^o \cdots \mybox{$X_{y_n}^o T$}
    &\rightarrow \mybox{$T_1 X_{y_1}^o$} X_{y_2}^o \cdots T_4 X_{y_n}^o & (\text{ by $P_N(6)$ })\\
    &\rightarrow^* {X_{y_1}^N} {X_{y_2}^N} \cdots {X_{y_{n-1}}^N} T_1T_4 X_{y_n}^o & (\text{ by $P_N(3)$ })\\
    &={X_{y_1}^N} {X_{y_2}^N} \cdots {X_{y_{n-1}}^N} \mybox{$Z'' X_{y_n}^o$}\\
    &\rightarrow {X_{y_1}^N} {X_{y_2}^N} \cdots {X_{y_{n-1}}^N} \mybox{$X_{y_n}^D Z$} & (\text{ by $P_N(10)$ })\\
    &\rightarrow^*  \theta({y_1}) \theta({y_2}) \cdots \theta({y_{n-1}})  y_{n} & (\text{ by $P_N(11), P_N(7)$ and $P_N(9)$ })
\end{align*}
So, for $i=n-1$, $x \in L(G_N)$. Now,
\begin{align*}
    S' \rightarrow^* \mybox{$T_1 {X_{y_1}^o}$} {X_{y_2}^o} \cdots {X_{y_n}^o} T_2
    &\rightarrow^* {X_{y_1}^N} \cdots {X_{y_n}^N} T_1 T_2 & (\text{ by $P_N(3)$ })\\
    &= {X_{y_1}^N} \cdots \mybox{${X_{y_n}^N} Z'$}\\
    &\rightarrow^* \theta({y_1}) \cdots \theta({y_n}) & (\text{ by $P_N(8)$ and $P_N(9)$ })
\end{align*}
So, for $i=n$, $x \in L(G_N)$.
Now,
\begin{align*}
    S' &\rightarrow^* \mybox{$T_1 {X_{y_1}^o}$} {X_{y_2}^o} \cdots {X_{y_n}^o} T_2\\
    &\rightarrow  {X_{y_1}^N} \mybox{$T_1 {X_{y_2}^o}$} \cdots {X_{y_n}^o} T_2 \hspace{4.95cm} (\text{by $P_N(3)$}) \\
    &\rightarrow^* {X_{y_1}^N} \cdots {X_{y_i}^N} T_1  {X_{y_{i+1}}^o} \cdots \mybox{${X_{y_n}^o} T_2$} \hspace{3.39cm} (\text{by $P_N(3)$}) \\
    &\rightarrow {X_{y_1}^N} \cdots {X_{y_i}^N} T_1  {X_{y_{i+1}}^o} \cdots \mybox{${X_{y_{n-1}}^o} T_3$} {X_{y_n}^F} \hspace{2.65cm} (\text{by $P_N(4)$})\\
    &\rightarrow^* {X_{y_1}^N} \cdots {X_{y_i}^N} T_1 T_3  {X_{y_{i+1}}^o} \cdots {X_{y_n}^F} \hspace{3.7cm} (\text{by $P_N(5)$}) \\
    &= {X_{y_1}^N} \cdots {X_{y_i}^N} \mybox{$Z  {X_{y_{i+1}}^o}$} {X_{y_{i+2}}^o}\cdots {X_{y_n}^F}  \\
    &\rightarrow {X_{y_1}^N} \cdots {X_{y_i}^N} \mybox{$Z  {X_{y_{i+1}}^I} {X_{y_{i+2}}^o}$}\cdots {X_{y_n}^F}  \hspace{3.18cm} (\text{by $P_N(12)$})\\
    &\rightarrow {X_{y_1}^N} \cdots {X_{y_i}^N}   {X_{y_{i+1}}^I} \mybox{$Z {X_{y_{i+2}}^R} {X_{y_{i+3}}^o}$} \cdots {X_{y_n}^F}  \hspace{2.37cm} (\text{by $P_N(13)$})\\
    &\rightarrow {X_{y_1}^N} \cdots {X_{y_i}^N}   {X_{y_{i+1}}^I}  {X_{y_{i+2}}^R} \mybox{$Z {X_{y_{i+3}}^R} {X_{y_{i+4}}^o}$}\cdots {X_{y_n}^F}  \hspace{1.48cm} (\text{by $P_N(14)$})\\
    &\rightarrow^* {X_{y_1}^N} \cdots {X_{y_i}^N}   {X_{y_{i+1}}^I}  {X_{y_{i+2}}^R}  {X_{y_{i+3}}^R} {X_{y_{i+4}}^R} \cdots \mybox{$Z {X_{y_{n-1}}^R} {X_{y_n}^F}$} \hspace{.42cm} (\text{by $P_N(14)$})\\
    &\rightarrow {X_{y_1}^N} \cdots {X_{y_i}^N}   {X_{y_{i+1}}^I}  {X_{y_{i+2}}^R}   \cdots \mybox{${X_{y_{n-2}}^R} {X_{y_{n}}^T} Z$} {X_{y_{n-1}}^F} \hspace{1.4cm} (\text{by $P_N(15)$})\\
    &\rightarrow {X_{y_1}^N} \cdots {X_{y_i}^N}   {X_{y_{i+1}}^I}  {X_{y_{i+2}}^R}   \cdots \mybox{${X_{y_{n-3}}^R}{X_{y_{n}}^T} Z$} {X_{y_{n-2}}^o}  {X_{y_{n-1}}^F} \hspace{.5cm} (\text{by $P_N(16)$})\\
    &\rightarrow^* {X_{y_1}^N} \cdots {X_{y_i}^N}   \mybox{${X_{y_{i+1}}^I}  {X_{y_{n}}^T} Z$}  {X_{y_{i+2}}^o} \cdots  {X_{y_{n-2}}^o}  {X_{y_{n-1}}^F} \hspace{1.3cm} (\text{by $P_N(16)$})\\
    &\rightarrow {X_{y_1}^N} \cdots {X_{y_i}^N}   {X_{y_{n}}^D} \mybox{$Z {X_{y_{i+1}}^I}   {X_{y_{i+2}}^o}$}   \cdots  {X_{y_{n-2}}^o}  {X_{y_{n-1}}^F} \hspace{1.36cm} (\text{by $P_N(17)$})\\
    \vdots\\
    &\rightarrow^* {X_{y_1}^N} \cdots {X_{y_i}^N}   {X_{y_{n}}^D}  {X_{y_{n-1}}^D}  \cdots  \mybox{$Z {X_{y_{i+1}}^I}  {X_{y_{i+2}}^F}$} \hspace{.9cm} (\text{by prev. derivation})\\
    &\rightarrow {X_{y_1}^N} \cdots {X_{y_i}^N}   {X_{y_{n}}^D}  {X_{y_{n-1}}^D}   \cdots  {X_{y_{i+2}}^D}  \mybox{${X_{y_{i+1}}^D} Z$} \hspace{2.4cm} (\text{by $P_N(18)$})\\
    &\rightarrow {X_{y_1}^N} \cdots {X_{y_i}^N}   {X_{y_{n}}^D}  {X_{y_{n-1}}^D}   \cdots  \mybox{${X_{y_{i+2}}^D}  Z$} y_{i+1}  \hspace{2.54cm} (\text{by $P_N(11)$})\\
    &\rightarrow^* {X_{y_1}^N} \cdots \mybox{${X_{y_i}^N} Z$}  y_{n}  y_{n-1} \cdots  y_{i+2} y_{i+1}  \hspace{3.1cm} (\text{by $P_N(11)$})\\
    &\rightarrow^* \mybox{$Z$} \theta({y_1}) \cdots \theta({y_i})  y_{n}  y_{n-1}  \cdots  y_{i+2} y_{i+1}  \hspace{3cm} (\text{by $P_N(7)$})\\
     &\rightarrow \theta({y_1}) \cdots \theta({y_i})  y_{n}  y_{n-1}  \cdots  y_{i+2} y_{i+1}  \hspace{3.7cm} (\text{by $P_N(9)$})
\end{align*}
Thus, for $i<n-1$, $x \in L(G_N)$. Therefore, $L' \subseteq L(G_N)$.\\
\textbf{Claim 2 : } $L(G_N) \subseteq L' :$ Consider $ y' \in L(G_N)$. We show that $y'$ is in form $ \theta(u^R) v^R$ where $uv \in L^R$.
 Now, $S'$ derives $y'$. Then, by $P_N(1)$ and $P_N(2)$, for some $\alpha \in L^R$, either $S' \rightarrow^* T_1 h(\alpha) T_2 \rightarrow^* y'$ or $S' \rightarrow^* T_1 h(\alpha) T \rightarrow^* y'$.
Consider $\alpha = a_1 a_2 \cdots a_n$ where each $a_j \in \Sigma$.
Then, by definition of $h$, $S'$ derives
\begin{equation}\label{po190}
    T_1 {X_{a_1}^o} {X_{a_2}^o} \cdots {X_{a_n}^o} T_2 \text{ or } T_1 {X_{a_1}^o} {X_{a_2}^o} \cdots {X_{a_n}^o} T.
\end{equation}
\begin{itemize}
\item $S' \rightarrow^* T_1 {X_{a_1}^o} {X_{a_2}^o} \cdots {X_{a_n}^o} T$: Then, it  generates exactly the following word:
\begin{align*}
    S'&\rightarrow^* T_1 {X_{a_1}^o} {X_{a_2}^o} \cdots {X_{a_n}^o} T\\
    &\rightarrow^* {X_{a_1}^N} {X_{a_2}^N} \cdots {X_{a_{n-1}}^N} T_1 T_4 X_{a_n}^o & (\text{by $P_N(3) \& P_N(6)$})\\
    &= {X_{a_1}^N} {X_{a_2}^N} \cdots {X_{a_{n-1}}^N} \mybox{$Z'' X_{a_n}^o$} \\
    &\rightarrow  {X_{a_1}^N} {X_{a_2}^N} \cdots {X_{a_{n-1}}^N}  X_{a_n}^D Z & (\text{by $P_N(10)$})\\
    &\rightarrow^* Z \theta({{a_1}}) \theta({a_2}) \cdots \theta({a_{n-1}}){a_n} & (\text{by $P_N(11) \& P_N(7)$})\\
    &\rightarrow   \theta({{a_1}}) \theta({a_2}) \cdots \theta({a_{n-1}}){a_n} & (\text{by $P_N(9)$})
\end{align*}
Then, $y'= \theta(a_{n-1}\cdots a_2 a_1) a_n \in L'$ where $a_1a_2 \cdots a_n \in L^R$.
\item $S'\rightarrow^* T_1 {X_{a_1}^o} {X_{a_2}^o} \cdots {X_{a_n}^o} T_2 $:  Then, by applying  $P_N(3)$, $P_N(4)$, $P_N(5)$, we have
\begin{align*}
    S' \rightarrow^* T_1 {X_{a_1}^o} {X_{a_2}^o} \cdots {X_{a_n}^o} T_2
     &\rightarrow^*   {X_{a_1}^N} {X_{a_2}^N} \cdots {X_{a_i}^N} T_1 T_3 {X_{a_{i+1}}^o}    \cdots {X_{a_{n-1}}^o} {X_{a_n}^F}\\
     &=  {X_{a_1}^N} {X_{a_2}^N} \cdots {X_{a_i}^N} Z {X_{a_{i+1}}^o}    \cdots {X_{a_{n-1}}^o} {X_{a_n}^F} 
\end{align*}
Now, we have the following cases:
\begin{itemize}
    \item $i=n $: Then, 
            \begin{align*} 
            S'\rightarrow^* {X_{a_1}^N} {X_{a_2}^N} \cdots {X_{a_n}^N} T_1T_2
            &= {X_{a_1}^N} {X_{a_2}^N} \cdots {X_{a_n}^N} Z'\\
            &\rightarrow^* Z' \theta({a_1}) \theta({a_2}) \cdots \theta({a_n}) & (\text{by $P_N(8)$})\\
            & \rightarrow \theta({a_1}) \theta({a_2}) \cdots \theta({a_n}) & (\text{by $P_N(9)$})
            \end{align*}
            Then, $y'= \theta( a_n \cdots a_2 a_1) \in L'$ where $a_1a_2 \cdots a_n \in L^R$.
    \item $i=n-1$: Then, as $Z=T_1 T_3$,
    $$S' \rightarrow^* {X_{a_1}^N} {X_{a_2}^N} \cdots {X_{a_{n-1}}^N} Z {X_{a_n}^F}$$
    which can not derive a sequence of terminals.
    \item $i < n-1$: Then, $$S' \rightarrow^* {X_{a_1}^N} {X_{a_2}^N} \cdots {X_{a_i}^N} Z {X_{a_{i+1}}^o}    \cdots {X_{a_{n-1}}^o} {X_{a_n}^F}.$$
    Now, we can apply one of $P_N(7)$, $P_N(9)$ and $P_N(12)$. If we apply $P_N(7)$ or $P_N(9)$, then we can not get a string of terminals. Thus, by applying $P_N(12)$, we have
    \begin{align*}
        S' 
        &\rightarrow^* {X_{a_1}^N} {X_{a_2}^N} \cdots {X_{a_i}^N} Z {X_{a_{i+1}}^I} {X_{a_{i+2}}^o}   \cdots {X_{a_{n-1}}^o} {X_{a_n}^F}
    \end{align*}
    Again, if we apply $P_N(7)$ or $P_N(9)$, then we can not get a string of terminals. So, we have to apply $P_N(13)$. Let $|{X_{a_{i+1}}^I} {X_{a_{i+2}}^o}   \cdots {X_{a_{n-1}}^o} {X_{a_n}^F}|=m\geq 2$.
    We now use induction on $m$ to prove that $$Z{X_{a_{i+1}}^I} {X_{a_{i+2}}^o}   \cdots {X_{a_{n-1}}^o} {X_{a_n}^F} \rightarrow^* Z a_n a_{n-1} \cdots a_{i+1}.$$ 
    For $m=2$, $$ Z {X_{a_{n-1}}^I} {X_{a_n}^F} \rightarrow X_{a_n}^D {X_{a_{n-1}}^D} Z
    \rightarrow^* Z a_n a_{n-1}.$$
    For $m=3$, 
    \begin{align*}
        \mybox{$Z {X_{a_{n-2}}^I} {X_{a_{n-1}}^o}$} {X_{a_n}^F}
        &\rightarrow {X_{a_{n-2}}^I} \mybox{$Z {X_{a_{n-1}}^R} {X_{a_n}^F}$} & (\text{by $P_N(13)$}) \\
        &\rightarrow \mybox{${X_{a_{n-2}}^I}  {X_{a_{n}}^T} Z$} {X_{a_{n-1}}^F}   & (\text{by $P_N(15)$})\\
        &\rightarrow {X_{a_{n}}^D} \mybox{$Z {X_{a_{n-2}}^I}  {X_{a_{n-1}}^F}$}  & (\text{by $P_N(17)$})\\
    &\rightarrow {X_{a_{n}}^D}  {X_{a_{n-1}}^D}  {X_{a_{n-2}}^D} Z  & (\text{by $P_N(18)$})\\
    &\rightarrow^* Z a_n a_{n-1} a_{n-2}  & (\text{by $P_N(11)$})
    \end{align*}
    Let the statement be true for strings with lengths less than $m$. Now,
\begin{align*}
    &\mybox{$Z{X_{a_{i+1}}^I} {X_{a_{i+2}}^o}$}   \cdots {X_{a_{n-1}}^o} {X_{a_n}^F}\\
    &\rightarrow {X_{a_{i+1}}^I} \mybox{$Z {X_{a_{i+2}}^R} {X_{a_{i+3}}^o}$} \cdots {X_{a_{n-1}}^o} {X_{a_n}^F}  & (\text{by $P_N(13)$})\\
    &\rightarrow {X_{a_{i+1}}^I}  {X_{a_{i+2}}^R} Z {X_{a_{i+3}}^R} \cdots {X_{a_{n-1}}^o} {X_{a_n}^F}  & (\text{by $P_N(14)$})\\
    &\rightarrow^* {X_{a_{i+1}}^I}  {X_{a_{i+2}}^R} {X_{a_{i+3}}^R} \cdots \mybox{$Z {X_{a_{n-1}}^R} {X_{a_n}^F}$}\\
    &\rightarrow {X_{a_{i+1}}^I}  {X_{a_{i+2}}^R} {X_{a_{i+3}}^R} \cdots \mybox{${X_{a_{n-2}}^R} {X_{a_n}^T} Z$} {X_{a_{n-1}}^F}  & (\text{by $P_N(15)$}) \\
    &\rightarrow {X_{a_{i+1}}^I}  {X_{a_{i+2}}^R} {X_{a_{i+3}}^R} \cdots {X_{a_{n-3}}^R} {X_{a_n}^T} Z {X_{a_{n-2}}^o} {X_{a_{n-1}}^F}  & (\text{by $P_N(16)$})
\end{align*}
Here, we can apply one of $P_N(9)$, $P_N(12)$ and $P_N(16)$. If we apply $P_N(9)$ or $P_N(12)$, then $S'$ does not derive a string of terminals. Thus, we apply $P_N(16)$. Then,
\begin{align*}
    Z{X_{a_{i+1}}^I} {X_{a_{i+2}}^o}   \cdots {X_{a_{n-1}}^o} {X_{a_n}^F}
    &\rightarrow^* {X_{a_{i+1}}^I}  {X_{a_{i+2}}^R} {X_{a_{i+3}}^R} \cdots {X_{a_{n-4}}^R} {X_{a_n}^T} Z {X_{a_{n-3}}^o} {X_{a_{n-2}}^o} {X_{a_{n-1}}^F}
\end{align*}    
Now, similar to the above, we can not apply $P_N(9)$ or $P_N(12)$ here. So, we apply  $P_N(16)$ and continue the process. Then, we have    
    \begin{align*}
    Z{X_{a_{i+1}}^I} {X_{a_{i+2}}^o}   \cdots {X_{a_{n-1}}^o} {X_{a_n}^F}
    &\rightarrow^* \mybox{${X_{a_{i+1}}^I} {X_{a_{n}}^T} Z$} {X_{a_{i+2}}^o}    \cdots  {X_{a_{n-2}}^o}  {X_{a_{n-1}}^F}\\
    &\rightarrow  {X_{a_{n}}^D} Z {X_{a_{i+1}}^I} {X_{a_{i+2}}^o}    \cdots  {X_{a_{n-2}}^o}  {X_{a_{n-1}}^F} & (\text{by $P_N(17)$})\\
    &\rightarrow^* {X_{a_{n}}^D} Z a_{n-1} \cdots a_{i+1} & (\text{by induction})\\
    &\rightarrow Z a_n a_{n-1} \cdots a_{i+1} & (\text{by $P_N(11)$})
\end{align*} 
Therefore,  \begin{align*}
        S' &\rightarrow^* {X_{a_1}^N} {X_{a_2}^N} \cdots {X_{a_i}^N} Z {X_{a_{i+1}}^I} {X_{a_{i+2}}^o}   \cdots {X_{a_{n-1}}^o} {X_{a_n}^F}\\
        &\rightarrow^* {X_{a_1}^N} {X_{a_2}^N} \cdots {X_{a_i}^N} Z a_n a_{n-1} \cdots a_{i+1}\\
        &\rightarrow^*  \theta(a_1) \theta(a_2) \cdots \theta(a_i) a_n a_{n-1} \dots a_{i+1}  & (\text{by $P_N(7)$ and $P_N(9)$}).
    \end{align*}  
Then, $y'= \theta(a_i a_{i-1} \cdots a_2 a_1) a_n a_{n-1} \dots a_{i+1} \in L' $ where $a_1a_2\cdots a_n \in L^R$.    
\end{itemize}
\end{itemize}
Therefore, $L(G_N) \subseteq  L'$.}
\end{proof}

\textbf{A3. }
A language $L$ is conjugate-free if for all $w\in L$, $C(w) \cap L = \{w\}$.
It is undecidable to determine whether or not a given context-free
language $L$ is conjugate-free.

\begin{proof}
Let $\Sigma$ be alphabet and $(U, V)$ be an instance of PCP problem where $U=(u_0, u_1, \ldots, u_{n-1})$, $V=(v_0, v_1, \ldots, v_{n-1})$ with each $u_i, v_i \in \Sigma^*$. 
Assume that the symbols $X, Y, a_0, a_1, \cdots, a_{n-1}$ are distinct and not in $\Sigma$. Consider $\Sigma'=\Sigma \cup \{ X, Y, a_0, a_1, \cdots, a_{n-1}\}$.
We define a context-free grammar $G = (N, \Sigma', P, S)$ where $N = \{S, A, B\}$ is the set of non-terminals, $\Sigma'$ is the set of terminals, $S$ is the start symbol, and $P$ is the set of derivation rules which contains the following rules:
\begin{itemize}
    \item $ S \rightarrow XA ~ | ~ YB  $
    \item $A \rightarrow u_0 A a_0 ~ |~u_1 A a_1 ~ |~ \cdots~|~  u_{n-1} A a_{n-1} ~ | ~ Y$
    \item $B \rightarrow a_0 B v_0 ~|~ a_1 B v_1 ~|~ \cdots ~|~ a_{n-1} B v_{n-1} ~|~ X $
\end{itemize}

Now, $L(G)$ denotes the set of all words generated by the grammar $G$ and any string of $L(G)$ is in one of the following forms :
\begin{equation*}
   X u_{j_{1}} u_{j_{2}} \cdots u_{j_{l}} Y   a_{j_l} \cdots  a_{j_2} a_{j_1} 
   \tag{$i$}
\end{equation*}
or
\begin{equation*}
    Y  a_{j_1}  a_{j_2} \cdots  a_{j_l} X v_{j_l} \cdots v_{j_2} v_{j_1}.
     \tag{$ii$}
\end{equation*}

We now show that $L(G)$ is not conjugate-free if and only if the PCP instance $(U, V)$ has a solution.
First, we prove that if the PCP instance $(U, V)$ has a solution, then $L(G)$ is not conjugate-free.
Assume that the PCP instance $(U, V)$ has a solution. 
Then, there is a sequence of integers $ i_0, i_1, \ldots, i_k$ such that
$u_{i_0} u_{i_1} \cdots u_{i_k} = v_{i_0} v_{i_1} \cdots v_{i_k}$.
Now, $L(G)$ contains the following elements:
\begin{align*}
       &\beta' =   Y a_{i_k}  a_{i_{k-1}} \cdots  a_{i_0} X v_{i_0} \cdots v_{i_{k-1}} v_{i_k}  
   = Y  a_{i_k}  a_{i_{k-1}} \cdots  a_{i_0} X u_{i_0} u_{i_{1}} \cdots   u_{i_{k}} ~\text{ and } \\
   &\alpha' = X  u_{i_0} u_{i_{1}} \cdots u_{i_{k}} Y  a_{i_k} \cdots a_{i_1} a_{i_0}. 
   \end{align*}
Clearly, $\beta' \in C(\alpha')$. Since $X \neq Y$,  $\alpha' \neq \beta'$.
Thus,  $L(G)$ is not conjugate-free.

Now, we prove the other part, i.e., if $L(G)$ is not conjugate-free, then the PCP instance $(U, V)$ has a solution. 
Assume that $L(G)$ is not conjugate-free, i.e., there exist $w_1, w_2$ in $L(G)$ such that $w_2 \in C(w_1)$ and $w _1 \neq w_2$. 
Then, $w_1=xy, w_2=y x$  for some $ x, y \in {\Sigma'}^+$.
Since elements of $L(G)$ are in the form $(i)$ or $(ii)$, we have two cases:
\begin{itemize}
    \item \textbf{Case I :} Consider $w_1=xy$ is in the form $(i)$. Then, for some integer $m$, 
    $$w_1=xy= X u_{i_{1}} u_{i_{2}} \cdots u_{i_{m}} Y   a_{i_m} \cdots  a_{i_2} a_{i_1}.$$ 
    
 Now, $\{X, Y, a_0, a_1, \cdots, a_{n-1}\} \cap \Sigma = \emptyset$, each $u_i \in \Sigma^*$, and $x, y \in {\Sigma'}^+$. Then as $w_2 \in C(w_1)$, $w_2$ must be in the form $(ii)$.
   Then, for some integer $r$,
    $$ w_2 = yx = Y  a_{j_r}  a_{j_{r-1}} \cdots  a_{j_1} X v_{j_1} \cdots v_{j_{r-1}} v_{j_r}. $$
   Since $ |w_1|_{Y} =1$, $Y \in Pref(w_2)$ and $w_2 \in C(w_1)$, we have $x=X u_{i_{1}} u_{i_{2}} \cdots u_{i_{m}}$ and $y = Y   a_{i_m} \cdots  a_{i_2} a_{i_1}$.
    Then, $ w_2 = yx = Y  a_{j_r}  a_{j_{r-1}} \cdots  a_{j_1} X v_{j_1} \cdots v_{j_{r-1}} v_{j_r}$ implies
    \begin{equation}\label{erq12}
        Y   a_{i_m} \cdots  a_{i_2} a_{i_1}  X u_{i_{1}} u_{i_{2}} \cdots u_{i_{m}} = Y  a_{j_r}  a_{j_{r-1}} \cdots  a_{j_1} X v_{j_1} \cdots v_{j_{r-1}} v_{j_r}
    \end{equation}
      Using our assumed conditions, we have the following from Equation (\ref{erq12}):   $r=m$, ${i_1}={j_1}$, ${i_2}={j_2}$, \ldots, ${i_m}={j_m}$ and 
        $u_{i_1} u_{i_2} \cdots u_{i_m} = v_{i_1} v_{i_2} \cdots v_{i_m} $. Hence, the PCP instance has a solution.
    \item \textbf{Case II :} Consider $w_1=xy$ is in the form $(ii)$. 
    Then similar to Case I, we show that the PCP instance has a solution.  
\end{itemize}
Therefore, as PCP is undecidable \cite{post1946}, determining whether or not $L$ is conjugate-free is also undecidable when $L$ is context-free.

\end{proof}

\end{document}